\renewcommand\footnotetextcopyrightpermission[1]{} % removes footnote with conference information in first column
	\newcommand{\del}[1]{\textcolor{red}{\sout{#1}}} % please delete
	\newcommand{\del}[1]{} % please delete
\newcommand{\id}[1]{$-$Id: scgPaper.tex 32478 2010-04-29 09:11:32Z oscar $-$}
\newcommand{\nbc}[3]{
 {\colorbox{#3}{\bfseries\sffamily\scriptsize\textcolor{white}{#1}}}
 {\textcolor{#3}{\sf\small$\blacktriangleright$\textit{#2}$\blacktriangleleft$}}}
\newcommand{\nbc}[3]{}
 \renewcommand{\del}[1]{} % please delete
 \definecolor{darkyellow}{RGB}{255, 222, 9}
\definecolor{ibcolor}{rgb}{0.4,0.6,0.2}
\definecolor{ascolor}{rgb}{0,0.5,0.9}
\definecolor{bocolor}{rgb}{0.6,0.9,0.2}
\definecolor{jrcolor}{rgb}{0.5,0,0.5}
\definecolor{nrcolor}{rgb}{0.4,0.1,0.3}
\definecolor{hkcolor}{rgb}{1.0,0.5,0.3}
\definecolor{tdcolor}{rgb}{1.0,0,0}
\newcommand{\bheading}[1]{{\vspace{4pt}\noindent{\textbf{#1}}}}
\newcommand{\iheading}[1]{{\vspace{2pt}\noindent{\textit{#1}}}}
\newcolumntype{?}{!{\vrule width 1pt}}
\newcounter{note}[section]
\newcommand{\secref}[1]{\mbox{Sec.~\ref{#1}}\xspace}
\newcommand{\figref}[1]{\mbox{Fig.~\ref{#1}}}
\newcommand{\ignore}[1]{}
\newcommand{\ie}{\textit{i.e.}\xspace}
\newcommand{\eg}{\textit{e.g.}\xspace}
\newcommand{\etal}{\textit{et al.}\xspace}
\newcommand{\sysname}{\textsc{Ladon}\xspace}
\newcounter{packednmbr}
\newenvironment{packeditemize}{
\begin{list}{$\bullet$}{
\setlength{\labelwidth}{0pt}
\setlength{\itemsep}{2pt}
\setlength{\leftmargin}{\labelwidth}
\addtolength{\leftmargin}{\labelsep}
\setlength{\parindent}{0pt}
\setlength{\listparindent}{\parindent}
\setlength{\parsep}{1pt}
\setlength{\topsep}{1pt}}}{\end{list}}
\newtheorem{theorem}{Theorem}
\newtheorem{lemma}{Lemma}
  \providecommand\BibTeX{{%
    \normalfont B\kern-0.5em{\scshape i\kern-0.25em b}\kern-0.8em\TeX}}}
\begin{document}

\title{\sysname: High-Performance Multi-BFT Consensus via Dynamic Global Ordering (Extended Version)}
%\title{\sysname: High-Performance Multi-BFT Consensus via Dynamic Global Ordering (Extended Version)}
\thanks{This manuscript is an extended version of the paper published at EuroSys’25, DOI: 10.1145/3689031.3696102}

%%
%% The "author" command and its associated commands are used to define
%% the authors and their affiliations.
%% Of note is the shared affiliation of the first two authors, and the
%% "authornote" and "authornotemark" commands
%% used to denote shared contribution to the research.
\author{Hanzheng Lyu}
\authornote{Both authors contributed equally to this research. This work was done in part while Hanzheng Lyu was a visiting student at SUSTech.}
\email{hzlyu@student.ubc.ca}
\affiliation{%
  \institution{University of British Columbia (Okanagan campus)}
%  \city{Kelowna}
%  \state{BC}
%  \country{Canada}
}
\author{Shaokang Xie}
\authornotemark[1]
\email{12011206@mail.sustech.edu.cn}
\affiliation{%
  \institution{Southern University of Science and Technology}
  % \city{Shenzhen}
  % \state{Guangdong}
  % \country{China}
}

\author{Jianyu Niu}
%\authornotemark[2]
\authornote{Jianyu Niu and Yinqian Zhang are affiliated with
Research Institute of Trustworthy Autonomous Systems and Department of Computer Science and Engineering of SUSTech. Corresponding author: Jianyu Niu.}
\email{niujy@sustech.edu.cn}
\affiliation{%
  \institution{Southern University of Science and Technology}
  % \city{Shenzhen}
  % \state{Guangdong}
  % \country{China}
}

\author{Chen Feng}
\email{chen.feng@ubc.ca}
\affiliation{%
  \institution{University of British Columbia (Okanagan campus)}
  % \city{Kelowna}
  % \state{BC}
  % \country{Canada}
}

\author{Yinqian Zhang}
\authornotemark[2]
\email{yinqianz@acm.org}
\affiliation{%
  \institution{Southern University of Science and Technology}
  % \city{Shenzhen}
  % \state{Guangdong}
  % \country{China}
}

\author{Ivan Beschastnikh}
\email{bestchai@cs.ubc.ca}
\affiliation{%
  \institution{University of British Columbia (Vancouver campus)}
  % \city{Vancouver}
  % \state{BC}
  % \country{Canada}
}

%%
%% By default, the full list of authors will be used in the page
%% headers. Often, this list is too long, and will overlap
%% other information printed in the page headers. This command allows
%% the author to define a more concise list
%% of authors' names for this purpose.
\renewcommand{\shortauthors}{Lyu et al.}

\begin{abstract}
Multi-BFT consensus runs multiple leader-based consensus instances in parallel, circumventing the leader bottleneck of a single instance. 
However, it contains an Achilles’ heel: the need to globally order output blocks across instances. Deriving this global ordering is challenging because it must cope with different rates at which blocks are produced by instances. Prior Multi-BFT designs assign each block a global index before creation, leading to poor performance. 

We propose \textit{\sysname}, a high-performance Multi-BFT protocol that allows varying instance block rates. 
Our key idea is to order blocks across instances dynamically, which eliminates blocking on slow instances.
%Second, blocks are globally ordered according to their generation, which respects inter-block causality.
We achieve dynamic global ordering by assigning \textit{monotonic ranks} to blocks. We pipeline rank coordination with the consensus process to reduce protocol overhead and combine aggregate signatures with rank information to reduce message complexity. 
\sysname's dynamic ordering enables blocks to be globally ordered according to their generation, which respects inter-block causality.
We implemented and evaluated \sysname by integrating it with both PBFT and HotStuff protocols. 
Our evaluation shows that \sysname-PBFT (resp., \sysname-HotStuff) improves the peak throughput of the prior art by $\approx8\times$ (resp., $2\times$)  and reduces latency by $\approx62\%$ (resp., 23\%), when deployed with one straggling replica (out of $128$ replicas) in a WAN setting. %and \sysname achieves similar performance improvements in a LAN setting.
\end{abstract}

%%
%% The code below is generated by the tool at http://dl.acm.org/ccs.cfm.
%% Please copy and paste the code instead of the example below.
%%
% \begin{CCSXML}
% <ccs2012>
%    <concept>
%        <concept_id>10010520.10010575</concept_id>
%        <concept_desc>Computer systems organization~Dependable and fault-tolerant systems and networks</concept_desc>
%        <concept_significance>500</concept_significance>
%        </concept>
%  </ccs2012>
% \end{CCSXML}

% \ccsdesc[500]{Computer systems organization~Dependable and fault-tolerant systems and networks}

% \keywords{Multi-BFT consensus,  Straggler nodes, Dynamic global ordering, Performance}

%\received{19 October 2023}
%\received[revised]{21 May 2024}
%\received[accepted]{21 July 2025}

\maketitle

\section{Introduction} \label{sec:intro} 
Byzantine Fault Tolerant (BFT) consensus is crucial to establishing a trust foundation for modern decentralized applications.
Most existing BFT consensus protocols, like PBFT~\cite{pbft1999} or HotStuff~\cite{hotstuff}, adopt a leader-based scheme~\cite{pbft1999,700BFT, HQReplica}, in which the protocol runs in views, and each view has a delegated replica, known as the leader. The leader is responsible for broadcasting proposals (\ie, a batch of client transactions) and coordinating with replicas to reach a consensus on its proposals.
However, the leader can become a significant performance bottleneck, especially at scale. The leader's workload increases linearly with the number of replicas~\cite{gai2021dissecting, MIR-BFT, avarikioti2020fnf, stathakopoulou2022state, gupta2021rcc}, making the leader the dominant factor in the system's throughput and latency. 

\begin{figure}[t]
    \centering
    \includegraphics[width=3.2in]{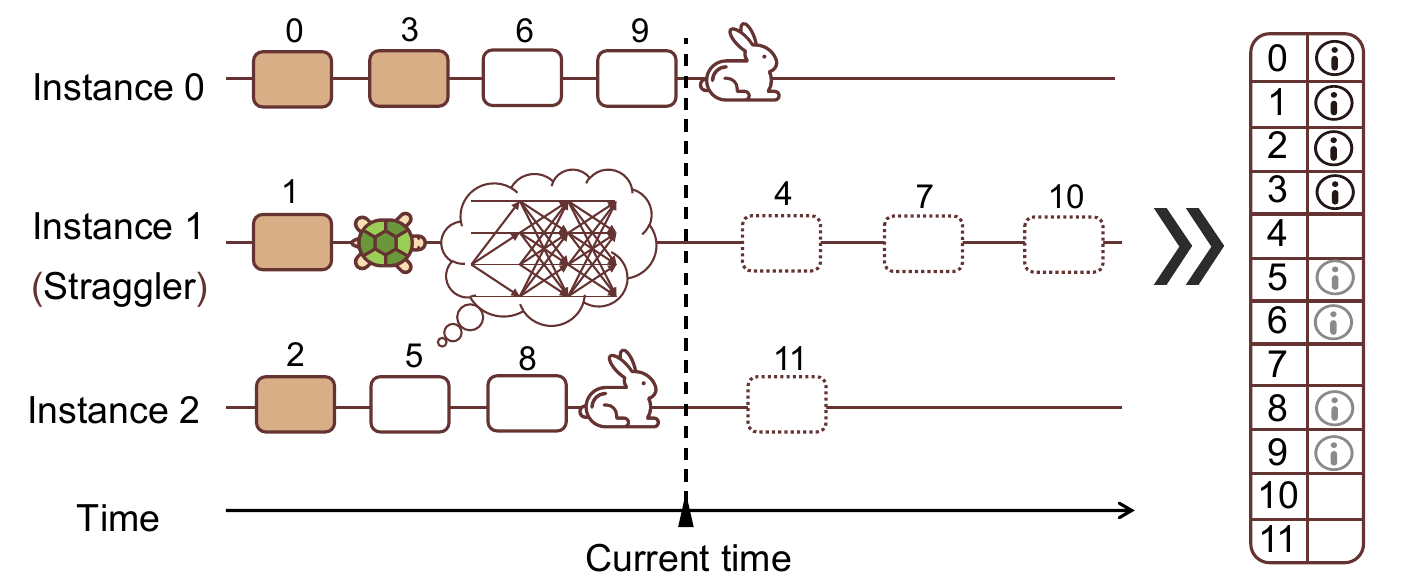}
    \caption{\textbf{An overview of Multi-BFT paradigm.} 
    The shaded (resp., white) blocks refer to the globally confirmed (resp., partially committed) blocks. The dashed blocks refer to the blocks to be produced in the future.} 
    \label{fig:MultiBFT}
\end{figure}

To address the leader bottleneck, Multi-BFT systems~\cite{MIR-BFT, avarikioti2020fnf, stathakopoulou2022state, gupta2021rcc} have emerged as a promising alternative. Multi-BFT consensus runs multiple leader-based BFT instances in \textit{parallel}, as shown in \figref{fig:MultiBFT}.
A replica may participate as a leader in one BFT instance and as a backup in others. Like a single BFT system, each BFT instance in Multi-BFT outputs a sequence of committed blocks, which will never be reverted in the partial sequence (as opposed to the global sequence introduced shortly). 
These blocks are referred to as being partially committed. Then, these blocks are ordered in a global sequence and become globally confirmed, functioning as a single instance system.
Such a scheme can balance the workloads of replicas and fully utilize their bandwidth, thereby increasing the overall system throughput.

However, the Achilles’ heel of Multi-BFT consensus lies in its global ordering. 
Existing Multi-BFT protocols~\cite{MIR-BFT, avarikioti2020fnf, stathakopoulou2022state, gupta2021rcc} 
follow a \textit{pre-determined} global ordering: a block is assigned a global index that depends solely on two numbers, its instance index and its sequence number in the instance's output. As a concrete example, consider \figref{fig:MultiBFT} with three instances outputting four blocks (produced and to be produced in the future). For example,
the three blocks from Instance 2 receive global indices of $2$, $5$, $8$, and $11$. Replicas execute partially committed blocks with an increasing global index one by one until they see a missing block. %Thus, four blocks are globally confirmed.
%The pre-determined global indices enable blocks across instances to be arranged into a consistent sequence of blocks because a partially committed block will never be reverted once produced. 

In a decentralized system, this simple global ordering method has performance issues.
A slow leader, often called a straggler, can slow down not just one instance, but the entire system.
For example, Instance 1 in \figref{fig:MultiBFT} has a straggling leader that only outputs one block (with a global index of $1$). This causes three ``holes'' in the global log (at positions $4$, $7$ and $10$), and prevents four blocks ($5$, $6$, $8$, and $9$) from being globally confirmed. This reduces the system's throughput and increases latency, posing a challenge for building high-performance Multi-BFT systems. (We provide further theoretical analysis and detailed experimental results to illustrate the impact of stragglers in \secref{subsec: straggler}).
This challenge is prevalent in decentralized systems where variations in replica performance and network conditions make straggling leaders commonplace.  Even worse, an adversary can violate the causality of blocks across instances, allowing it to front-run~\cite{eskandari2020sok, baum2021sok} its transactions ahead of others to gain an unfair advantage in applications like auctions and exchanges. (See more discussion in~\secref{subsec:causality}.)

In this paper, we propose \sysname\footnote{\sysname is a monster in Greek mythology, the dragon with one hundred heads that guarded the golden apples in the Garden of the Hesperides.}, a high-performance Multi-BFT consensus protocol that considers instances' varying block rates. 
Our insight is to \textit{dynamically} order partially committed %(\textcolor{blue}{\ie, committed in consensus}) \niu{why need this?} 
blocks from different instances by their assigned \textit{monotonic ranks} at production. 
The rank assignment satisfies two properties: 1) \textit{agreement}: all honest replicas have the same $rank$ for a partially committed block; and, 2) \textit{monotonicity}: the ranks of subsequently generated blocks are always larger than the rank of a partially committed block to respect block causality. 
Here, the agreement property ensures that replicas that use a deterministic algorithm to order partially committed blocks by their ranks can achieve the same ordering sequence, while monotonicity %\textcolor{blue}{ensures the irreversible of the sequence.} 
preserves inter-block causality. 
%Furthermore, in \sysname, partially committed blocks can be globally ordered by their ranks and a tie-breaking rule.
% of the belonging consensus instance indexes. 

The above dynamic global ordering decouples the dependency between the replicas' partial logs to ensure fast generation of the global log, eliminating the straggler impact. It also orders blocks by their generation sequence, preserving inter-block causality. 
For instance, consider \figref{fig:MultiBFT} with instances using the monotonic rank for subsequent blocks. The next block of Instance 1 will be assigned the rank of $10$ instead of $4$, forcing it to be globally ordered after existing partially committed blocks. The dynamic ranks enable Instance 1 to quickly synchronize with other instances, alleviating the impact of stragglers. 
%\footnote{\textcolor{blue}{It should be noted that achieving complete adherence to inter-block causality has not been fully realized. While the rank of each block is determined dynamically, the instance index remains fixed. Therefore, although we largely prioritize causality in the generation sequence of blocks, there may be instances where strict inter-block causality is not strictly maintained.}}. 

Achieving monotonic ranks is challenging due to the presence of Byzantine behaviors.
Replicas need to agree on the ranks of blocks to achieve consensus. % We find that the rank formulation can be piggybacked with the consensus process of blocks by enforcing leaders to include a rank when proposing a block. 
To maintain monotonicity, it is crucial that malicious leaders do not use stale ranks or exhaust the range of available ranks. A leader has to choose the highest rank from the ranks collected from more than two-thirds of the replicas and increase it by one.
To ensure that the leader follows these rules, each block includes a set of collected ranks and the associated proof (\ie, an aggregate signature) of the chosen ranks.  
However, this basic solution introduces latency and overhead. % for rank information collection and block overhead (\eg, $O(n)$ for the rank set). 
To optimize this solution, we pipeline the rank information collection with the last round of consensus and further combine aggregate signatures with rank information to reduce message complexity.  %Specifically, each replica is assigned a set of private keys, where each key denotes the difference between the highest rank that the replica knows and the rank used in the proposed blocks. 
%\textcolor{red}{With these optimizations, the overhead of using monotonic ranks is only $6.6\%$ compared to pre-determined global ordering.}

We built end-to-end prototypes of \sysname with PBFT and HotStuff, denoted as \sysname-PBFT and \sysname-HotStuff, respectively. Furthermore, we believe \sysname can compose with any single-leader BFT protocol.
We conduct extensive experiments on AWS to evaluate and compare \sysname with existing Multi-BFT protocols, including ISS~\cite{stathakopoulou2022state}, RCC~\cite{gupta2021rcc}, Mir~\cite{MIR-BFT}, and DQBFT~\cite{dqbft}. 
We run experiments over LAN and WAN with $8-128$ replicas, distributed across $4$ regions. With one straggler in WAN, \sysname-PBFT (resp. \sysname-HotStuff) achieves $8\times$ (resp. $2\times$) higher throughput and $62\%$ (resp. $23\%$) lower latency with $128$ replicas as compared to ISS-PBFT (resp. ISS-HotStuff). Over LAN, \sysname demonstrates performance trends similar to those observed in the WAN setting.
%We also show that \sysname always maintains a superior level of causality strength as compared to ISS, RCC, Mir, and DQBFT with stragglers.
\begin{figure}[t]
\centering
\includegraphics[width=\linewidth]{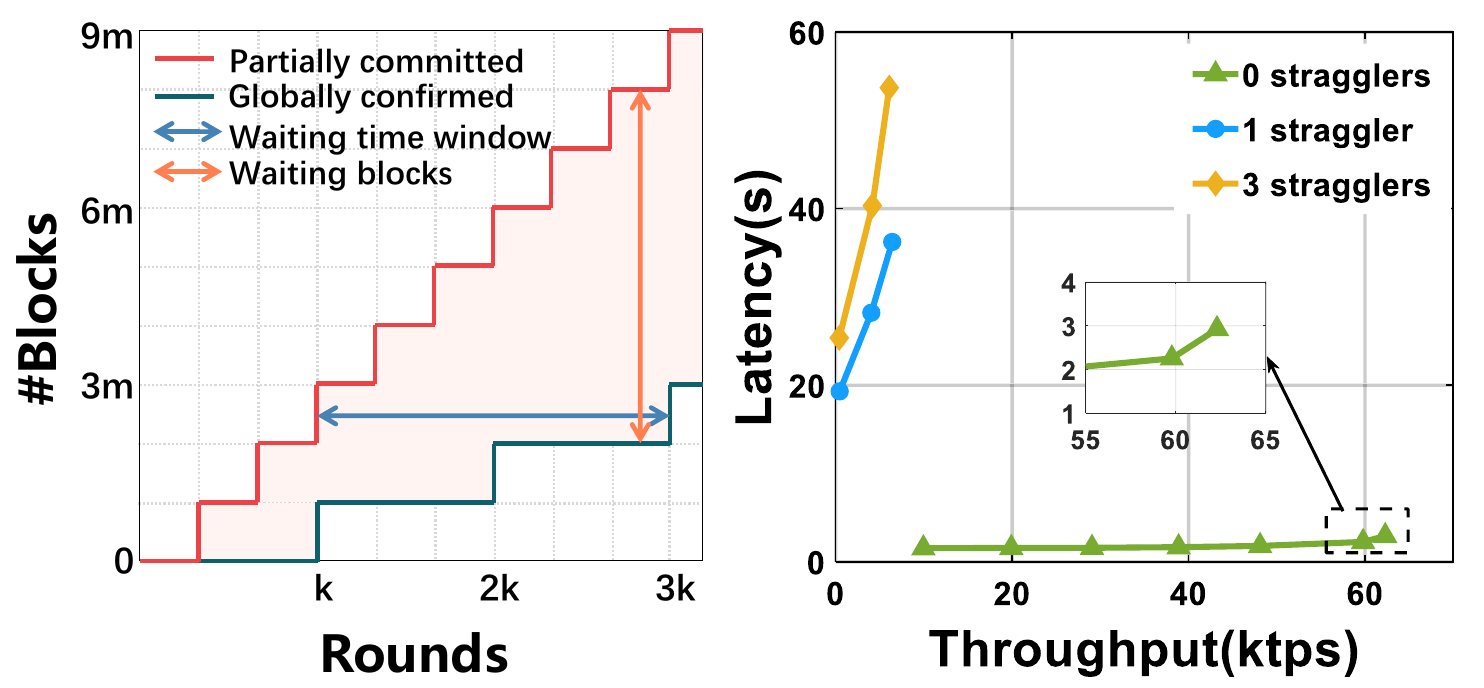}
\textbf{~~~~~(a)}~Analytical results\quad~~~~~~~~\textbf{(b)}~Experimental results
\caption{\textbf{The analytical and experimental performance of Multi-BFT consensus with/without a straggler.} The vertical line in (a) represents the queued partially committed blocks, while the horizontal line represents the delay of the global ordering.}
\label{fig:N}
\end{figure}
\section{Existing Multi-BFT Susceptibility} \label{sec:back&motivation}
%In this section, we introduce the performance limitations of existing Multi-BFT systems and  review prior solutions.% that address these issues. 

\subsection{Performance Degradation} \label{subsec: straggler}
Existing Multi-BFT protocols~\cite{MIR-BFT, avarikioti2020fnf, stathakopoulou2022state, gupta2021rcc} with the pre-determined global ordering
perform well when all instances have the same block production rate. 
With $m$ instances, they can achieve $m$ times higher throughput and similar latency to single-instance systems. By adjusting $m$, the system can maximize the capacity of each replica, allowing the throughput to approach the physical limit of the underlying network.  

However, performance will significantly drop when there are straggling instances caused by faulty or limited-capacity leaders, unstable networks, or malicious behaviors. 
Consider a simple case where a slow instance with a straggling leader produces blocks every $k$ rounds while the remaining $m-1$ normal instances produce blocks every round. 
Let $R$ (resp., $R^{\prime}$) denote the number of partially committed (resp., globally confirmed) blocks per round.
We have $R = 1/k + m-1$ and $R^{\prime} =m/k$, which implies that the system throughput is about $1/k$ of the ideal scenario. 
Over time, the accumulation of $R^{\prime}-R$ blocks every round leads to a continuous delay increase in waiting for global confirmation.

\figref{fig:N}a shows the analytical results for the case above. 
%\textcolor{blue}{, where each round lasts for $\delta$ time.} 
First, we observe that the number of partially committed blocks that wait to be globally confirmed grows over time. 
Similarly, the delay for partially committed blocks to become globally confirmed also grows over time. 
%Note that, in this analysis, we consider a single straggler. Extending this scenario to more stragglers will only make the results slightly worse. This is because the system's performance is primarily limited by the slowest instance.  
\figref{fig:N}b plots experimental results of throughput and latency (defined in \secref{sec:expperformance}) to show the practical impact of stragglers. We run ISS~\cite{stathakopoulou2022state}, a state-of-the-art Multi-BFT protocol, in which consensus instances are instantiated with PBFT~\cite{pbft1999}. We set $m = 16$, and show results for 0, 1, and 3 stragglers in WAN. Other settings are the same as \secref{sec:expperformance}.  With 1 and 3 stragglers, the maximum throughput is reduced by $89.7\%$ and $90.2\%$ of the system's throughput with 0 stragglers, respectively. The latency with 1 and 3 stragglers increases up to $12\times$ and $18\times$  of the system's latency with 0 stragglers, respectively.

\subsection{Revisiting Straggler Mitigation} \label{subsec: revisStragg}
%We revisit existing methods to reduce the impact of stragglers on system performance. 
Most existing methods focus on detecting straggling leaders and then replacing them with normal ones. However, we show that this detect-and-replace approach cannot fix these issues. This motivated us to design a dynamic global ordering mechanism to mitigate the impact of stragglers algorithmically. 
Stathakopoulou \etal~\cite{MIR-BFT} propose to use the timeout mechanism to replace leaders of the instances that do not timely output partially committed blocks. 
%, in which replicas set a timeout for the block with global index $n$ as soon as the block with index $n-1$ is partially committed. 
%If the timeout of enough replicas is triggered, the associated leader will be regarded as a malicious straggler and then replaced. 
Similarly, in RCC~\cite{gupta2021rcc}, a straggling leader will be removed once its instance lags behind other instances by a certain number of blocks. 
These mechanisms fall short in three ways. 
{First}, if there are multiple colluding stragglers in the system, it is difficult to detect them. 
{Second}, replicas that perform poorly due to lower capacities will be replaced, resulting in poor participation fairness for decentralized systems. 
{Third}, straggling leaders are only one cause of not timely producing partially committed blocks by instances. Network turbulence and dynamically varying replica capacities could also cause an instance to slow down for a period of time. 

Unlike these methods, DQBFT~\cite{dqbft} mitigates the impact of stragglers by adding a special instance to globally order partially committed blocks from other instances.
%\niu{We should also discuss DQBFT}
%However, when the ordering instance itself becomes a straggler, it substantially reduces system performance. Furthermore, the centralized ordering process is more susceptible to targeted attacks, thereby increasing security risk.
However, this centralized instance can itself become a performance bottleneck with a straggling leader. And, the leader in this special instance can also maliciously manipulate the global ordering of blocks. 
%enabling adversaries to manipulate the ordering and cause more severe causality violations.

%\bheading{Summary.} Existing Multi-BFT protocols suffer from performance issues. Work that detects/replaces straggling leaders cannot fix these issues. This motivated us to design a dynamic global ordering mechanism to algorithmically mitigate the impact of stragglers. 

\section{Models and Problem Statement} \label{sec:model}
\subsection{System Model}
We consider a system with $n = 3f+1$ replicas, denoted by the set $\mathcal{N}$. 
%Replicas process transactions from a set of clients.
A subset of at most $f$ replicas is \textit{Byzantine}, denoted as $\mathcal{F}$.
Byzantine replicas can behave arbitrarily. 
The remaining replicas in $\mathcal{N} \setminus \mathcal{F}$ are honest and strictly follow the protocol. 
All the Byzantine replicas are assumed to be controlled by a single adversary, which is computationally bounded. Thus, the adversary cannot break the cryptographic primitives to forge honest replicas' messages (except with negligible probability). 
There is a public-key infrastructure (PKI): each replica has a pair of keys for signing messages.

We assume honest replicas are fully and reliably connected: every pair of honest replicas is connected with an authenticated and reliable communication link.
We adopt the partial synchrony model of Dwork \etal \cite{dwork1988consensus}, which is widely used in BFT consensus~\cite{pbft1999, hotstuff}.
In the model, there is a known bound $\Delta$ and an unknown Global Stabilization Time (\textsf{GST}), such that after \textsf{GST}, all message transmissions between two honest replicas arrive within a bound $\Delta$. 
Hence, the system is running in \textit{synchronous} mode after \textsf{GST}.
%and in \textit{asynchronous} mode before \textsf{GST}.
%\iv{But GST is finite. So, isn't it guaranteed to eventually occur?}

Like classical BFT consensus~\cite{pbft1999, hotstuff}, we assume that Byzantine replicas aim to destroy the \textit{safety} and \textit{liveness} properties by deviating from the protocol. They may strategically delay their operation (e.g., without triggering timeout mechanisms), appearing as stragglers to either compromise performance or violate block causality (\secref{subsec:causality}). It is worth noting that in decentralized applications, due to replicas' heterogeneous capacities, honest replicas may also behave as stragglers.  
% And especially relev, in a Multi-BFT system, they can pretend to be straggling leaders (without triggering the timeout mechanism) to either degrade the system performance, \ie, robustness, or violate the causility (\secref{subsec:formulation}). 

\subsection{Preliminaries}\label{subsec:preliminaries}
We now introduce some building blocks for our work.

\bheading{Sequenced Broadcast (SB).} We follow ISS~\cite{stathakopoulou2022state} to use Sequenced Broadcast (SB) for consensus instances. SB is a variant of Byzantine total order broadcast with explicit round numbers and an explicit set of allowed messages. In particular, given a set of messages $M$ and a set of round numbers $R$, only one sender $p$ (\ie, the leader) can \textit{broadcast} a message $(msg, r)$, where $(msg, r) \in M \times R$. Honest replicas can \textit{deliver} a message $msg$ with round number $r$. 
If an honest replica suspects that $p$ is quiet, all correct nodes
can deliver a special $nil$ value $msg = \perp$ $\notin M$. Otherwise, honest replicas can deliver non-nil messages $m \neq \perp$.
There is a failure detector $D$ to detect a \textit{quiet} sender. SB is implementable with consensus, Byzantine reliable broadcast (BRB), and a Byzantine fault detector~\cite{stathakopoulou2022state}. 
An instance of SB ($p, R, M, D$) has the following properties:
%BAB has two primitives: \textit{propose} and \textit{commit}. The following properties~\cite{correia2006consensus} are usually required from a BAB protocol: 
\begin{packeditemize}
    \item \textbf{SB-Integrity}: If an honest replica delivers $(msg, r)$ with $msg \neq \perp$ and $p$ is honest, then $p$ broadcast $(msg, r)$.

    \item \textbf{SB-Agreement}: If two honest replicas deliver $(msg, r)$ and $(msg^{\prime}, r)$, then $msg = msg^{\prime}$. 

    \item \textbf{SB-Termination}: If $p$ is honest, then $p$ eventually delivers a message for every round number in $R$, \ie, $\forall r \in R:$ $\exists msg \in M \cup \{\perp\}$ such that $p$ delivers $(msg, r)$.

%    \item \textbf{SB-Eventual Progress:} If some honest replica delivers $(\perp, r)$ for some $r \in R$, then some honest replica $p'$ suspected $p$ after $sb$ is initialized at $p$.
\end{packeditemize}

\bheading{Blocks.} A block $B$ is a tuple $(txs,index,round,rank)$, where $txs$ denotes a batch of 
clients' transactions, $index$ denotes the index of the consensus instance, $round$ denotes the proposed round, and $rank$ denotes the assigned monotonic rank.
We use $B.x$ to denote the associated parameter $x$ of block $B$. For example, $B.txs$ is the set of included transactions. If two blocks have the same $index$, we call them intra-instance blocks; otherwise, we refer to them as inter-instance blocks.
It's important to note that when a block is globally confirmed (as introduced shortly), replicas can compute a unique global ordering index $sn$ for it. 
In other words, $sn$ is not a predetermined field of the block. In the following, we still use $B.sn$ for clarity. 
%\textcolor{blue}{and $sn$ denotes its index in the global log

\bheading{Aggregated signature scheme.} The aggregated signature is a variant of the digital signature that supports aggregation~\cite{aggregateSig}. That is, given a set of users $R$, each with a signature $\sigma_r$ on the message $m_r$, the generator of the aggregated signature can aggregate these signatures into a unique short signature: ${\textsc{agg}} (\{\sigma_r\}_{r \in R}) \rightarrow \sigma$. Given an aggregation signature, the identity of the aggregation signer $r$ and the original message $m_r$ of the signature can be extracted. The verifier can also verify that $r$ signed message $m_r$ by using the verify function: ${\textsc{verifyAgg}} ((pk_r,m_r)_{r \in R},\sigma) \rightarrow 0/1$.

\subsection{Problem Formulation}\label{subsec:formulation}
We consider a Multi-BFT system consisting of $m$ BFT instances indexed from 0 to $m-1$. Thus, the $i$th ($1\leq i \leq m$) instance has an index of $i-1$.
%, and each is denoted by $\mathcal{I}_i$. 
%Thus, we have $\mathcal{G} := \{\mathcal{I}_i\}_{i=0}^{m-1}$. 
The system can be divided into two layers, a \textit{partial ordering layer} $\mathcal{P}$ and a \textit{global ordering layer} $\mathcal{G}$.

\bheading{Partial ordering layer.} Clients create and send their transactions to replicas for processing, which constitute the input of the partial ordering layer $\mathcal{P}_{in}$. We assume there is a mechanism (\eg, rotating bucket~\cite{MIR-BFT}), which assigns client transactions to different instances to avoid transaction redundancy. Each instance runs an SB protocol, and in each round, a leader packs client transactions into blocks, proposes (\ie, \textit{broadcast} in SB) the blocks, and coordinates all replicas to continuously agree on the blocks. We denote the block produced by instance $i$ in round $j$ as $B_j^i$. 
The output of the partial ordering layer is a collection of $m$ sequences of \textit{partially committed} (\ie, \textit{delivered} in SB) blocks produced by all the instances, where the $i$th sequence from $i$th instance is denoted by $\big < B_1^i, B_2^i, $ $..., B_{k_i}^i \big>$, and $k_i$ is the number of partially committed blocks in the $i$th instance till now. We denote the entire collection by $\mathcal{P}_{out} = \big < B_1^i, B_2^i, $ $..., B_{k_i}^i \big>_{i={0}}^{m-1}$.

\bheading{Global ordering layer.} A Multi-BFT system should perform as a single BFT system, and so blocks output by the partial ordering layer across $m$ instances should be ordered into a global sequence. 
Thus, the input of the global ordering layer is the output of the partial ordering layer, \ie, $\mathcal{G}_{in} = \mathcal{P}_{out} = \big < B_1^i, B_2^i, ..., B_{k_i}^i \big>_{i={0}}^{m-1}$.
Following certain ordering rules (which vary according to different designs), these input blocks are ordered into a sequence, and the associated index is denoted as $sn$. These output blocks are globally committed and executed, \ie, \textit{globally confirmed}, denoted by the set $\mathcal{G}_{out}$. Blocks in $\mathcal{G}_{out}$ satisfy the following properties:
\begin{packeditemize}
\item \textbf{$\mathcal{G}$-Agreement}: If two honest replicas globally confirm $B.sn = B'.sn$, then $B=B'$. 

\item \textbf{$\mathcal{G}$-Totality}: If an honest replica globally confirms a block $B$, then all honest replicas eventually globally confirm the block $B$.

\item \textbf{$\mathcal{G}$-Liveness}: If a correct client broadcasts a transaction $tx$, an honest replica eventually globally confirms a block $B$ that includes $tx$. 

\end{packeditemize}

\section{Dynamic Global Ordering}\label{sec: keycompnent}
%We provide a high-level description of monotonic rank and its usage in realizing dynamic global ordering.

\subsection{Monotonic Rank}\label{subsec:rank}
We first present the required properties of monotonic ranks and then discuss how to realize them.
%to realize a dynamic global ordering.

\bheading{Properties.} The global ordering layer assigns partially committed blocks monotonic ranks (short for rank in the following discussion). These ranks will determine the output block sequence of the system. 
\textbf{M}onotonic \textbf{r}anks have two key properties: 
\begin{packeditemize}
    \item \textbf{MR-Agreement:} All honest replicas have the same rank for a partially committed block.

    \item \textbf{MR-Monotonicity:} 
     If a block $B'$ is generated after an intra-instance (or a partially committed inter-instance) block $B$, then the rank of $B'$ is larger than the rank of $B$.
\end{packeditemize}

These two properties collectively ensure that given a set of partially committed blocks with ranks, honest replicas can independently execute an ordering algorithm to produce a consistent sequence of globally confirmed blocks without any additional communication. Specifically, blocks are ordered by increasing rank, with a tie-breaking of instance indexes.
MR-Monotonicity guarantees block causality such that if a block is partially committed before another block is generated, then the partially committed block is globally confirmed before the latter. While a trusted global time could further strengthen block causality~\cite{kelkar2020order}, the current system still provides a robust level of causality, even without such global time coordination.

%Note that MR-Monotonicity also guarantees causality between any two intra-instance blocks. However, while inter-instance block causality is maintained to the extent possible, it cannot be fully guaranteed due to the inherent challenge of establishing causal order between concurrent blocks generated by different instances without a trusted global time~\cite{kelkar2020order}. 

%\niu{To hanzheng: add some citations}
%\niu{In this paper, the block generation begins with the collection of its rank value set, which precedes the block proposing. This is the practical design, not reveleant with the principle}

%The MR-Agreement property guarantees that given a set of partially committed blocks with ranks, honest replicas can run an ordering algorithm locally to output the same sequence of globally confirmed blocks. The MR-Monotonicity property guarantees inter-block causality. 

\bheading{The realization.} We first show how to achieve the MR-Agreement property for blocks' ranks without running additional consensus protocols. In particular, a block can be assigned a rank either when it is proposed or after it is output by a consensus instance (\ie, running SB). We observe that for the former approach, no additional procedure is required to achieve the property since the rank is piggybacked with the block that has to go through the consensus process. 
By contrast, the latter approach requires an extra consensus process. Thus, we use the former approach for efficiency. 

Second, to achieve MR-Monotonicity, a leader first collects the highest ranks from at least $2f+1$ replicas. It then increments the highest rank among these by one and assigns this as the rank for its proposed block. 
However, there exists a potential vulnerability: a malicious leader might attempt to disrupt monotonicity by introducing stale ranks.
To counteract this, each collected rank is sourced from blocks that have received sufficient votes and are accompanied by cryptographic certificates, which validate their authenticity (see \secref{subsec:pbft}). These collected ranks and certificates are then integrated into the proposed block for validation. Consequently, even if a Byzantine leader attempts to manipulate the ranks, the authenticity checks constrain it. We further analyze the impact of possible Byzantine behaviors on system performance in~\secref{subsec:byzantinestraggler}.

\bheading{Overhead analysis.} The above approach has two overheads: the rank collection process (one round of communication) and the larger block size. 
To mitigate the former, we integrate the rank collection into the consensus phases of the prior block. This eliminates extra communication, significantly reducing rank collection latency. As for the latter, we observe that the increased block size is negligible given the block payload size. For example, the additional rank information and certificates included in blocks comprise less than 1\% of the total block size (\ie, 2MB) with 100 replicas. We also use a custom aggregate signature mechanism to further reduce rank information included in blocks.
Specifically, a block only needs to include one aggregated signature as a certificate of the collected ranks.

% To better understand the above approach, consider a partially committed block $B$, whose rank is also agreed by a quorum $\mathcal{Q}$ of at least $2f+1$ replicas. (See the above realization of the MR-Agreement property.)
% Later,  when a new block $B^{\prime}$ is generated, its rank is set according to a set of highest ranks collected from a quorum $\mathcal{Q}^{\prime}$ of at least $2f+1$ replicas. Since there are $n=3f+1$ replicas in total, the intersection $\mathcal{Q} \cap \mathcal{Q}^{\prime}$ ensures that at least one honest replica with block $B$ will report its highest rank, and also the rank is no less than $B$'s rank. Therefore, $B^{\prime}$'s rank will be larger than $B$'s rank, which satisfies the monotonicity property. 

%We note that the above rank collection and inclusion in the block have a high overhead. We reduce this overhead by observing that the block consensus process is repeated, and so rank collection can be integrated into the consensus phases of the previous block. This cuts down rank collection latency. ). 

\subsection{Global Ordering Algorithm}\label{subsec:gor}
Algorithm~\ref{main} shows the global ordering process running at a replica.
The algorithm takes the set $\mathcal{G}_{in}$ of partially committed blocks as its input (which is the output from the partial ordering layer) and outputs 
the set $\mathcal{G}_{out}$ of globally confirmed blocks (\secref{subsec:formulation}). 

Algorithm~\ref{main} is based on two basic ideas. First, partially committed blocks can be globally ordered by increasing ranks and a tie-breaking to favor block output from consensus instances with smaller indices.
For example, given two blocks $B$ and $B^{\prime}$, block $B$ will be globally ordered before $B^{\prime}$, when $B.rank < B^{\prime}.rank$ or $B.rank = B^{\prime}.rank \wedge B.index < B^{\prime}.index$. 
For convenience, we use $B \prec B^{\prime}$ to denote that block $B$ has a lower global ordering index than block $B^{\prime}$.
Second, a partially committed block can be globally confirmed if its global index is lower than a certain threshold (which will be defined later).

\begin{algorithm}[t]
\caption{The Global Ordering Algorithm %\iv{Replace || with $\lor$ to match $\wedge$}
%\iv{Indent all if bodies.}
%\iv{This is an ordering algorithm, but the set of inputs and outputs of this algorithm are unclear. What is it ordering -- $S$ or $S'$ or something else?}
%\iv{Several if bodies are strange: they read like comments and don't have any function -- "B'' is a candidate block" is a comment not an action; or maybe I'm missing something? Why not turn these into comments and get rid of the if blocks.}
%\iv{Now that I think further, I think these should be assignment statements -- you are computing B'' and then computing B*.}
%\iv{When you write ",compare($B^{*}, B^{\prime})$ does the "," mean AND or OR?}
}
%\iv{Make the style of comments different from the style of the code, e.g., "//return true if..." should be in italics or some other colour.}
%\caption{\sysname Algorithm for replica $r$} 
\label{main}
\begin{algorithmic}[1]  
\State \textbf{upon} $\mathcal{G}_{in}$ is updated 
\State \hspace{1.0em} $\mathcal{S'} \leftarrow$ \textsc{getLastBlock}($\mathcal{G}_{in}$)
\State \hspace{1.0em} $B^{*} \gets \textsc{findLowestBlock}(\mathcal{S'})$ 
\State \hspace{1.0em} $bar = (B^{*}.rank + 1,B^{*}.index)$ \textcolor{purple}{//compute the bar}
\State \hspace{1.0em} $\mathcal{S} \gets \mathcal{G}_{in} \setminus \mathcal{G}_{out}$ 
\State \hspace{1.0em} $B_{can} = \textsc{findLowestBlock}(\mathcal{S})$ \textcolor{purple}{//find candidate block}
\State \hspace{1.0em} \textbf{while} $B_{can} \prec bar$ \textcolor{purple}{//$B_{can}$ has a lower index than $bar$}
\State \hspace{2.0em} $\mathcal{G}_{out} \leftarrow \mathcal{G}_{out} \cup  B_{can}$ \textcolor{purple}{//globally confirm $B_{can}$}
\State \hspace{2.0em} $\mathcal{S} \gets \mathcal{S} \setminus B_{can}$ \textcolor{purple}{//update $\mathcal{S}$}
\State \hspace{2.0em} $B_{can} = \textsc{findLowestBlock}(\mathcal{S})$ \textcolor{purple}{//find next $B_{can}$}
\State \hspace{1.0em} \textbf{end while}

%\Statex
%\Statex \textcolor{purple}{//return true if $B$ is ordered before $B'$}
%\State \textbf{function} {\textsc{compare}}($B, B'$)
%\State \hspace{1.0em} \textbf{return} $(B.rank < B'.rank)$ $ \lor $ $((B.rank =B'.rank) \wedge (B.index <B'.index))$

\Statex
\Statex \textcolor{purple}{//Return block with the lowest ordering index}
\State \textbf{function} {\textsc{findLowestBlock}}($\mathcal{V}$)
\State  \hspace{1.0em} $B^{*} \gets $ first block in $\mathcal{V}$
\State \hspace{1.0em} \textbf{for} each $B\in \mathcal{V}$ \textbf{do}  
\State  \hspace{2.0em} \textbf{if} $B \prec B^{*}$
\State  \hspace{3.0em} $B^{*} \gets B$
\State \hspace{2.0em} \textbf{end if}
\State \hspace{1.0em} \textbf{end for}
\State \hspace{1.0em} \textbf{return} $B^{*}$
\end{algorithmic}
\end{algorithm}

Now we describe the algorithm. When $\mathcal{G}_{in}$ is updated with new partially committed blocks, the replica runs the global ordering algorithm to decide globally confirmed blocks. The key step is to compute a threshold called the confirmation bar (short for \textit{bar}), by which blocks with lower global ordering indices can be globally confirmed.
To compute $bar$, the replica first fetches the last partially confirmed block from each instance, denoted by the set $\mathcal{S}^{\prime}$ (Line 2). 
Here, a block is partially confirmed only if all previous blocks in the same instance become partially committed. 
It then finds the block $B^{*}$ that has the lowest ordering index among the blocks in $\mathcal{S}^{\prime}$, \ie, $\forall B^{\prime} \in \mathcal{S}^{\prime} \mbox{ with } B^{\prime} \ne B^{*}: B^{*} \prec B^{\prime}$ (Line 3). 
Thereafter, \textit{bar} can be computed as a tuple of $(rank, index)$ (Line 4):  
\[bar := (rank, index) = (B^{*}.rank + 1, B^{*}.index).\]
The threshold $bar$ represents the lowest global ordering index that can be owned by subsequently generated blocks. The $bar$ is initialized with $(0,0)$. 

With $bar$ defined, the replica repetitively checks unconfirmed blocks in $\mathcal{G}_{in}$ and decides which blocks to confirm.
Specifically, let $\mathcal{S} = \mathcal{G}_{in} \setminus \mathcal{G}_{out}$ be the set of unconfirmed blocks in $\mathcal{G}_{in}$ (Line 5).
The replica finds the block  $B_{can} \in \mathcal{S}$ that has the lowest ordering index, which is referred to as the \textit{candidate} block (Line 6).
If $B_{can}$ has a lower ordering index than $bar$, $B_{can}$ can be globally confirmed because all future blocks will have higher indices than $B_{can}$. 
In particular, $B_{can}$ will be added to the set $\mathcal{G}_{out}$ and removed from the set $\mathcal{S}$ (Lines 8-9). 
The process repeats until no such $B_{can}$ can be found (Line 10).

\figref{fig:order} provides a concrete example of the global ordering process. 
Suppose at time $t_1$, a new partially committed block $B_2^2$ is added to Instance $2$, a replica has $\mathcal{G}_{in} = \big< B_1^{0}, B_2^{0}, B_3^{0}, B_1^1, B_2^1,$ $ B_1^2, B_2^2\big>$ and $\mathcal{G}_{out} = \big< B_1^{0}, B_2^{0}, B_1^1,$ $ B_1^2\big>$. 
According to the above algorithm, we have $\mathcal{S}^{\prime} = \big< B_3^{0}, B_2^{1}, B_2^{2}\big>$, $B^{*} = B_{2}^1$, $bar = (3,1)$, and $\mathcal{S} = \big< B_3^{0}, B_2^{1}, B_{2}^{2}\big>$.
The first candidate block in $\mathcal{S}$ is $B_2^{1}$, which has a lower rank than $bar$ and so will be globally confirmed. 
Then, $B_2^{1}$ is removed from $\mathcal{S}$, and $\mathcal{S} = \big< B_3^{0}, B_{2}^{2}\big>$, the candidate block is $B_3^{0}$, which can be globally confirmed because it has the same rank but a smaller index than $bar$ and so it will be globally confirmed. 
Finally, the set $\mathcal{S}$ contains only block ${B_2^2}$ with a higher rank than $bar$. Thus, ${B_2^2}$ will not be globally confirmed, and the search for globally confirmed blocks ends. 
%and the replica will \textcolor{blue}{await the partial commitment of a subsequent new block before proceeding with its algorithmic execution.}

\begin{figure}
    \centering
    \includegraphics[width=3.4in]{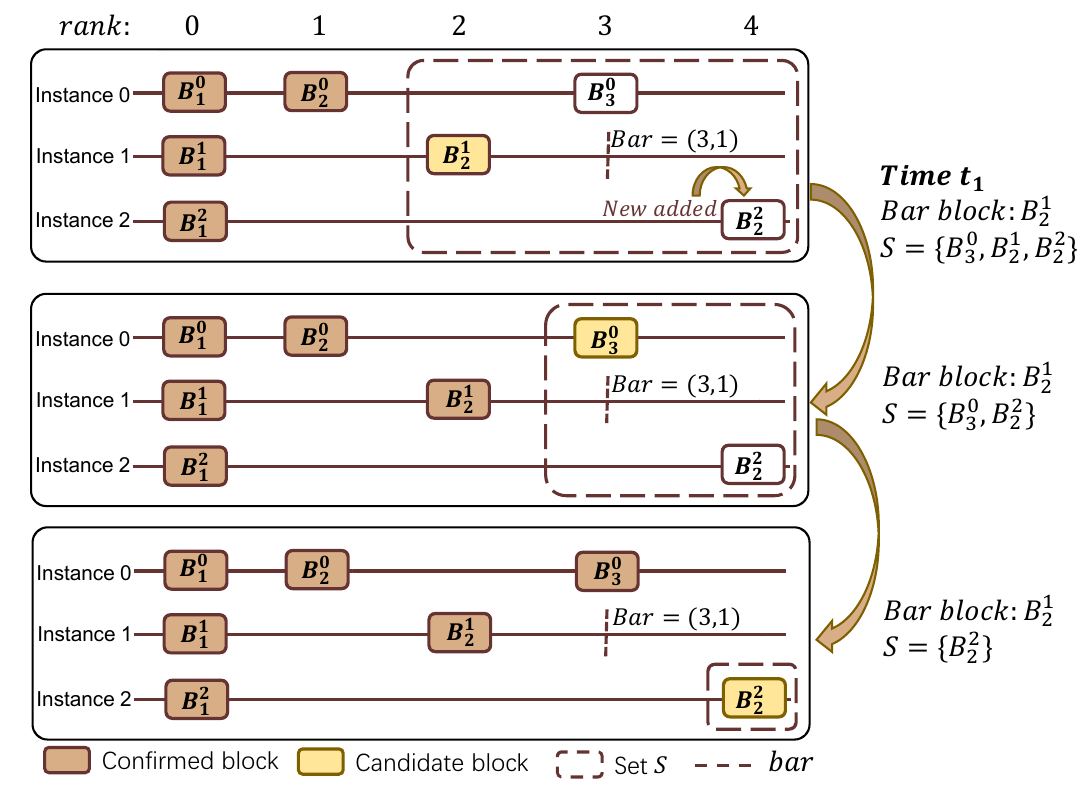}
    \caption{\textbf{An illustration of the dynamic global ordering process.} At time $t_1$, a new block $B_2^2$ is partially committed, which makes blocks $B_2^1$ and $B_3^0$ globally confirmed.
     }
    \label{fig:order}
\end{figure}

%\subsection{Resilience Against Stragglers}
%In our design, each instance works akin to a separate relay track in a race. Instead of producing blocks in a rigid sequential order, our system introduces a new parameter $rank$ that can dynamically adjust the position of each block produced by an instance. This is like allowing slower runners on a relay track to leap ahead in the race to keep up with faster runners on other tracks.

%In a more specific scenario, consider three instances in \figref{fig:order}. When Instance 2 (a slower relay track) partially committed a new block $B_2^2$, instead of being assigned the rank of 1 (as it would in a traditional system), it can be assigned a rank of 4 in our system. This effectively lets this block ``jump ahead'' to catch up with the progress of the race.

% Once the $B_2^2$ is partially committed, $B_3^1$ and $B_3^0$ are globally confirmed, thereby preventing stragglers from impeding overall system performance. 
% Consider the case in \secref{subsec: straggler} where a slow instance produces blocks every $k$ rounds while the remaining $m-1$ normal instances produce blocks every round. 
% Let $R$ denote the number of globally confirmed blocks per round. We have
% \[ 
% R = 
% \begin{cases}
%     \frac{m}{k} & \text{Traditional multi-BFT}, \\
%     m - 1 + \frac{1}{k} & \text{Our design}.
% \end{cases}
% \]

%\subsection{\textcolor{blue}{Resilience Against Stragglers}}
The dynamic global ordering effectively mitigates the impact of stragglers on performance. Each instance works akin to a separate relay track in a race. Instead of producing blocks in a rigid sequential order, the assigned ranks can dynamically adjust the position of each block produced by an instance. This is like allowing slower runners on a relay track to leap ahead in the race to keep up with faster runners on other tracks, improving efficiency and reducing the latency of global ordering.

\subsection{Causality Enhancement}\label{subsec:causality}
The above dynamic global ordering also respects inter-block causality, which is a highly desirable property for decentralized applications such as auctions and exchanges. 
It ensures that no one can front-run a partially committed transaction. In sharp contrast, this property is missing in previous Multi-BFT protocols (with a pre-determined ordering) \cite{MIR-BFT, avarikioti2020fnf, stathakopoulou2022state, gupta2021rcc}.
%in which an attacker only needs to corrupt one slow leader to front-run partially committed transactions.
%Existing Multi-BFT protocols have severe block causality issues due to the pre-determined global ordering. 
To better understand the issues in existing Multi-BFT protocols, refer to \figref{fig:MultiBFT}, where block $4$ is proposed after blocks $5$, $6$, $8$, and $9$ but is globally ordered and executed before them. Such violations may lead to various attacks including front-running attacks~\cite{eskandari2020sok, baum2021sok}, undercutting attacks~\cite{gong2022towards}, and incentive-based attacks~\cite{instability,niu2019selfish, eyal2014majority}. 
For example, %in a front-running attack~\cite{eskandari2020sok, baum2021sok}, an attacker attempts to place its transaction ahead of some other unexecuted transaction. 
consider a front-running attack~\cite{eskandari2020sok, baum2021sok} of cryptocurrency exchange, in which an attacker sees a large buy order $tx_v$ in block $5$, shown in \figref{fig:MultiBFT}. Then, the attacker creates a similar buy order $tx_m$ in block $4$. Since block 4 is placed ahead of block 5 in the global ordering, $tx_m$ is processed before $tx_v$. As a result, the attacker can buy the cryptocurrency at a lower price, and later sell it back at a higher price to $tx_v$, profiting at the expense of the original buyer.

Nevertheless, there is still a gap between the above property and an ideal property that no one can front-run a transaction, which is referred to as client-side causality in prior work~\cite{duan2018beat,duan2017secure, stathakopoulou2021adding}. 
Achieving this strong causality usually requires complicated fair-ordering mechanisms~\cite{clement2009making}, costly cryptographic techniques (\eg, commitment~\cite{duan2017secure} and threshold cryptography~\cite{duan2018beat}) or Trusted hardware (\eg, TEEs~\cite{stathakopoulou2021adding}). We leave it as future work to use these techniques in \sysname.
%\niu{To Hanzheng: polish the above nmaterials}

\begin{comment}
\begin{packeditemize}
\item \textbf{$\mathcal{G}$-Causality:} \label{definitionbc}
   If a block $B$ is partially committed by any honest replica before another block $B'$ is generated\footnote{In this paper, the block generation begins with the collection of its rank value set, which precedes the block proposing. } , then block $B$ is globally confirmed before block $B'$ (\ie, $B.sn < B'.sn$). 
\end{packeditemize}
\end{comment}

\subsection{Analysis of Byzantine leaders' Impact}\label{subsec:byzantinestraggler}
While the dynamic global ordering effectively mitigates stragglers' impact on performance as well as enhancing the inter-block causality, challenges arise with Byzantine leaders, who may intentionally delay block proposals or strategically minimize ranks of proposed blocks.

\bheading{Strategic delay of block proposals.} A Byzantine leader can strategically delay the proposal of a new block, impacting system latency. Additionally, such delays can be exploited to gain a front-running advantage or more fees from proposing transactions in some applications like blockchains. This challenge is inherent to BFT systems (including single instance and Multi-BFT systems), as delayed block proposals are a common Byzantine behavior that is hard to differentiate from honest actions in consensus protocols. 
%Specifically, the Byzantine leader might pre-generate several potential blocks but hold off on proposing them immediately. As the system progresses, the leader monitors the state and waits for the commitment of a block with a higher global index that contains a transaction of interest. If one of the leader’s pre-generated blocks includes a transaction that can profit from frontrunning this newly committed transaction, the leader strategically selects and proposes that particular block to gain an advantage.
The impact of this strategy is constrained by a timeout mechanism. In each round, the leader is given a limited window of time to propose a block. If the leader fails to do so within this period, a timeout triggers, preventing excessive delays and allowing the system to move forward, typically by initiating a view change to replace the leader. 

\bheading{Minimizing rank for proposed block.} %\niu{manipulation is confusing}} \hz{collection of  excessive ranks?} 
Before proposing a block, a leader first collects at least $2f+1$ highest certified ranks from replicas and then increases the highest of these by one to determine the rank for its new block. However, a Byzantine leader could collect more than $2f+1$ ranks before proposing a new block, subsequently discarding several high ranks and selecting only the lowest $2f+1$ ranks (see detailed example in Appendix \ref{appen:example}).

To analyze this, consider a Byzantine leader collects at least $2f + 1$ ranks and only selects the lowest $2f+1$ ranks. Let $f'$ be the actual number of Byzantine replicas (where $f' \leq f$). Then, among the lowest $2f+1$ ranks, there are at least $2f+1-f'$ ranks from honest replicas. So the selected highest rank from these $2f+1$ ranks is greater or equal to the highest one from the $2f+1-f'$ ranks from honest replicas, which is at least the median of all the $n-f'$ ranks from all honest replicas.
This implies that the selected highest rank is at least the median of the ranks from honest replicas.

In \sysname-PBFT (see detailed description in \secref{subsec:pbft}), when a block is partially committed, at least $2f+1$ replicas send commit messages. These replicas have all received $2f+1$ prepare messages, which serve as a quorum certificate for the block's rank. Hence, at least $2f+1$ replicas (including $f+1$ honest replicas) have this certified rank. Consequently, the median of the certified ranks among all honest replicas is greater than or equal to the rank of all committed blocks. Therefore, even with rank manipulation, the leader's proposed rank will not be lower than the ranks of all partially committed blocks. 

\bheading{Conclusion.} Both Byzantine strategies will have a limited impact on system performance and transaction causality, as their effects are mitigated by built-in timeout mechanisms and rank certification.

\section{\sysname Design} \label{sec:design}
%We describe \sysname in three steps. 
We overview \sysname in \secref{subsec:overview} and then introduce the detailed protocol with PBFT consensus instances in \secref{subsec:protocol}, and then provide refinements to reduce  message complexity in \secref{subsec: optimization}.
We sketch the correctness analysis of \sysname in \secref{subsec:correct}.
Additionally, we provide an analysis of message complexity, examples of protocol behavior, and the design for composing \sysname with HotStuff; these are detailed in Appendices \ref{appen: optimization}, \ref{appen:example}, and \ref{appen:hotstuff} due to space constraints.

\begin{figure}[t]
    \centering
    \includegraphics[width=3.5in]{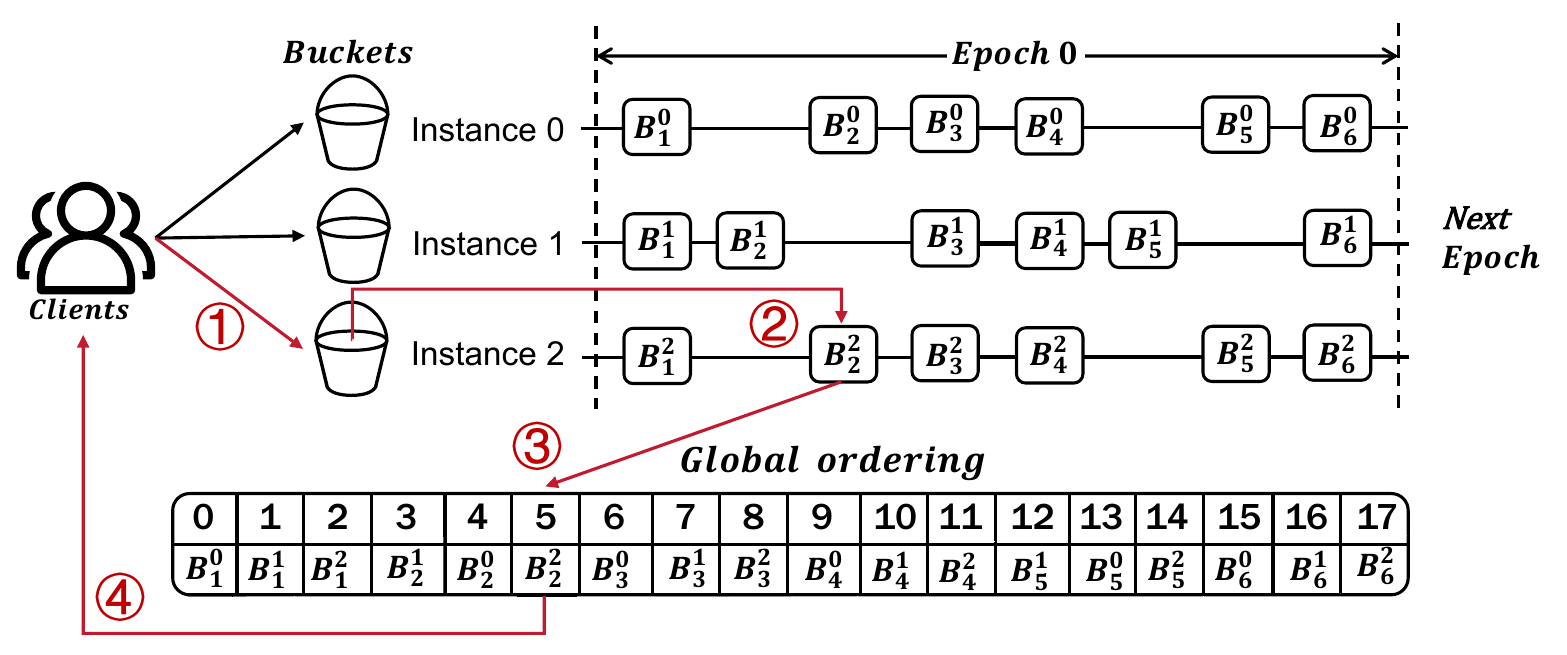}
    \caption{\textbf{An overview of \sysname's design.} Client transactions are enqueued into rotating buckets and are then consumed by concurrent BFT protocol instances. Each instance packs transactions into blocks that are eventually totally ordered. Instances execute in epochs. 
    } 
    \label{fig:overview}
\end{figure}

\subsection{Overview}\label{subsec:overview}
\figref{fig:overview} provides an overview of the core four components in \sysname: rotating buckets, epoch pacemaker, consensus instance, and global ordering. Among them, rotating buckets and the epoch pacemaker bear resemblance to existing paradigms~\cite{stathakopoulou2022state}, while instance consensus and global ordering are tailored to realize the dynamic global ordering.
%
% \iv{Figure 4 does not actually have any of these as words -- they must be labeled in the figure so that the reader can find them.}

\bheading{Rotating buckets.} 
\sysname adopts rotating buckets from ISS~\cite{stathakopoulou2022state}. 
These are used to prevent multiple leaders from simultaneously including the same transaction in a block. Client transactions are divided into disjoint buckets, and these are assigned round-robin to consensus instances when an epoch changes. Bucket rotation mitigates censoring attacks, in which a malicious leader refuses to include transactions from certain clients.

\bheading{Epoch pacemaker.}  The epoch pacemaker ensures \sysname proceeds in epochs. At the beginning of an epoch, \sysname has to configure the number of consensus instances, and the associated leader for each instance, and initialize systems parameters. At the end of an epoch, \sysname creates checkpoints of the current epoch across all instances, partially committing the block with the maximum rank of the current epoch. The liveness property inherent to each instance ensures that eventually, every instance will partially commit the block with the maximum rank for the given epoch. Once this condition is satisfied, \sysname transitions to the next epoch. We detail the Epoch pacemaker in~\secref{subsubsec:pacemaker}.% \textcolor{blue}{Epochs serve two purposes: they organize bucket rotation to counter censorship attacks~\cite{stathakopoulou2022state} and lay the foundation for future adaptability, including dynamic scaling and efficient storage management via garbage collection. %\niu{add citations if the above are not summarized by us}\niu{add citation? the last one is mentioned in other paper?}}

\bheading{Consensus instances.} In each epoch, multiple consensus instances run in parallel to handle transactions from rotating buckets. \sysname uses off-the-shelf BFT protocols such as PBFT and HotStuff. 
%\iv{Also HotStuff? Good to mention here.}
Each consensus instance contains (1) a mechanism for normal-case operation, and (2) a view-change mechanism. 
% Note that The unique properties of \sysname arise primarily from seamlessly merging monotonic ranks and the dynamic global ordering with the existing BFT protocols. 
\secref{subsec:pbft} further details these two mechanisms. 

\bheading{Global ordering.} The blocks produced by consensus instances are globally ordered by the global ordering algorithm, as detailed in~\secref{subsec:gor}. Upon global confirmation of a block, the transactions are sequentially executed, with the results subsequently relayed back to the respective clients. 

We now review the flow of client transactions in \sysname. These four steps match the numbered red arrows in \figref{fig:overview}.
%\iv{Consider using the numbered circle labels below in Figure 4 so that these 4 steps match the arrows -- put a label on each arrow ideally. For this you'll need to be careful to not make Fig 4 too crowded.}

\vspace{1mm} \noindent \ding{172} A client creates a transaction $tx$, and sends it to some relay replicas. The transaction $tx$ is assigned into one bucket, and forwarded to the associated leader for the current epoch. 

\vspace{1mm} \noindent \ding{173} When the leader receives the transactions, it will first pack the transactions in a block. Then, it runs consensus instances with other replicas to partially commit the block. 

\vspace{1mm} \noindent \ding{174} Replicas run the global ordering algorithm to globally confirm output blocks from instances.

\vspace{1mm} \noindent \ding{175}  Once the block is globally confirmed, the replica sends a reply to the client. Upon receiving more than $f+1$ identical replies, the client acknowledges the response as accurate.

%\subsection{The \sysname Algorithm}
\subsection{Protocol Description}\label{subsec:protocol}

\subsubsection{Epoch Pacemaker} \label{subsubsec:pacemaker}
\sysname proceeds in epochs.
%with the main algorithm shown in Algorithm~\ref{main}. % Most of the functions in this pseudocode will not provide implementation code, but will explain their meaning and purpose.
We start with epoch 0, and an empty undelivered block set.  All buckets are initially empty.% Then, we initialize \sysname from epoch(0) (line 10).The parameter $curRank$ is a tuple of a $rank$ value and a $QC$ for it.

\bheading{Epoch initialization.}
At the start of epoch $e$, \sysname will do the following: (1) calculate the range of $rank$s and instance index, 
(2) calculate the set of replicas that will act as leaders in epoch $e$ based on the leader selection policy, 
(3) create a new consensus instance for each leader, (4) assign buckets and indices to the created instances,
(5) start the instances (see details in~\secref{subsec:pbft}).

Here, we omit the details of the leader selection policy, bucket assignment policy, and index assignment policy, because they are the same as the policies in ISS~\cite{stathakopoulou2022state}.% We refer our readers to \cite{stathakopoulou2022state} for a detailed description of these policies.

The range of $rank$s for epoch $e$ is denoted as $[minRank(e)$, $maxRank(e)]$. Given an epoch $e$, the length $l(e)$ is a customizable parameter. It characterizes the number of $rank$s in an epoch, such that $l(e) = maxRank(e) - minRank(e) + 1$.  For epoch $e = 0$, $minRank(0) = 0$, and $maxRank(0)=l(0)-1$.  For epoch $e \neq 0$, $minRank(e) = maxRank(e-1)+1$.
%Assume that the length of epoch $e$ is $l(e)$, we have $maxRank(e)=minRank(e)+l(e)-1$.
%\iv{What's maxRank(0)? I think it is undefined.}

%\iv{What does length of epoch mean -- is this length in number of seconds or something else?}
%\iv{Some intuition for these formulas would help; they seem to come out of nowhere.}

\bheading{Epoch advancement.}
\sysname advances from epoch $e$ to epoch $e+1$ when all the instances reach  $maxRank(e)$, \ie, the blocks with $maxRank(e)$ in all instances have been partially committed. Only then does the replica start processing messages related to epoch $e+1$. 
To prevent transaction duplication across epochs, \sysname requires replicas to globally confirm all blocks in epoch $e$ before proposing blocks for epoch $e+1$.
To this end, \sysname adopts a \textsc{checkpoint} mechanism, in which replicas broadcast a checkpoint message in the current epoch before moving to the next epoch. 
Upon receiving a quorum of $2f + 1$ valid checkpoint messages, a replica creates a stable checkpoint for the current epoch, which is an aggregation of the checkpoint messages. When a replica starts receiving messages for a future epoch $e+1$, it fetches the missing log entries of epoch $e$ along with their corresponding stable checkpoint, which prove the integrity of the data.

\begin{algorithm}[t]
\caption{The \sysname-PBFT Algorithm for Instance $i$ at View $v$, Round $n$ and Epoch $e$}
\label{alg:pbftnew}
  \begin{algorithmic}[1]
  %\State \textbf{function} \textsc{runInstance}$(ins)$
  %\State \hspace{0.0em}$v \leftarrow 0$ \textcolor{purple}{// $v$ is the view number}
  %\State \hspace{0.0em}$n \leftarrow 1$ \textcolor{purple}{// $n$ is the round number}   
  %\State \hspace{0.0em} $i \leftarrow ins.index$
 % \State
  \renewcommand{\algorithmicrequire}{\hspace{1.8em}\textcolor{purple}{\Comment{{\textsc{pre-prepare}} phase (only for leader)}}}
\Require
  % \State \hspace{1.0em} //$n=1$
  % \State \hspace{1.0em} $rankSet[n] \leftarrow \langle${\textsc{rank}}$, v, n-1, \bot, i, curRank.rank\rangle_\sigma$
  % \State \hspace{2.0em} {\textsc{propose}}$(n)$
  % \State \hspace{1.0em} //$n\neq 1$
%  \State \hspace{1.0em} //for $n = 2, 3, 4, ...$
  \State \hspace{0.0em} \textbf{upon} receive $2f+1$ $rankmsg$ \textbf{do} 
% \State \hspace{1.0em} \begin{minipage}[t]{0.92\linewidth} // if $n = 1$, the leader doesn't need to wait for $rankmsg$, but let $rankSet[n] \leftarrow \langle${\textsc{rank}}$, v, n-1, \bot, i, curRank.rank\rangle_\sigma$\end{minipage}
  % \State \hspace{1.0em} {\textsc{propose}}$(rankmsg.n +1)$
  % \State
  %\State \textbf{function} \textsc{propose}$(n)$
  \State \hspace{1.0em} $ rankSet \leftarrow 2f+1$ $rankmsg$
  \State \hspace{1.0em} $ rank_m, QC \leftarrow$\textsc{getRank}$(rankSet)$
  \State \hspace{1.0em} $txs \leftarrow$ \textsc{cutBatch}$(ins.bucketSet)$
  \State \hspace{1.0em} $d \leftarrow hash(txs)$
  \State \hspace{1.0em} $rank \leftarrow min\{rank_m +1, maxRank(e)\}$ %\footnote{If $rank_m +1 > maxRank$, $rank \leftarrow maxRank$.}
 %\State \hspace{1.0em} //if $rank_m +1 > maxRank$, $rank \leftarrow maxRank$.
  % \State \hspace{1.0em} \textbf{if} $ rank_m +1 < maxRank$
  % \State \hspace{2.0em} $rank \leftarrow rank_m +1$
  % \State \hspace{1.0em} \textbf{else}
  % \State \hspace{2.0em} $rank \leftarrow maxRank$
  % \State \hspace{1.0em} \textbf{end if}
  \State \hspace{1.0em} $ppremsg \leftarrow \langle${\textsc{pre-prepare}}$, v, n, d, i, rank\rangle_\sigma$
  \State \hspace{1.0em} multicast  $\langle ppremsg, txs, QC, rankSet\rangle$
  \State \hspace{1.0em} \textbf{if} $rank = maxRank(e)$
  \State \hspace{2.0em} stop propose
  \State \hspace{1.0em} \textbf{end if}
  \State
  \renewcommand{\algorithmicrequire}{\hspace{1.8em}\textcolor{purple}{\Comment{{\textsc{prepare}} phase}}}
\Require
  \State \textbf{upon} receive $ppremsg$ \textbf{do} 
  \State \hspace{1.0em} \textbf{if} {\textsc{verify}}$(ppremsg)$
  \State \hspace{2.0em} $premsg \leftarrow \langle${\textsc{prepare}}$, v, n, d, i, rank\rangle_\sigma$  
  \State \hspace{2.0em} multicast $premsg$ 
  \State \hspace{1.0em} \textbf{end if}
  
  \State
  \renewcommand{\algorithmicrequire}{\hspace{1.8em}\textcolor{purple}{\Comment{{\textsc{commit}} phase}}}
\Require
  \State \textbf{upon} receive $2f+1$ $premsg$ \textbf{do}
  \State \hspace{1.0em} \textbf{if} {\textsc{verify}}$(premsg)$
  \State \hspace{2.0em} $commsg \leftarrow \langle${\textsc{commit}} $, v, n, d, i, rank\rangle_\sigma$  
  \State \hspace{2.0em} multicast $commsg$
  \State \hspace{2.0em} \textbf{if} $commsg.rank > curRank.rank$  
  \State \hspace{3.0em} $curRank.rank \leftarrow commsg.rank$
  \State \hspace{3.0em} $curRank.QC \leftarrow$ {\textsc{agg}}$(premsg)$
  \State \hspace{2.0em} \textbf{end if}
  \State \hspace{2.0em} $rankmsg \leftarrow \langle${\textsc{rank}}$, v, n, \bot, i, curRank.rank\rangle_\sigma$  
  \State \hspace{2.0em} send $\langle rankmsg, curRank.QC \rangle$ to leader 
  \State \hspace{1.0em} \textbf{end if}
  \State
  \renewcommand{\algorithmicrequire}{\hspace{1.8em}\textcolor{purple}{\Comment{Finally}}}
\Require
  \State \textbf{upon} receive $2f+1$ $commsg$ \textbf{do}
  \State \hspace{1.0em} \textbf{if} {\textsc{verify}}$(commsg)$  
  \State \hspace{2.0em} $B \leftarrow \langle txs, i, n, rank\rangle$  
  \State \hspace{2.0em} $\mathcal{G}_{in} \leftarrow \mathcal{G}_{in} \cup  B$ \textcolor{purple}{{//Commit $B$}}
%  \State \hspace{2.0em} {\textsc{check}}$(ins)$
  \State \hspace{1.0em} \textbf{end if}
  \State
  \State \textbf{upon} receive $rankmsg$ \textbf{do}
  \State \hspace{0.8em} \textbf{if} \textsc{verify}($rankmsg$) $\wedge rankmsg.rank>curRank.rank$
  \State \hspace{2.0em} $curRank.rank \leftarrow rankmsg.rank$  
  \State \hspace{2.0em} $curRank.QC \leftarrow rankmsg.QC$
  \State \hspace{1.0em} \textbf{end if}
%\State \hspace{1.0em} {\textsc{handleRank}}$(rankmsg)$
  \end{algorithmic}
\end{algorithm}

\subsubsection{Consensus Instance} \label{subsec:pbft}
We now describe \sysname using PBFT~\cite{pbft1999} for consensus instances (called \sysname-PBFT).

\bheading{Data structure.}
%PBFT for \sysname is based on the original three-phase ({\textsc{pre-prepare}},{\textsc{prepare}},{\textsc{commit}}) PBFT protocol\cite{pbft1999}, but has made minor adjustments to the message format, \ie, added instance index $i$ and $rank$ parameters. 
Messages are tuples of the form $\langle type$, $v$, $n$, $d$, $i$, $rank \rangle_{\sigma}$, where $type \in$ \{{\textsc{pre-prepare}}, {\textsc{prepare}}, {\textsc{commit}}, {\textsc{rank}}\}, $v$ indicates the view in which the message is being sent, $n$ is the round number, $d$ is the digest of the client’s transaction, $i$ is the instance index, $rank$ is the rank of the message, $\langle msg \rangle_\sigma$ is the signature of message $msg$. We use $ppremsg, premsg, commsg, rankmsg$ as the shorthand notation for pre-prepare, prepare, commit, and rank messages, respectively.
The instance index is added to mark which instance the message belongs to, since we run multiple instances of consensus in parallel. The parameter $rank$ is the MR, which is used for the global ordering of the blocks.  An aggregation of $2f+1$ signatures of a message $msg$ is called a Quorum Certificate ($QC$) for it. 
%\iv{I don't think $QC$ was mentioned anywhere above.}
When we say a replica sends a signature, we mean that it sends the signed message together with the original message and the signer's identity.

\bheading{Normal-case operation.}
Algorithm~\ref{alg:pbftnew} shows the operation of \sysname protocol in the normal case without faults. Within an instance, the protocol moves through a succession of views with one replica being the leader and the others being backups in a view. The protocol runs in rounds within a view.  
An instance starts at view $0$ and round $1$  and a unique instance index $i$.
The leader starts a three-phase protocol ({\textsc{pre-prepare}},{\textsc{prepare}},{\textsc{commit}}) to propose batches of transactions to the backups. After finishing the three phases, replicas commit the batch with corresponding parameters. 
We generally use a {\textsc{verify}} function to check the validity of a message, such as the validity of the signature and whether the parameters match the current view and round. 
%If there are other conditions that need to be described, we will explain them when describing the specific scenario.

%The algorithm works roughly as follows. 

\iheading{1) \textsc{pre-prepare}.} This phase is only for the leader. In round $n \neq 1$, upon receiving $2f+1$ $rankmsg$ (including one from itself) for round $n$ in round $n-1$, the leader proposes a batch for the new round. When a leader proposes for round $n$,
it forms a set $rankSet$ of the $rankmsg$ for round $n$ (Line 2), and picks the maximum rank value with its QC from $rankSet$, denoted as $rank_m$ and $QC$ (Line 3).
Then, the leader cuts a batch $txs$ of transactions (Line 4), and calculates the digest of $txs$ (Line 5). A rank number $rank=rank_m + 1$ is assigned to $txs$, which should not exceed the $maxRank(e)$ of the current epoch (Line 6). The leader multicasts a pre-prepare message (Line 8) with $txs$, $QC$, and $rankSet$ to all the backups, where $QC$ is proof for the validity of $rank_m$. The set $rankSet$ is used to prove the leader follows the rank calculation policy. After proposing a batch with the $maxRank(e)$, the leader stops proposing (Lines 9-10). Note that when $n = 1$, the leader doesn't need to wait for $rankmsg$, but let $rankSet[n] \leftarrow \langle${\textsc{rank}}$, v, n-1, \bot, i, curRank.rank\rangle_\sigma$.

\iheading{2)  {\textsc{prepare}}.} A backup accepts the pre-prepare message after the following validity checks:
\begin{packeditemize}
    \item The pre-prepared message meets the acceptance conditions in the original PBFT protocol.
    \item $rankSet$ contains $2f+1$ ($n \neq 1$) or $1$ ($n = 1$) signed $rankmsg$ from different replicas in current view and previous round (\ie, $rankmsg.v=v,rankmsg.n=n-1$).  
    \item If $rank_m$ is the highest rank in $rankSet$ and $rank_m \neq minRank$, $QC$ is a valid aggregate signature of $2f+1$ signatures for $rank_m$.
    \item If $rank_m +1 \leq maxRank(e)$, $rank=rank_m +1$; Otherwise, $rank=maxRank(e)$.
\end{packeditemize}

The backup then enters the prepare phase by multicasting a $\langle${\textsc{prepare}}$, v, n, d,$ $i, rank\rangle_{\sigma}$ message to all other replicas (Lines 15-16). Otherwise, it does nothing.

\iheading{3) {\textsc{commit}}.} Upon receiving $2f+1$ valid prepare messages from different replicas (Lines 19-20), a replica multicasts a commit message to other replicas (Line 22). If the $rank$ carried in the commit message is greater than the current highest rank that the replica knows $curRank.rank$, the replica updates its $curRank$ by setting $curRank.rank$ to $commsg.rank$ (Line 24), and generates a $QC$ for it by aggregating the $2f+1$ $premsg$ (Line 25). Then, a backup sends a $rankmsg$ together with the QC for $rank$ to the leader to report its $curRank$ (Lines 27-28). 

Finally, upon receiving $2f+1$ valid commit messages, a replica commits a block $B$ with its corresponding parameters (Lines 31-34). All the committed blocks will be globally confirmed and delivered to clients. 

\iheading{4) Respond to clients.}
When a replica commits a block, it checks whether any undelivered block can be globally confirmed (see~\secref{subsec:gor}). If so, it assigns the block a global index $sn$ and delivers it back to clients.

\bheading{View-change mechanism.}
If the leader fails, an instance uses the PBFT view-change protocol to make progress~\cite{pbft1999}.
A replica starts a timer for round $n+1$ when it commits a batch in round $n$ and stops the timer when it commits a batch in round $n+1$. If the timer expires in view $v$, the replica sends a view-change message to the new leader. After receiving $2f+1$ valid view-change messages, the new leader multicasts a new-view message to move the instance to view $v+1$.  
Thereafter, the protocol proceeds as we described it.
%The specific description of the view-change protocol can be found in.

\subsection{Message Complexity Refinement} \label{subsec: optimization}
{We note that the leader must broadcast at least $2f+1$ rank messages to allow backups to authenticate the accuracy of the leader's $rank$ calculation. This prevents Byzantine leaders from arbitrarily selecting a $rank$ for the new proposal. 
However, this results in a communication complexity of $O(n^2)$ in the pre-prepare phase (which is $O(n)$ in PBFT). 
We provide an optimization of \sysname-PBFT using aggregate signature schemes to reduce its message complexity. The optimized protocol is referred to as \sysname-opt.

\bheading{High-level Description.} Our key idea is to aggregate the $2f+1$ rank messages into one by using the aggregate signature scheme.
%\footnote{Introducing a bitmap for chosen private key indices in aggregate signatures, increasing complexity to $O(n^2)$, is generally negligible in practice.}. 
Recall that standard multi-signatures or threshold signatures require that the same message be signed~\cite{nocommit}. 
This is, however, not the case for us, because different replicas may have different $rank$ values.
Instead, rather than encoding the $rank$ value information in the message directly, we encode it in the private keys. 

We modify the rules for generating rank messages as follows. 
For each replica, we generate $K$ private keys. In round $n$, when replica $r$ creates a rank message, it computes the difference between the highest rank that it knows(denoted as $rank_r$) and the rank of the current round (denoted as $rank$), \ie, $k \leftarrow (rank_r - rank)$. The replica then signs the message using its $k$th signature key, \ie, $rankmsg \leftarrow \langle${\textsc{rank}} $v, n, \bot, i, rank \rangle_{\sigma_{r_k}}$. This scheme allows each replica to sign the same message. Upon receiving a rank message, the leader can recover the $rank_r$ intended to be transmitted by a replica $r$ by computing the sum of $rank$ and $k$. {Note that the rank difference could be beyond the number of private keys. We use the $K$th private key to sign the replica’s rank difference when the difference is beyond this bound.} The $K$ can be adjusted according to stragglers in a real deployment.  

\bheading{Detailed protocol.} We present an optimized version of normal-case operation for \sysname-PBFT in round $n \neq 1$. There are three modifications.  

\iheading{1)} \textsc{Pre-prepare}. Upon receiving $2f+1$ rank messages, the leader aggregates the partial signatures into a single signature $\sigma$, which the replicas can efficiently verify using the matching public keys. The $rankSet$ is set to the signature $\sigma$ instead of a set of $2f+1$ rank messages, which reduces the communication complexity from $O(n^2)$ to $O(n)$. The leader obtains the maximum $k$ from the $2f+1$ rank messages, denoted as $k_m$, and lets $ppremesg.rank = k_m + rankmsg.rank +1$.

\iheading{2)} \textsc{Prepare}. The backups will check the validity of $\sigma$: (a) it is a valid aggregate signature of $2f+1$ signatures from different replicas; and, (b) $ppremesg.rank = k_m + rankmsg.rank +1$, where $k_m$ is the maximum $k$ in $\sigma$.

\iheading{3)} \textsc{Commit}. A replica calculates the difference between the highest rank it knows and the $rank$ of the current round as $k$, \ie, $k \leftarrow curRank.rank- commsg.rank$. {If $k < K$, the replica then sends the  $rankmsg \leftarrow \langle${\textsc{rank}} $v,n,\bot, i, commsg.rank \rangle_{\sigma_{r_k}}$ together with $curRank.QC$ to the leader. Otherwise, the replica signs the $rankmsg$ with the $K$th private key.}
Upon receiving a rank message, a replica updates its $curRank$ if $rankmsg.rank + rankmsg.k > curRank.rank$.

\subsection{Correctness Analysis Overview} \label{subsec:correct}
In this section, we provide a brief security analysis of \sysname. Due to space constraints, we leave detailed proofs to Appendix \ref{appen:correct}. We show that \sysname satisfies \textit{totality}, \textit{agreement}, and \textit{liveness} properties. 

\bheading{Proof sketch.} 
To establish totality, we first prove that all partially committed blocks will eventually be globally confirmed. If a replica globally confirms a block $B$, it must have partially committed it. According to SB-Agreement and SB-Termination, all honest replicas will partially commit $B$. Therefore, all honest replicas will globally confirm $B$.
To show agreement, we use induction and proof-by-contradiction. The proof relies on two key observations. First, each block is assigned a unique global ordering index, ensuring a one-to-one mapping between blocks and global ordering indexes. Second, if two honest replicas globally confirm different blocks with the same global ordering index, it directly contradicts the established protocol rules, highlighting that blocks with the same global ordering index must be identical. This straightforward reasoning establishes the uniqueness of the global ordering index and ensures the agreement property.
Regarding liveness, our bucket rotation mechanism ensures that all transactions will eventually be assigned to an honest leader for processing. Then we prove that a transaction proposed by an honest leader will be eventually partially committed and then globally confirmed.
%As for causality, we prove that if the block $B$ is partially committed before the generation of  $B'$, then $B'.rank$ must be greater than $B.rank$. According to the global ordering rules, this is indeed the case.

\section{Evaluation} \label{sec:evaluation}
In this section, we evaluate the performance and causality of \sysname across different scenarios. We compare \sysname against four state-of-the-art Multi-BFT protocols: ISS~\cite{stathakopoulou2022state}, RCC~\cite{gupta2021rcc},  Mir~\cite{MIR-BFT}, and DQBFT~\cite{dqbft}. 
We implemented \sysname in Go\footnote{{https://github.com/eurosys2024ladon/ladon}} and used the Go BLS library for aggregate signatures. The evaluation results are illustrated using ChiPlot\footnote{https://www.chiplot.online/}. 
%\sysname supports PBFT~\cite{castro2002practical} and HotStuff~\cite{hotstuff} consensus instances. 
We build two end-to-end prototypes of \sysname-PBFT and \sysname-HotStuff. 
Due to space constraints, we present the results of \sysname-PBFT in this section and leave the results of \sysname-HotStuff to Appendix \ref{appen:hotstuff}. 
For brevity, we refer to \sysname-PBFT as \sysname in this section.
Our experiments aim to answer the following research questions:
\begin{packeditemize}
    \item \textbf{Q1:} How does \sysname perform with varying number of replicas in WAN and LAN environments as compared to ISS,  RCC, Mir and DQBFT? (\secref{sec:expperformance})   
        
    \item \textbf{Q2:} How does \sysname perform under faults? (\secref{sec:expstraggler})
    
%    \item \textbf{Q3:} How much overhead does \sysname introduce in realizing dynamic global ordering as compared to ISS, Mir, RCC and DQBFT? (\secref{sec:overhead})
    
    \item \textbf{Q3:} How does the causal strength of \sysname compare to that of ISS, Mir, RCC, and DQBFT? (\secref{sec:expsecurity})
\end{packeditemize}
\begin{figure*}[t]
    \centering
    \begin{subfigure}[t]{0.24\textwidth}
        \centering
        \includegraphics[width=\textwidth]{ 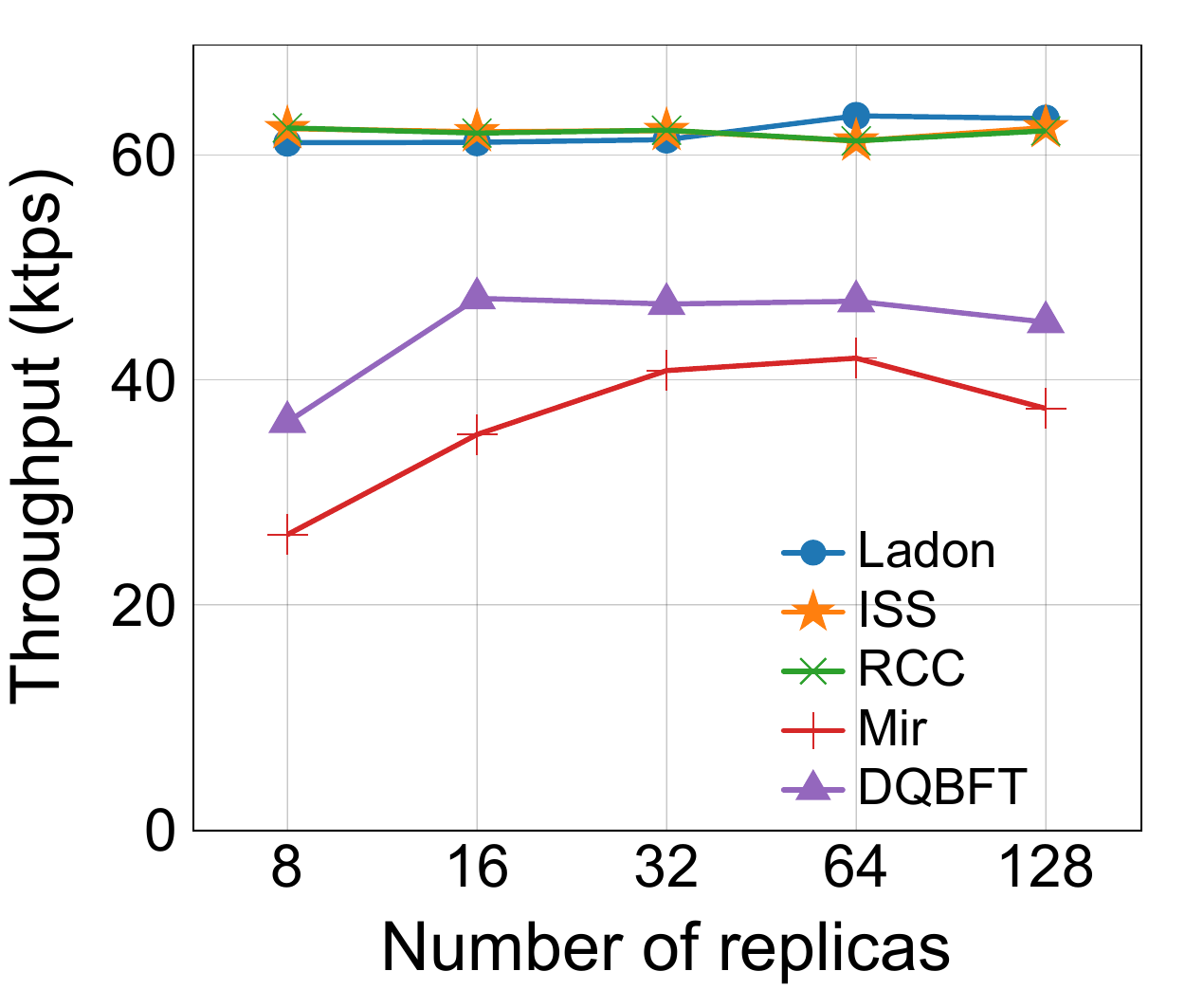}
        \caption{\#Straggler=0, WAN}
        \label{fig:wan2}
    \end{subfigure}
    \hfill
    \begin{subfigure}[t]{0.24\textwidth}
        \centering
        \includegraphics[width=\textwidth]{ 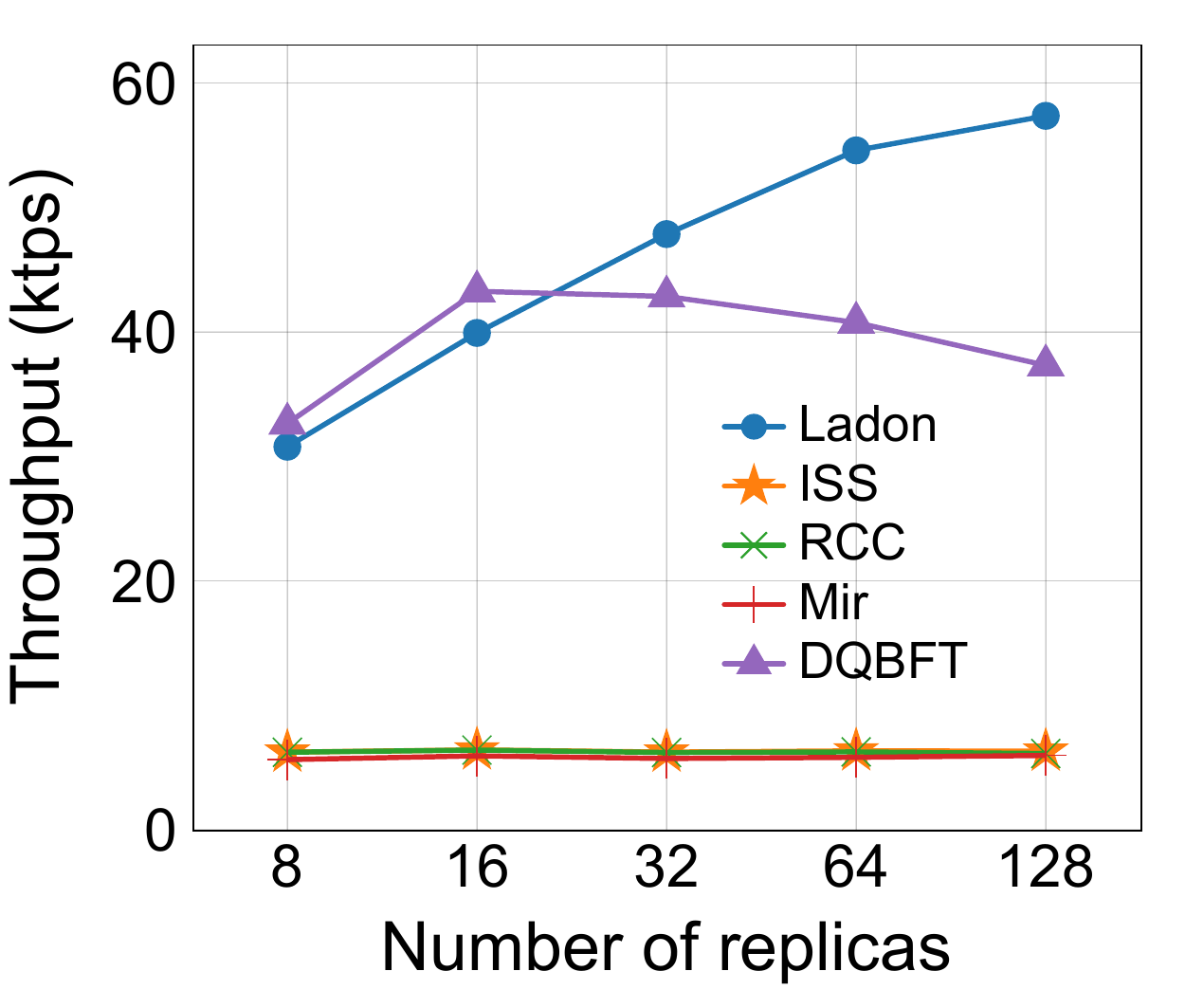}
        \caption{\#Straggler=1, WAN}
        \label{fig:wan1}
    \end{subfigure}
    \hfill
    \begin{subfigure}[t]{0.24\textwidth}
        \centering
        \includegraphics[width=\textwidth]{ 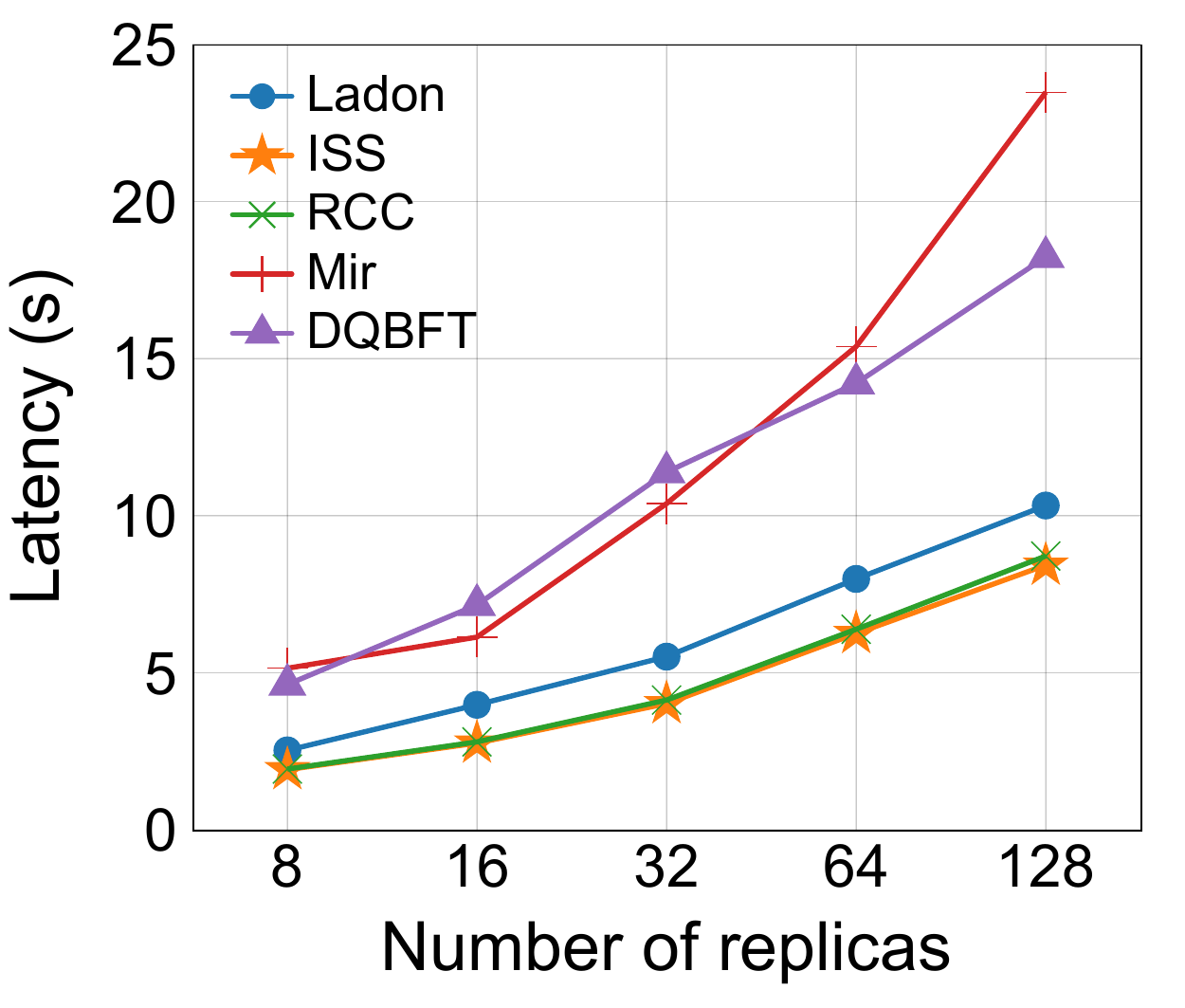}
        \caption{\#Straggler=0, WAN}
        \label{fig:wan4}
    \end{subfigure}
    \hfill
    \begin{subfigure}[t]{0.24\textwidth}
        \centering
        \includegraphics[width=\textwidth]{ 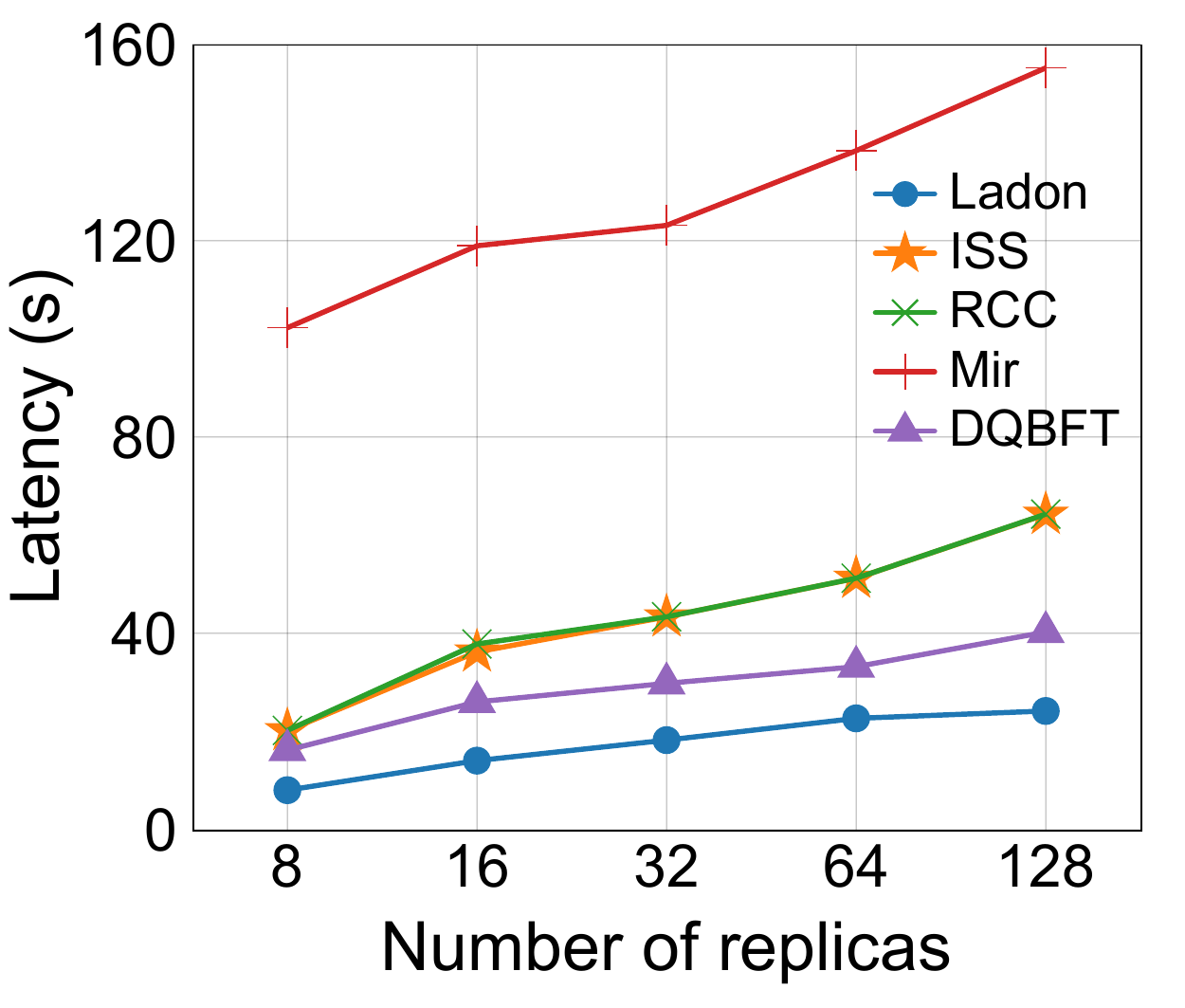}
        \caption{\#Straggler=1, WAN}
        \label{fig:wan3}
    \end{subfigure}
    \hfill

    \begin{subfigure}[t]{0.24\textwidth}
        \centering
        \includegraphics[width=\textwidth]{ 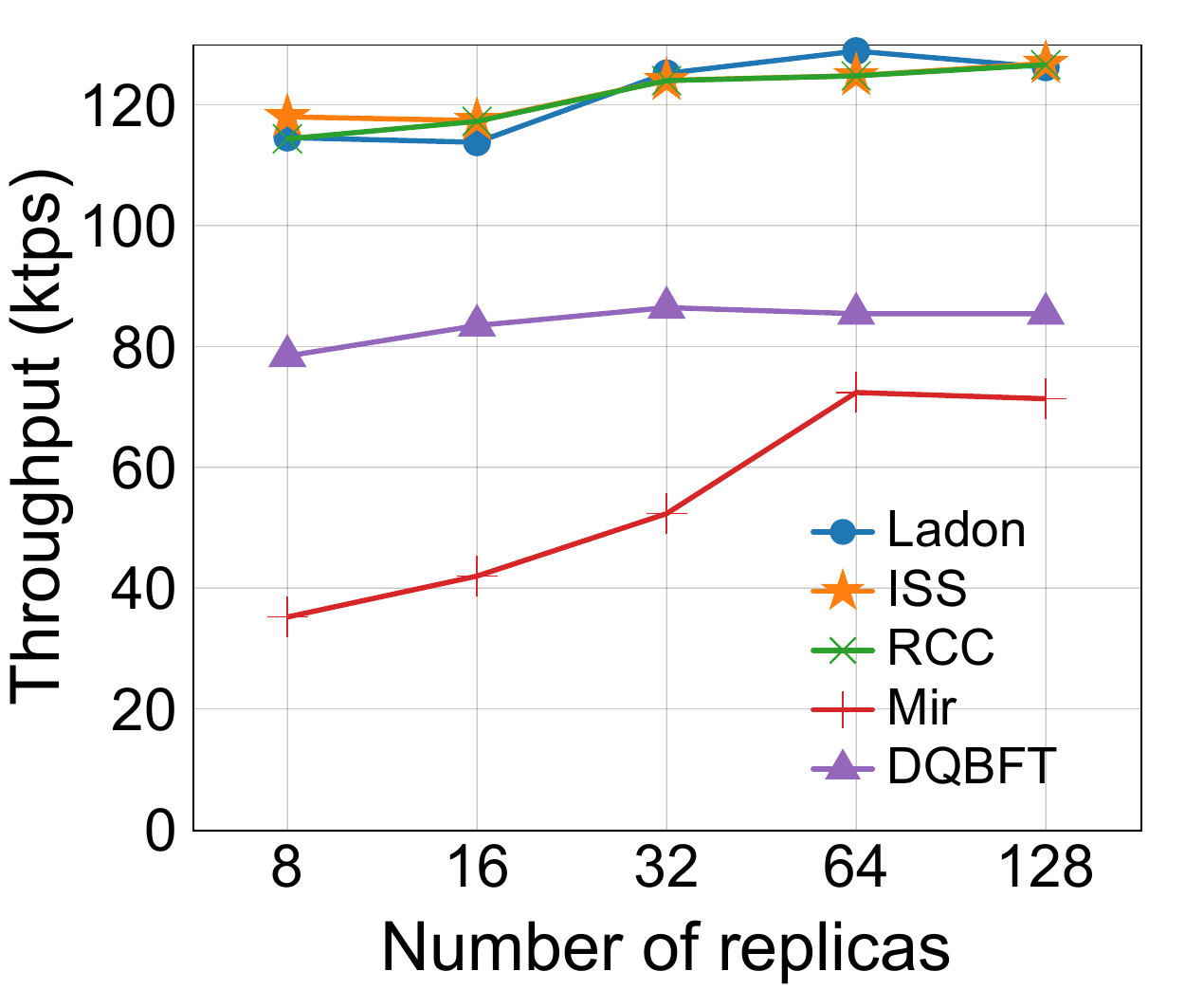}
        \caption{\#Straggler=0, LAN}
        \label{fig:lan2}
    \end{subfigure}
    \hfill
\begin{subfigure}[t]{0.24\textwidth}
        \centering
        \includegraphics[width=\textwidth]{ 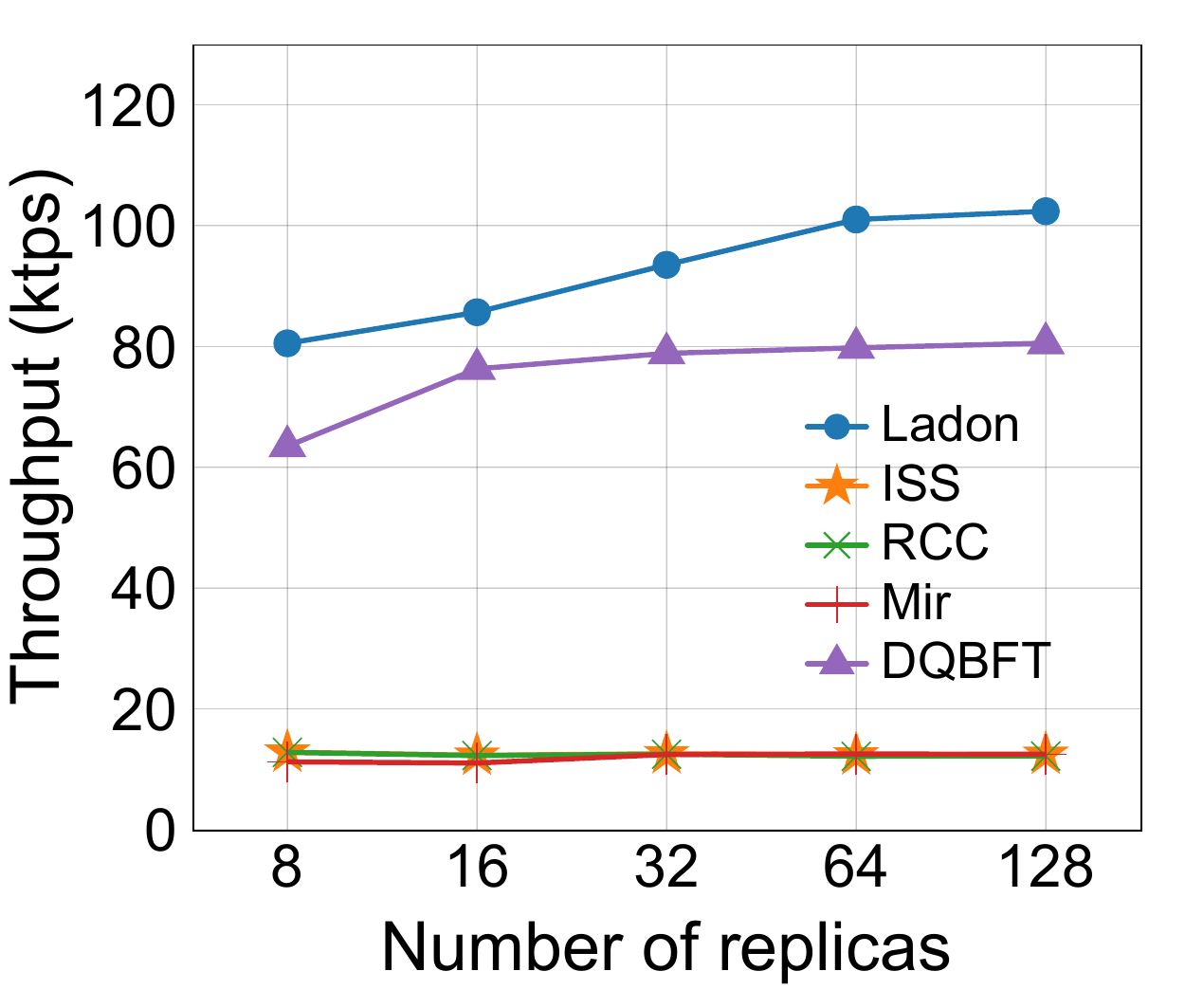}
        \caption{\#Straggler=1, LAN}
        \label{fig:lan1}
    \end{subfigure}
    \hfill
    \begin{subfigure}[t]{0.24\textwidth}
        \centering
        \includegraphics[width=\textwidth]{ 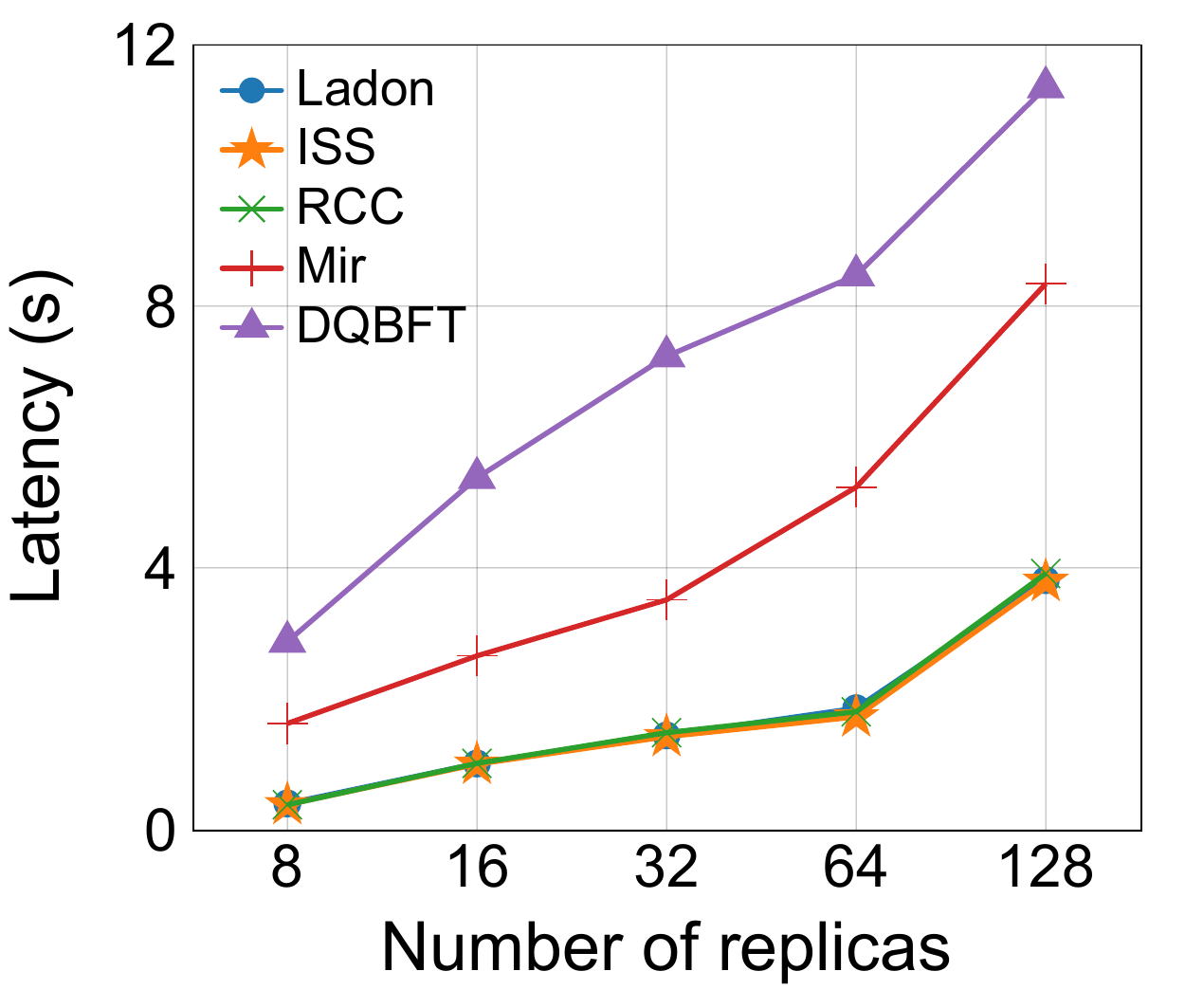}
        \caption{\#Straggler=0, LAN}
        \label{fig:lan4}
    \end{subfigure}
    \hfill
    \begin{subfigure}[t]{0.24\textwidth}
        \centering
        \includegraphics[width=\textwidth]{ 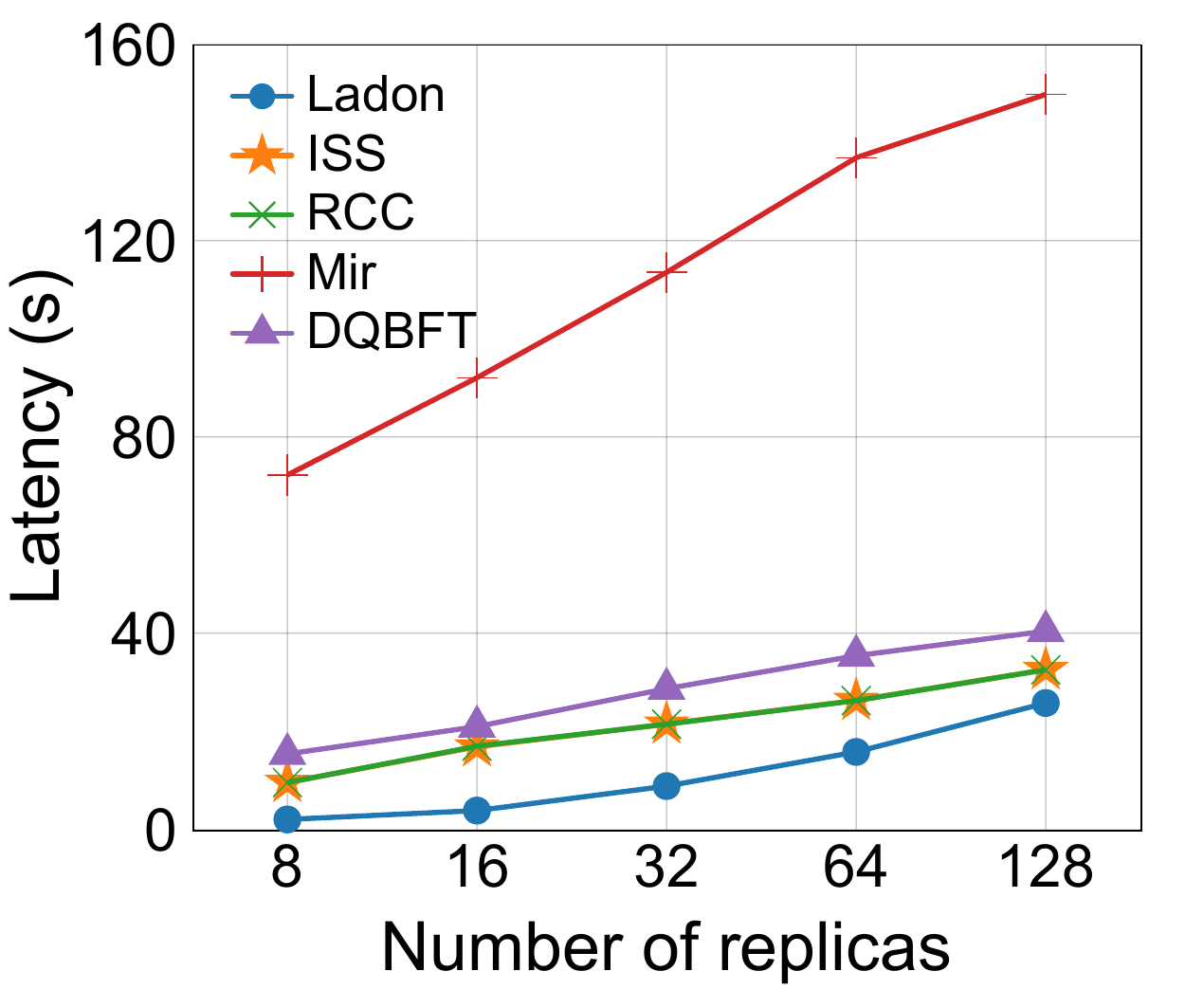}
        \caption{\#Straggler=1, LAN}
        \label{fig:lan3}
    \end{subfigure}

    \caption{\textbf{Throughput and latency of \sysname, ISS, RCC, Mir, and DQBFT  in WAN (a) -- (d), and LAN (e) -- (h).}}
    \label{fig:performance}
\end{figure*}

\subsection{Experimental Setup}\label{sec:expset}
\bheading{Deployment settings.} 
%\iv{For each condition, how many experimental runs do you perform? Do you plot the median or the average?}
We deploy all systems on AWS EC2 machines with one c5a.2xlarge instance per replica. All
processes run on dedicated virtual machines with 8vCPUs and 16GB RAM running Ubuntu Linux 22.04.
We conduct extensive experiments of \sysname in LAN and WAN environments. For LAN, each machine is equipped with one private network interface with a bandwidth of 1Gbps. For WAN, machines span 4 AWS data centers across France, America, Australia, and Tokyo. We distribute the replicas evenly across the four regions. %Table~\ref{tab:layout} shows the latency on the network layout. \niu{put them at the appendix?}\hz{will rearrange this section later}
Each machine is equipped with a public and a private network interface. We limit the bandwidths of both to 1 Gbps. For WAN experiments, we use the public interface for client transactions and the private interface for BFT consensus. We use NTP for clock synchronization across the servers. We conduct 5 experimental runs for each condition and plot the average value.

\bheading{System settings.} 
%\iv{Make sure to mention any Mir, DQBFT, and HotStuff settings here.}
%Table~\ref{tab:configuration} summarizes the set of parameters in our evaluation. 
To ensure a fair comparison, we employed the same system configuration across all the protocols (\sysname, ISS, RCC, Mir,  and DQBFT). 
Each transaction carries a 500-byte payload, which is the same as the average 
transaction size in Bitcoin~\cite{nakamoto2012bitcoin}.
Each replica operates as a leader for one instance, and as backup replicas for other instances, \ie, $m = n$. 
{We follow ISS by limiting the total block rate (number of blocks proposed by all leaders each second) to $16~blocks/s$ in WAN and $32~blocks/s$ in LAN. This prevents the leader from trying to propose too many batches in parallel, which triggers a view change timeout. However, this constraint leads to a higher end-to-end latency as we increase the number of nodes.}
%This measure limits peak throughput but is effective at protecting against unnecessary view changes. 
%Specifically, the batch rate is set to $8~batches/s$.
%In HotStuff, We allow the leader to send proposals as fast as possible.
We allow a large batch size of $4096$ transactions.
The epoch length is fixed at $l(e)=64$ for both protocols. 
% since a shorter epoch length reduces latency in case a fault occurs but increases latency due to epoch change.

{While we acknowledge that the above setting may not be exhaustive or optimal, conducting exhaustive experiments to identify the optimal configuration exceeds the scope of this work. However, our chosen parameters enable us to demonstrate that \sysname outperforms other protocols in the presence of stragglers while introducing minimal overhead. This is a key contribution of our work. Here, we focus on discussing the most critical configuration parameters.}

\bheading{Straggler settings.} 
{We simulate two types of stragglers. The first type, evaluated in \secref{sec:expperformance} and \secref{sec:expsecurity}, are honest stragglers. They follow the ISS protocol, delaying proposals for a set period without triggering timeouts and not including transactions in their blocks. Stragglers are randomly selected, with proposal rates (number of blocks proposed by a leader each second) fixed to $1/k$ of normal leaders, where $k$ is a parameter.
The second type, evaluated in \secref{sec:expstraggler}, are Byzantine stragglers. These behave like honest stragglers and also manipulate rank selection by collecting more than $2f+1$ ranks, discarding the higher ranks, and using the lowest $2f+1$ ranks before proposing a new block.}
%For example, in a WAN environment with 8 replicas, honest replicas propose a new block every $0.5s$, and the straggling leaders propose a new block every $5s$ (\ie, $k = 10$). 

\subsection{Failure-Free Performance}\label{sec:expperformance}
We evaluate the performance of \sysname and its counterparts under two conditions: with one honest straggler and without stragglers in both WAN and LAN environments. We also evaluate the performance of \sysname and its counterparts  with a varying number of  honest stragglers in WAN.
In this section, we adopt $k=10$ for stragglers.
We measure the peak throughput in kilo-transactions per second (ktps) before reaching saturation along with the associated latency in second (s) in \secref{sec:expwan} and \secref{sec:explan}, and analyze the CPU and bandwidth usage in \secref{exp:cpu}. 
We define the throughput and latency as follows: 1) throughput: the number of transactions delivered to clients per second, and 2) latency: the average end-to-end delay from the time that clients submit transactions until they receive $f+1$ responses. 

%We let the straggler leaders not add transactions in its proposals (\ie, propose empty blocks) to harm throughput and latency. 
%
%\iv{The last sentence is confusing -- do you mean the straggling leader does no work? In straggler settings it says that stragglers are slow, but still continue to push out blocks. Is that not the case?}

\subsubsection{Performance in WAN}\label{sec:expwan}
\figref{fig:wan2} shows the throughput of each protocol without stragglers with varying numbers of replicas. 
\sysname demonstrates a comparable throughput with ISS and RCC, with a minimal difference of approximately 1\% on 128 replicas, which demonstrates that \sysname only incurs minimal overhead. Furthermore, we notice that \sysname consistently outperforms Mir and DQBFT. 
\figref{fig:wan1} shows the throughput with one honest straggler.  The plot illustrates the superior throughput of \sysname, with $9.1\times$, $9.4\times$, and  $9.6\times$ of ISS, RCC, and Mir, respectively, on 128 replicas. This is because dynamic global ordering mitigates the performance degradation caused by stragglers that are prevalent in pre-determined global ordering schemes.
DQBFT, another dynamic ordering protocol, shows comparable throughput with \sysname initially, but its throughput declines as the number of replicas increases.

Comparing \figref{fig:wan2} and \figref{fig:wan1}, it is evident that the throughput of pre-determined global ordering protocols (\ie, ISS, RCC, and Mir) significantly drops by 89.9\%, 90.1\%, and 84.1\% on 128 replicas, respectively, in the presence of one straggler. In contrast, dynamic global ordering protocols (\ie, \sysname and DQBFT) are less affected by stragglers, with throughput drops only by 9.3\% and 17.3\% on 128 replicas, respectively. 

\figref{fig:wan4} and \figref{fig:wan3} show the latency for each protocol without stragglers and with one straggler.  With one straggler, the latency of \sysname and DQBFT is $2.3\times$ and $2.2\times$ of that with no stragglers on 128 replicas, respectively. In contrast, the latency of ISS, RCC, and Mir with one straggler increases significantly compared to that with no stragglers, achieving $7.6\times$, $7.4\times$ and $6.6\times$ on 128 replicas, respectively.
Without stragglers, \sysname's latency is 22.6\% and 18.5\% higher compared to ISS and RCC on 128 replicas while much lower than Mir and DQBFT. With one straggler, \sysname shows the lowest latency. 
The latency of all protocols increases as the number of replicas grows. This phenomenon arises from our decision to maintain a fixed total block rate. Consequently, with more replicas, the time interval for each replica to propose a block becomes longer.

%\subsection{performance with varying number of stragglers} 
\begin{figure}[t]
	\centering
    \hfill
    \begin{subfigure}[t]{0.23\textwidth}
        \centering
        \includegraphics[width=\textwidth]{ 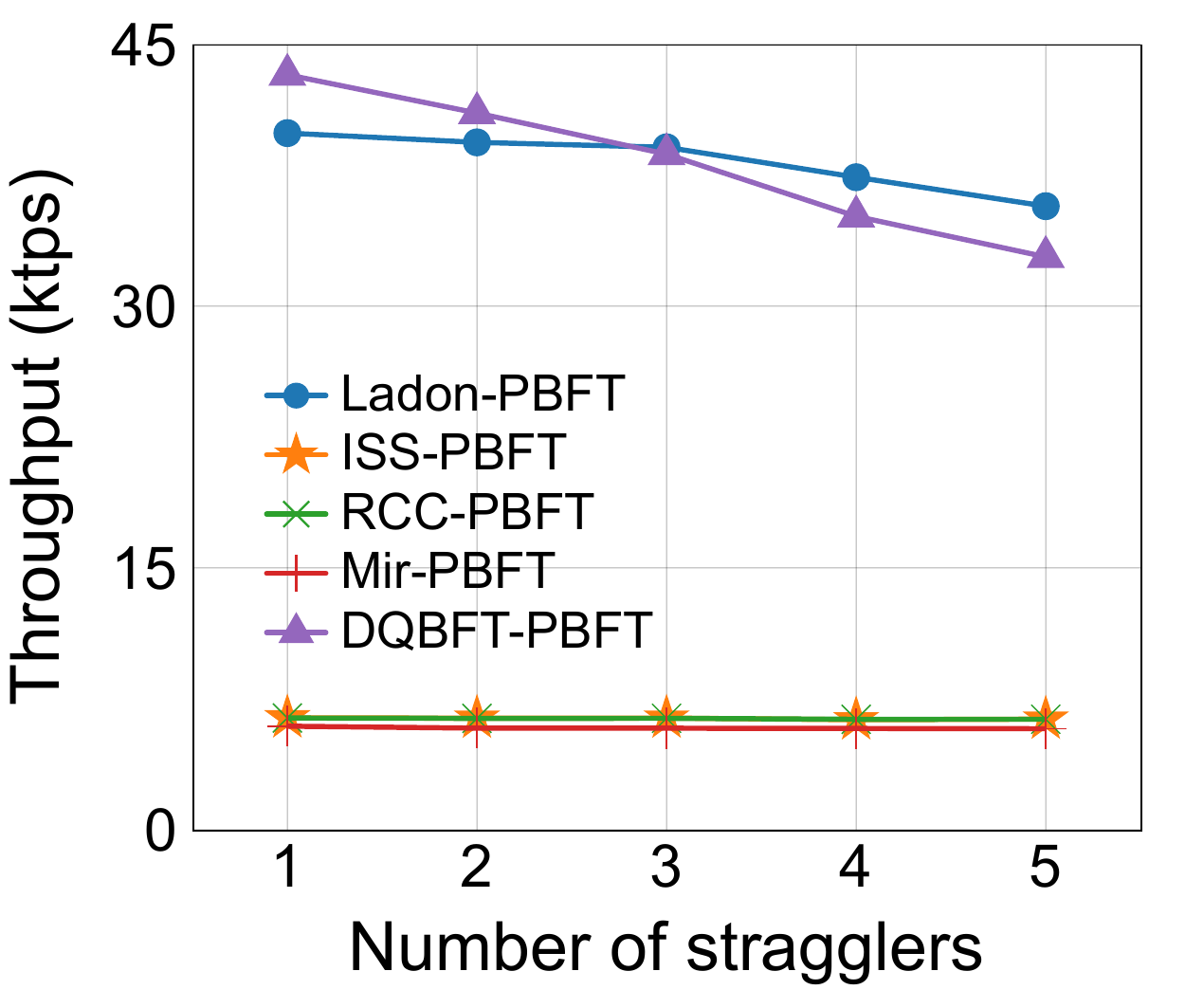}
        \caption{Throughput}
        \label{fig:throughputstragglers}
    \end{subfigure}
    \hfill
    \begin{subfigure}[t]{0.23\textwidth}
        \centering
        \includegraphics[width=\textwidth]{ 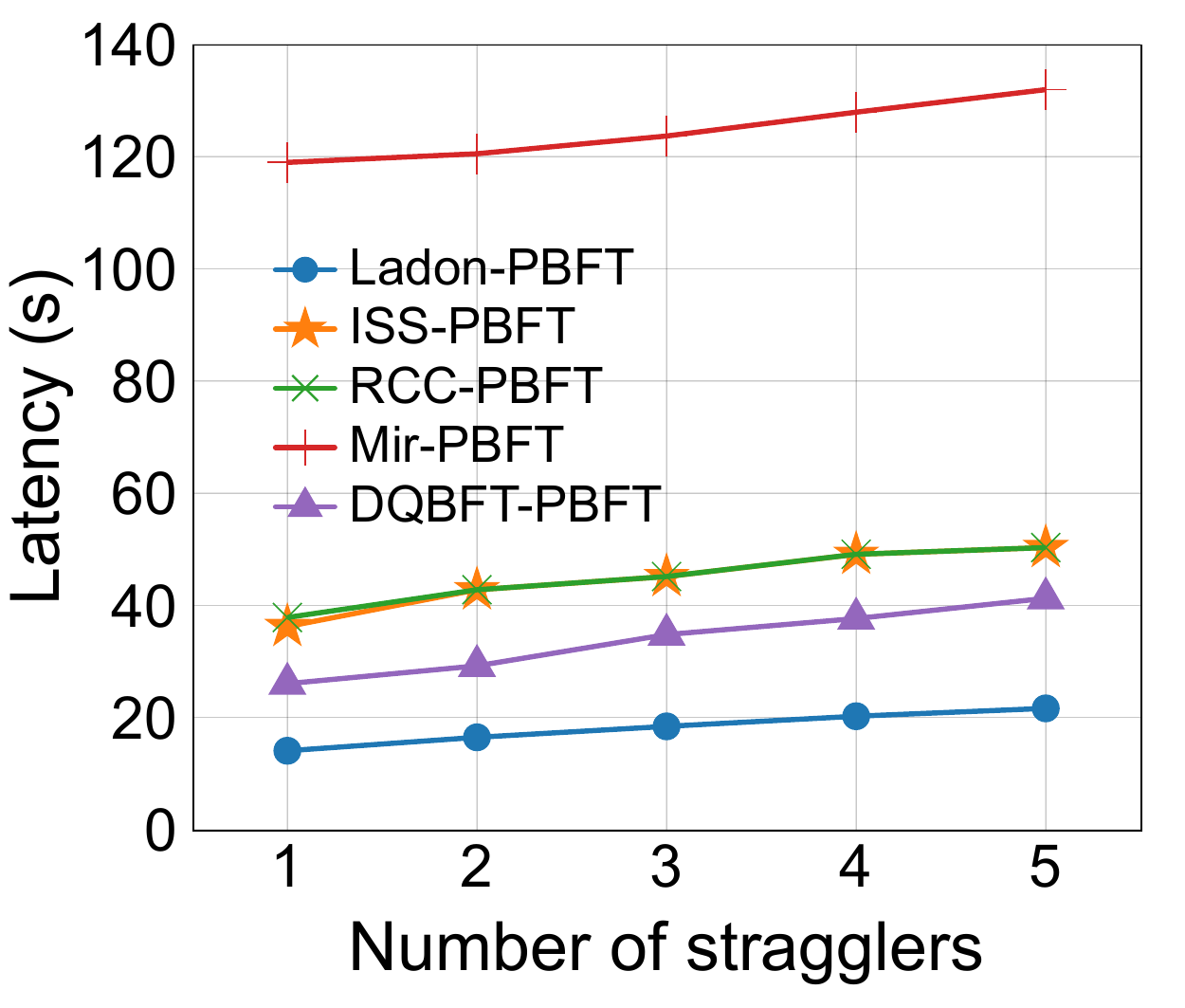}
        \caption{Latency}
        \label{fig:latencystragglers}
    \end{subfigure}
    \caption{\textbf{Throughput and latency of \sysname-PBFT and other protocols with a varying number of stragglers.}}
	\label{fig:12345s}
\end{figure}

\figref{fig:12345s} evaluates the performance of each protocol with a varying number of stragglers in WAN. We use 16 replicas in these experiments and vary the number of stragglers from 1 to 5. 
\figref{fig:throughputstragglers} shows that the throughput of \sysname, ISS,  RCC, Mir, DQBFT drop by 10\%, 1\%, 1\%, 2\% and 24\%, from one straggler to 5 stragglers, respectively. 
In \figref{fig:latencystragglers}, the latency increases slightly for all protocols. 
From \figref{fig:12345s}, we observe the robustness of these protocols against the rise in straggler count. The throughput and latency largely remain steady despite the increasing number of stragglers. 
This is because the system performance is limited by the slowest straggler, as discussed in \secref{subsec: straggler}.

\subsubsection{Performance in LAN}\label{sec:explan}

We evaluate the throughput and latency of \sysname and other protocols without stragglers and with one honest straggler in a LAN environment. The results are shown in \figref{fig:lan2}$-$\figref{fig:lan3}. All protocols exhibit similar performance trends to those observed in the WAN environment (\figref{fig:wan2}$-$\figref{fig:wan3}), with higher throughput and reduced latency.
Specifically, \sysname shows comparable performance with ISS and RCC without stragglers and always outperforms other protocols with one straggler.

\subsubsection{CPU and Bandwidth Analysis}\label{exp:cpu} 
%\niu{Mark: below are not revised yet}
To delve deeper into the performance bottlenecks of the protocols, we conducted an assessment of the CPU and bandwidth usage for both ISS and \sysname. The summarized results for 32 replicas in a WAN (16 blocks/s) and LAN (32 blocks/s) are presented in Table~\ref{tab:cpu}.  
Our findings indicate that neither ISS nor \sysname is constrained by CPU resources, as the maximum CPU usage possible is 800\%, given that each instance is equipped with 8 vCPUs. Without stragglers, \sysname exhibits comparable bandwidth usage to ISS.
With one straggler, \sysname generally experiences higher network bandwidth consumption and CPU usage compared to ISS. 

\begin{figure}[t]
	\centering
    \hfill
    \begin{subfigure}[t]{0.23\textwidth}
        \centering
        \includegraphics[width=\textwidth]{ 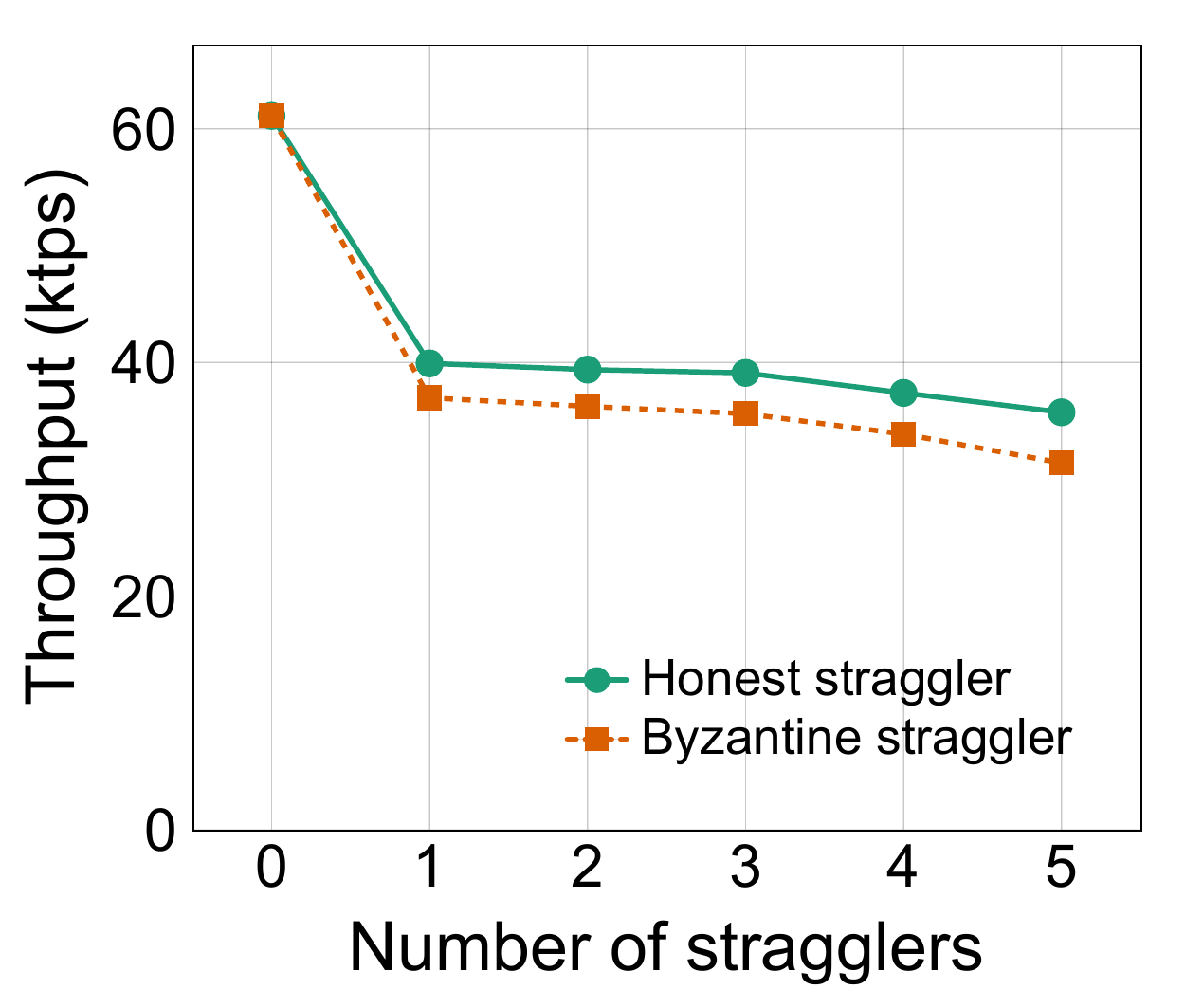}
        \caption{Throughput}
        \label{fig:byzantinestragglerthroughput}
    \end{subfigure}
    \hfill
    \begin{subfigure}[t]{0.23\textwidth}
        \centering
        \includegraphics[width=\textwidth]{ 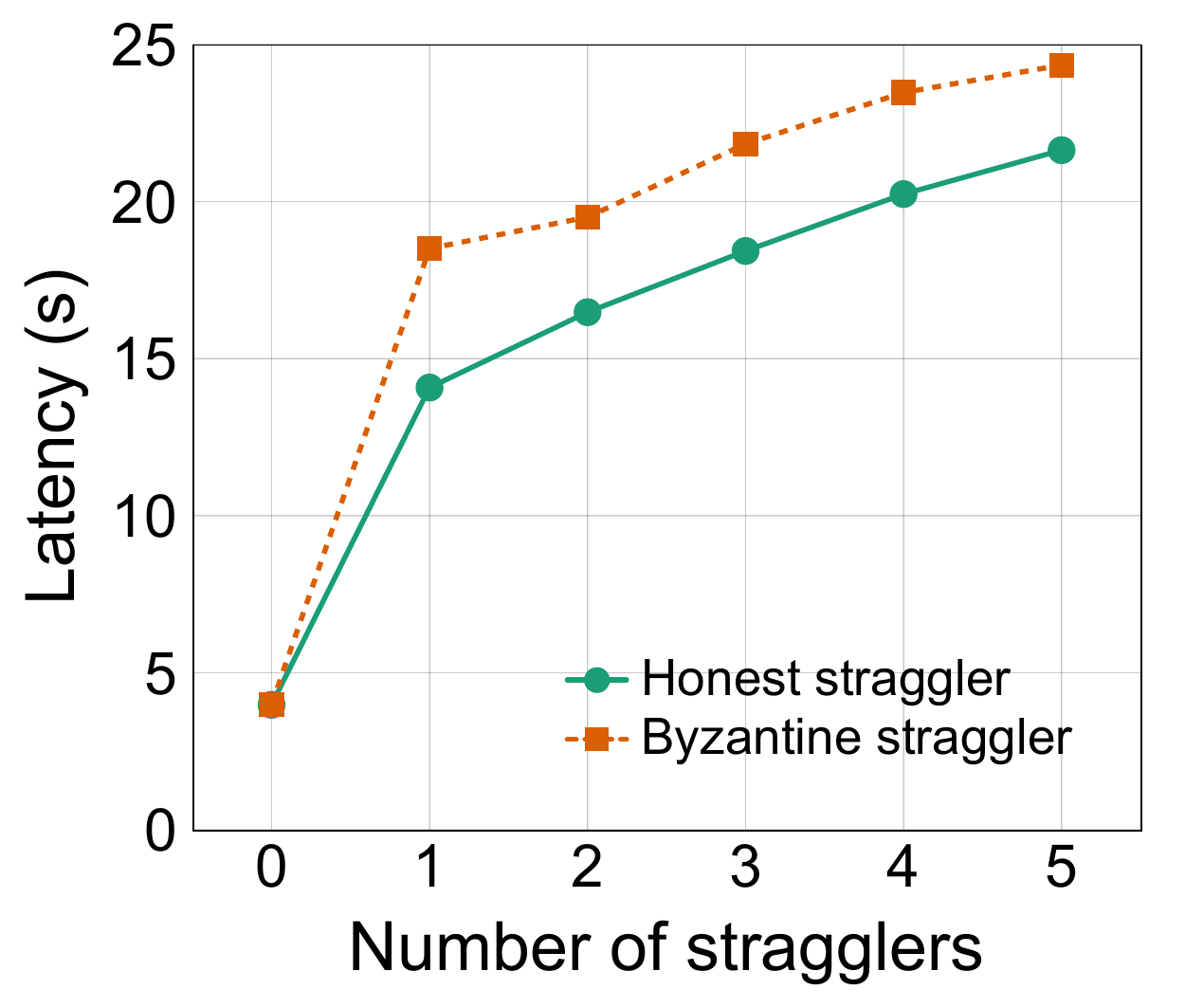}
        \caption{Latency}
        \label{fig:byzantinestragglerlatency}
    \end{subfigure}
    \caption{\textbf{Throughput and latency of \sysname with honest and Byzantine stragglers.}}
	\label{fig:byzantinestraggler}
\end{figure}
\begin{table}[t]
    \centering
    \caption{\textbf{CPU and bandwidth usage of \sysname and ISS. Maximum CPU usage possible is 800\%.}}
    \begin{tabular}{@{}lcccc@{}}\toprule
        Protocols \& Settings & Env & Block rate &CPU &Bandwidth\\ \midrule
        \multirow{2}{*}{ISS-0-stragglers} & WAN & 16 b/s & 319\% &85MB/s   \\
         & LAN & 32 b/s & 566\% &  160MB/s\\  \midrule
        \multirow{2}{*}{ISS-1-straggler}&  WAN & 16 b/s & 132\% & 25MB/s  \\
        & LAN & 32 b/s & 292\% & 77MB/s  \\
        \midrule
        \multirow{2}{*}{\sysname-0-stragglers} &  WAN & 16 b/s & 350\% &99MB/s   \\
         & LAN & 32 b/s & 591\% &  175MB/s\\  \midrule
        \multirow{2}{*}{\sysname-1-straggler} &  WAN & 16 b/s & 195\% & 54MB/s  \\ 
        &  LAN & 32 b/s & 432\% & 121MB/s  \\ 
        \bottomrule
    \end{tabular}
    \label{tab:cpu}
\end{table}

\begin{table*}[t]
    \centering
        \caption{\textbf{Causal Strength ($CS$) of different protocols for different numbers of stragglers and proposal rates.}
    %\iv{Could we say that the results are similar for the protocols running HotStuff consensus instances?}Yes
    %\iv{I would use 2 decimal places and scientific notation to help the reader focus on the big picture. e.g., convert 0.182756 into 0.18 and convert 0.000402 into 4.02 x $10^{-4}$. The current scientific notations aren't actually scientific notations since they should include one number above the decimal place. Finally, to make Cerberus more obvious, I would turn $1$ into $1.0$.}
    }
    \label{tab:csn}
    \vspace{-0.5em}
    \setlength{\abovecaptionskip}{0pt} 
    \resizebox{\textwidth}{!}{%
    \begin{tabular}{@{}lcccccccccc@{}}\toprule
        \textbf{Stragglers} & \multicolumn{5}{c}{\textbf{\# Stragglers}} & \multicolumn{5}{c}{\textbf{Proposal rate (blocks/s)}} \\ \cmidrule(r){2-6} \cmidrule(r){7-11}
        \textbf{settings} & {1}  & {2} & {3} & {4} & {5} & {0.5}  & {0.4} & {0.3} & {0.2}& {0.1} \\ \midrule
        \textbf{Mir} & 0.154 & 0.042 & 0.012 & 0.004 & 0.002 & 0.241 & 0.204 & 0.174 & 0.148 & 0.154 \\ 
        \textbf{ISS} & $1.04 \times 10^{-5}$ & $3.42{\times 10^{-7}}$ & $6.79{\times 10^{-10}}$ &  $1.75{\times 10^{-13}}$ & $1.83{\times 10^{-16}}$ & ${0.078}$ & {$4.73\times 10^{-3}$} & 1.36$\times 10^{-4}$ & 7.28$\times 10^{-5}$ & 1.04$\times 10^{-5}$ \\
        \textbf{RCC} & $8.45 \times 10^{-6}$ & $2.48{\times 10^{-7}}$ & $5.44{\times 10^{-10}}$& $9.87{\times 10^{-14}}$ &  $1.18{\times 10^{-16}}$ & $0.076$ & {$2.49\times 10^{-3}$} & $9.88\times 10^{-5}$ & 5.07$\times 10^{-5}$ & 8.45$\times 10^{-6}$ \\
        \textbf{DQBFT} & $1.15 \times 10^{-5}$ & $2.17{\times 10^{-7}}$ & $9.74{\times 10^{-10}}$& $1.02{\times 10^{-13}}$ &  $6.85{\times 10^{-17}}$ & $0.044$ & {$7.51\times 10^{-3}$} & 9.15$\times 10^{-4}$ & 4.35$\times 10^{-4}$ & 1.15$\times 10^{-5}$ \\
        %\textbf{ISS-HotStuff} & $1.49\times 10^{-3}$ & $5.55\times 10^{-6}$ & $3.49\times 10^{-8}$ & $1.43\times 10^{-10}$ & $1.49\times 10^{-12}$ & 0.051 & 0.005 & 0.004 & 0.001 & 0.001\\
        \midrule
        \textbf{\sysname} & 1.0 & 1.0 & 1.0 & 1.0 & 1.0 & 1.0 & 1.0 & 1.0 & 1.0 & 1.0\\
        %\textbf{\sysname-HotStuff} & 1.0 & 1.0 & 1.0 & 1.0 & 1.0 & 1.0 & 1.0 & 1.0 & 1.0 & 1.0\\
        \bottomrule
    \end{tabular}}
\end{table*}

\subsection{Performance Under Faults}\label{sec:expstraggler}
 
In this section, we study the performance of \sysname under Byzantine stragglers and crash faults in a WAN of 16 replicas. 

\subsubsection{Byzantine Stragglers}\label{subsec:expstraggler}
We study the impact on throughput and latency of Byzantine stragglers, with $f = 1$ up to the maximum tolerated number of $f = 5$ stragglers.
\figref{fig:byzantinestraggler} shows the impact of an increasing number of stragglers. \sysname with Byzantine stragglers reaches $\approx 90\%$ of the throughput with honest stragglers. 
The latency increases by 12.5\%  with 5 Byzantine stragglers compared to the latency with 5 honest stragglers.
These results indicate that the impact of Byzantine stragglers on the system's performance is only slightly more pronounced than that of honest stragglers. This is because the manipulation of ranks by Byzantine stragglers is limited, as discussed in~\secref{subsec:byzantinestraggler}.

\begin{figure}[t]
	\centering 
        \includegraphics[width=0.49\textwidth]{ 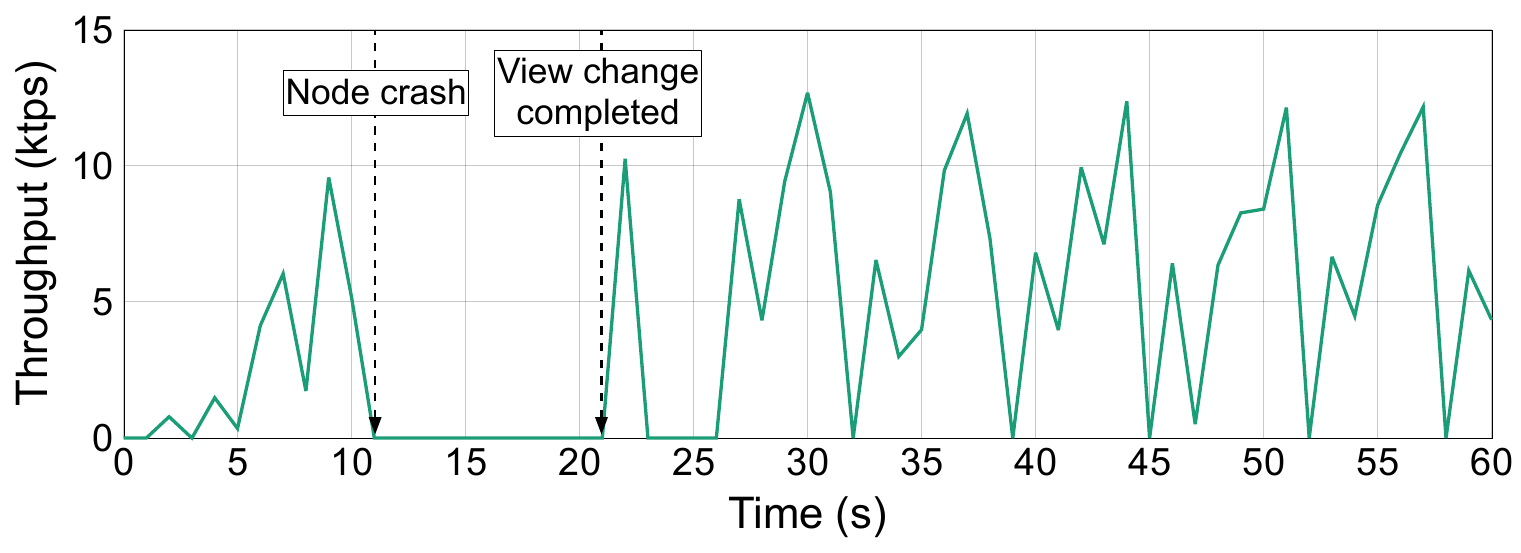}
    \caption{\textbf{\sysname's throughput average (over 1s intervals) over time with one crash fault. The crash is at 11s, and the view change is completed at 21s.}}
	\label{fig:crash}
\end{figure}

\subsubsection{Crash Failures}
We study how crash faults affect the throughput of \sysname. The PBFT view change timeout is set at $10$ seconds. \figref{fig:crash} shows the throughput average over time. A crash occurs in the first epoch at 11 seconds, causing the throughput to drop to 0. The view change process is initiated to handle the crash fault and is completed at 21 seconds, at which point the throughput begins to recover.  This delays the epoch change. A new epoch starts at 26 seconds. Each subsequent short drop to 0 in throughput corresponds to an epoch change.

\subsection{Causality Evaluation}\label{sec:expsecurity}
We first define a metric, Inter-block Causal Strength ($CS$) and then use it to evaluate the causality of \sysname.

\bheading{Inter-block Causal Strength $CS$.} Assume that a series of $n$ blocks $\{B_1$, $B_2$, $...$, $B_n\}$ has been globally confirmed. For any $i<j$, if $B_i$ is generated after $B_j$ is committed by $f+1$ replicas, we say a causality violation has occurred. The number of causality violations is denoted as $N$, and 
$CS = e^{-N/n}$.
%$CS =1 - 2N/n(n-1)$.

The Inter-block Causal Strength ($CS$) is defined as a measure to evaluate the strength of causality of a system, with a value in (0, 1]. The closer the CS score is to 1, the stronger the system's causal property. A CS score of 1 implies that no one can front-run a partially committed block as discussed in~\secref{subsec:causality}.  On the other hand, as the number of causality violations increases, the system's causality weakens, driving the CS score closer to 0.

%\bheading{Attack strategy.} The adversary delays the instances with smaller indexes to maximize the causality violation. Here, we change the proposal delay of the adversary to study its impact on causality.

\bheading{Results.} 
We evaluate the $CS$ by varying the number of stragglers with a fixed proposal rate of 0.1 blocks/s, and by varying the proposal rate of a single straggler. We conducted experiments in a WAN environment with 16 replicas. 
We show results for \sysname-PBFT (short for \sysname in the following), which are similar to those of \sysname-HotStuff.
%and omit results for HotStuff instances since the results are similar.

Table~\ref{tab:csn} shows that $\sysname$ always exhibits strong inter-block causality across all straggler settings.
By contrast, Mir, ISS, RCC, and DQBFT display a diminishing $CS$ with an increasing number of stragglers, reflecting a weakening of the system's causal property. The $CS$ of ISS, RCC, and DQBFT progressively decreases with the proposal rate.
This trend suggests that these protocols are susceptible to the presence of stragglers, which weaken their causal properties.
 
We attribute the difference in causal strengths in Table~\ref{tab:csn} to the different ordering mechanisms. Traditional BFT protocols utilize predetermined ordering in which even slow instances with straggling leaders might receive a global ordering for blocks earlier than other blocks in faster instances. This practice leads to causal violations, as it does not ensure that the global order aligns with the actual generation sequence, which is discussed in~\secref{subsec:causality}. Although DQBFT centralizes the global ordering, it fails to consider the causality between blocks.
By contrast, \sysname employs a dynamic ordering mechanism that respects the causality inherent in block generation. This design ensures that the global order of blocks corresponds to their actual generation sequence. 
% its consistently maximum Causal Strength score under all tested conditions. 
% \iv{It would help to have a sentence here explaining why this happens. In general, try to explain the why behind the results, don't merely state numbers or trends. The why is what the reviewer will be wondering about.}

%These results underscore the superior performance and resilience of the $\sysname$-PBFT protocol in maintaining causality with multiple stragglers and high proposal delay.

\section{Related Work} \label{sec:related} 
Existing leader-based BFT protocols, such as Zyzzyva~\cite{kotla2007zyzzyva}, PBFT~\cite{pbft1999}, and HotStuff~\cite{hotstuff}, suffer from the leader bottleneck.
Many approaches to scaling leader-based BFT consensus have been proposed. These can be divided into three classes: parallelizing consensus, reducing committee size, and optimizing message transmission.

\bheading{Parallelizing consensus.} The idea in these approaches is that every replica acts as the leader to propose blocks, making all replicas behave equally. A representative method of this approach is Multi-BFT consensus~\cite{MIR-BFT, stathakopoulou2022state, gupta2021rcc}, which runs several consensus instances in parallel to handle transactions. Stathakopoulou \etal~\cite{MIR-BFT} propose Mir-BFT, in which a set of leaders run the BFT protocol in parallel. Each leader maintains a partial log, and all instances are eventually multiplexed into a global log.  To prevent malicious leaders, an epoch change is triggered if one of the leaders is suspected of failing. Byzantine leaders can exploit this by repeatedly ending epochs early to reduce throughput. Later, ISS~\cite{stathakopoulou2022state} improved on Mir-BFT, by allowing replicas to deliver $\perp$ messages and instances to make progress independently. This improved its performance in the presence of crash faults. RCC~\cite{gupta2021rcc} is another Multi-BFT protocol that operates in three steps: concurrent Byzantine commit algorithm (BCA), ordering, and execution. RCC adopts a wait-free mechanism to deal with leader failures, which does not interfere with other BCA instances. DQBFT~\cite{dqbft} adopts a special order instance to globally order the output transactions from other parallel instances. Nonetheless, the system performance undergoes a significant decline if the ordering instance has a straggling leader, and the centralization of the ordering process makes it a prime target for attacks. Multi-BFT consensus is simple and has high performance in ideal settings. However, as analyzed in \secref{sec:back&motivation}, Multi-BFT systems suffer from severe performance issues.

Similar to Multi-BFT consensus, DAG protocols~\cite{Narwhal-HotStuff, DAG-rider, Bullshark} also parallelize consensus instances to improve the system scalability. However, in DAG protocols, each block has to contain at least $2f+1$ references of predecessor blocks, instead of one reference in Multi-BFT consensus.  
We deliberately refrain from comparing our work with DAG-based systems due to fundamental differences in network assumptions and architectures, aligning with the standard practice in Multi-BFT research.

\bheading{Reducing committee size.} 
Another approach to improve BFT performance is to reduce the number of consensus participants, avoiding the leader bottleneck in large-scale settings. 
The representative solution is to randomly select a small group of replicas as the subcommittee, who are responsible for validating and ordering transactions. This solution has been developed by Algorand~\cite{Gilad2017}. 
The sharding approach takes one step further and divides replicas into multiple disjoint subcommittees. Subcommittees run BFT protocols in parallel to process clients' transactions, improving on the efficiency of a single subcommittee. 
Many BFT sharding protocols, such as Elastico~\cite{Luu2016}, OmniLedger~\cite{omniledger} and RapidChain~\cite{zamani2018rapidchain}, have been proposed. 
However, subcommittees lower the system's tolerance to Byzantine replicas (\eg, tolerating $25\%$ Byzantine replicas in Algorand rather than $33\%$).
The designs of these systems, especially state synchronization between subcommittees, are also more complex.  
Reducing committee size can significantly improve the scalability of BFT systems, however, it also weakens their fault tolerance and increases complexity.

\bheading{Optimizing message transmission.}
Substantial work has also gone into improving network utilization.
% The key insight behind this approach is to balance the transmission workloads between the leader and other replicas, making the best utilization of their network bandwidth. 
This line of work can be categorized according to the message transmission topology: structured and unstructured. 
A representative solution using unstructured transmission topology is gossip~\cite{buchman2018latest}, in which a replica sends its messages to some randomly sampled replicas.
Gossip has been used in Tendermint~\cite{Buchman2016TendermintBF}, Gosig~\cite{li2020gosig}, and Stratus~\cite{gai2023scaling}, which can remove the leader bottleneck in large-scale settings. 
By contrast, in a structured topology, each replica sends its messages to a fixed set of replicas. 
For example, in Kauri~\cite{10.1145/3477132.3483584}, replicas disseminate and aggregate messages over trees, and in Hermes~\cite{9543565} the leader sends blocks to a committee, which helps to relay the block and vote messages.
These approaches work well at scale (e.g., thousands of replicas) and have a high overhead at smaller scales, which are the focus of our work.

\section{Conclusion} \label{sec:conclusion}
We propose \sysname, a Multi-BFT protocol that mitigates the impact of stragglers on performance.
We propose dynamic global ordering to assign a global ordering index to blocks according to the real-time status of all instances, which differs from prior work using pre-determined global ordering. We decouple the dependencies between the various partial logs to the maximum extent to ensure a fast construction of the global log. We also design monotonic ranks, pipeline their distribution and collection with the consensus process, and adopt aggregate signatures to reduce rank data. We build \sysname-PBFT and \sysname-HotStuff prototypes, conducting comprehensive experiments on Amazon AWS to compare them with existing Multi-BFT protocols.
Our evaluation shows that \sysname has significant advantages over ISS, RCC, Mir, and DQBFT in the presence of stragglers. 
%For example, \sysname-PBFT improves on ISS-PBFT with a $6$x increase in throughput and a 95\% reduction in latency when deployed in a WAN environment with 32 replicas, one of which is a straggler.

% \section*{Acknowledgement}
% We thank our shepherd Bernard Wong and the anonymous
% reviewers for their feedback. 
% We thank Chrysoula Stathakopoulou, Matej Pavlovic and Marko Vukoli\'{c} for their work~\cite{stathakopoulou2022state}. Their research has provided an essential foundation for the experiments and analyses presented in this paper.
% This work was supported by xxx project.
%the xxx.
%was partially supported by the xxx under project xxx.

%\clearpage

%\normalem
\bibliographystyle{plain}
\bibliography{main}

%\clearpage
\appendix

\section{Message Complexity Analysis}\label{appen: optimization}
We focus on the message complexity of \sysname-PBFT and \sysname-opt in the normal case operations because the introduced dynamic global ordering does not affect the view-change mechanism. 
%Traditional multi-BFT approaches maintain the original consensus protocol, resulting in identical message complexities. 
The message complexity measures the expected number of messages that honest replicas generate during the protocol execution~\cite{guo2020dumbo}.
Specifically, we take PBFT as a baseline to illustrate the overhead of dynamic global ordering in \sysname. 
%Specifically, we show the complexity of each phase in the consensus process. by following HotStuff~\cite{hotstuff}.

\iheading{1) PBFT.} In the \textsc{Pre-prepare} phase, an honest leader broadcasts a message to all backup replicas, resulting in a complexity of $O(n)$. In the \textsc{Prepare} and \textsc{Commit} phases, which necessitate all-to-all communications among honest nodes, the complexity escalates to $O(n^2)$. Therefore, PBFT exhibits a message complexity of $O(n^2)$.

\iheading{2) \sysname-PBFT.} We analyze the message complexity of a consensus instance using PBFT in \sysname-PBFT. 
In the \textsc{Pre-prepare} phase, messages include a list containing $2f+1$ rank information, resulting in a $O(n^2)$ complexity. The \textsc{Prepare} phase aligns with PBFT, also reflecting $O(n^2)$ complexity. However, in the \textsc{Commit} phase, each backup communicates its highest rank to the leader, resulting in an additional all-to-one communication with a complexity of $O(n)$. Consequently, the \textsc{Commit} phase manifests a combined complexity of $O(n^2+n) \approx O(n^2)$. 

%Furthermore, \sysname-PBFT introduces an optimization leveraging aggregated signatures.
\iheading{3) \sysname-opt.} In the \textsc{Pre-prepare} phase, the $2f+1$ rank information, previously $O(n)$, is condensed into a singular $O(1)$ complexity, thereby reducing the overall complexity to  $O(n)$. The \textsc{Prepare} and \textsc{Commit} phases in the \sysname-opt remain consistent with \sysname-PBFT.

%\niu{lack some conclusion for the message complexity}
To sum up, \sysname-PBFT increases the message complexity of the \textsc{Pre-prepare} phase from $O(n)$ to $O(n^2)$.
But, both \sysname-PBFT and \sysname-opt do not increase the overall message complexity of the protocol. 

\iheading{Authenticator complexity.} In addition to message complexity, recent work~\cite{hotstuff} introduces a concept called authenticator complexity (also known as signature complexity) to quantify the number of cryptographic operations (such as verifications) done by replicas.
Since in \sysname, every message contains at least one signature, the authenticator complexity also reflects the message complexity. 
%Regarding signature complexity, we measure it as the number of signatures each replica needs to deal with~\cite{}. 
In the \textsc{Pre-prepare} phase of PBFT, each backup replica verifies $O(1)$ signatures, whereas, in \sysname-PBFT, it verifies an additional $O(n)$ signatures. 
By contrast, the \sysname-opt employs aggregate signatures to optimize this to $O(1)$. Likewise, during the \textsc{Commit} phase, the leader verifies an extra $O(n)$ signatures in \sysname.
Thus, considering the verification of $O(n)$ signatures in commit messages, the \textsc{Commit} phase manifests a combined complexity of $O(2n) \approx O(n)$. 

\begin{figure*}[ht]
    \centering 
    \includegraphics[width=0.9\textwidth]{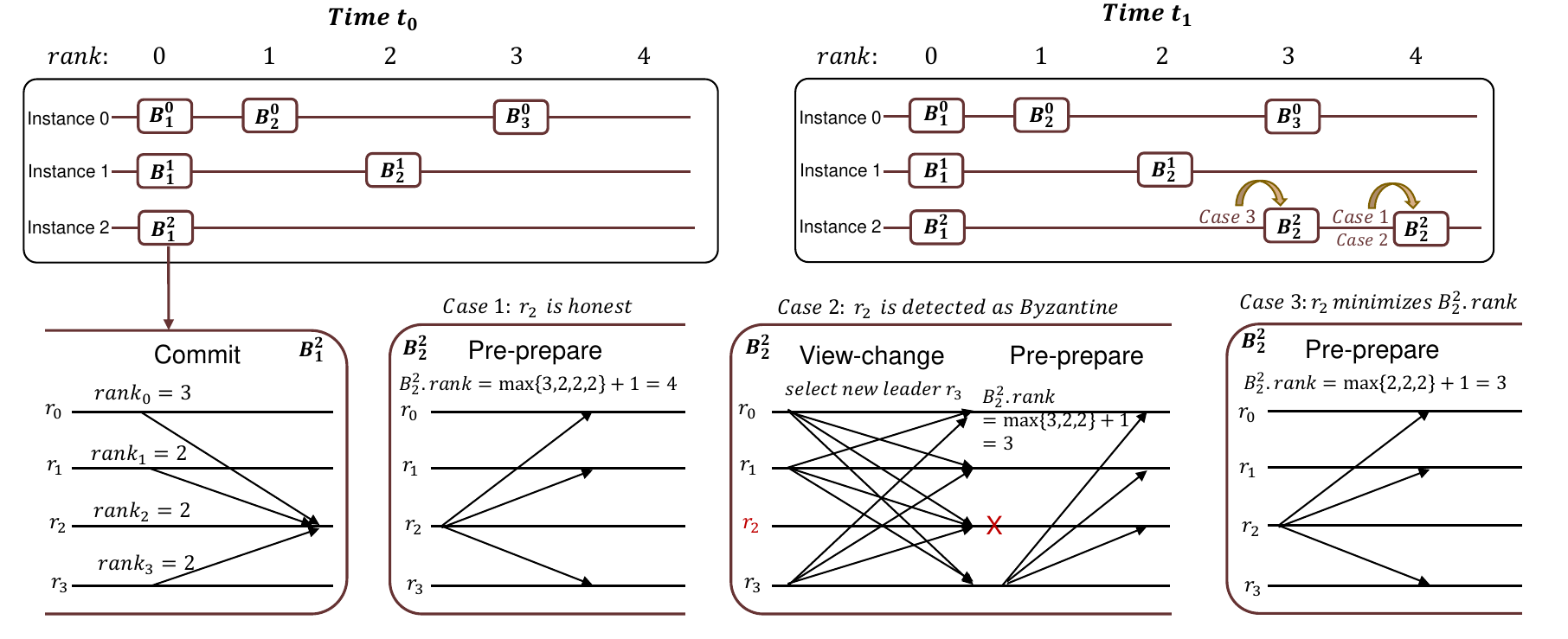}
    \caption{\textbf{Illustrative example of \sysname with four replicas and three instances under three cases:} (1) with an honest leader, (2) with a detectable Byzantine leader, and (3) with an undetectable Byzantine leader who minimizes the rank for the newly proposed block.}
    \label{fig:example}
\end{figure*}

\section{Illustrative Example of Protocol Behaviors}\label{appen:example}

Here we provide an illustrative example of \sysname protocol behaviors under the three cases: with an honest leader, with a detectable Byzantine leader, and with an undetectable Byzantine leader who minimizes the rank for the newly proposed block. %\niu{check the expression}

Consider the example shown in \figref{fig:example}, where we have four replicas: $r_0$, $r_1$, $r_2$, and $r_3$. In this scenario, $r_0$, $r_1$, and $r_2$ are the leaders of Instance 0, 1, and 2, respectively. With $f = 1$, there are $3f + 1 = 4$ replicas in total.
During the commit phase of $B_1^2$, each replica $r_i$ sends its highest known rank, $rank_i$, to $r_2$. Suppose the ranks received by $r_2$ are $\{3, 2, 2, 2\}$. At time $t_1$, $r_2$ is about to generate a new block for Instance 2. 

\iheading{Case 1:} If $r_2$ is honest, it will select the highest rank from the set and add one, resulting in $B_2^2.rank = 4$.

\iheading{Case 2:} If $r_2$ is detected as a Byzantine leader, a view-change protocol will be initiated to replace $r_2$ with another replica as the leader of Instance 2, assumed here to be $r_3$, which is honest. Upon taking over, the new leader $r_3$ will collect the rank set $\{3, 2, 2\}$ from $r_0$, $r_1$, and itself, respectively. $r_3$ will then select the highest rank from the set and add one, resulting in $B_2^2.rank = 4$.

\iheading{Case 3:} If $r_2$ is Byzantine and wishes to avoid detection, it can discard the highest rank ($rank_0=3$ from $r_0$), as it only needs $2f + 1$ ranks to proceed. By selecting the highest rank from the remaining set $\{2, 2, 2\}$ and adding one, it would set $B_2^2.rank = 3$.

\section{Proof of Correctness} \label{appen:correct}
%We now provide proof and analysis of \sysname-PBFT protocol properties including the correctness of monotonic ranks and security properties of \sysname.  

\subsection{Correctness of Monotonic Rank}
We prove that monotonic ranks in \sysname satisfy the two properties: MR-Agreement and MR-Monotonicity. 

\begin{lemma}\label{lemma:sbagree}
    If an honest replica partially commits a block $B$ then all honest replicas eventually partially commit $B$. 
\end{lemma}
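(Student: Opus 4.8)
The plan is to prove the claim at the level of the Sequenced Broadcast (SB) abstraction, since each consensus instance is an SB instance and, by construction, an honest replica \emph{partially commits} a block $B$ precisely when it \emph{delivers} $B$ for some round $r$ of its instance (instance $i = B.index$ with designated sender $p$). Under this correspondence the statement becomes: if one honest replica delivers $(B, r)$ in instance $i$, then every honest replica eventually delivers $(B, r)$ in instance $i$.

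First I would fix an honest replica $h$ that partially commits $B$, so $h$ delivers $(B, r)$ in instance $i$. Next I would invoke the totality guarantee of SB to show that \emph{every} honest replica eventually delivers \emph{some} message $(m', r)$ for that same round $r$. Finally, SB-Agreement pins down the value: since $h$ delivered $(B, r)$ and an arbitrary honest replica $h'$ delivers $(m', r)$ in the same instance and round, agreement forces $m' = B$; hence $h'$ partially commits $B$. As $h'$ was arbitrary, all honest replicas partially commit $B$, which is the claim.

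The hard part is the middle step — lifting ``one honest replica delivers at round $r$'' to ``all honest replicas deliver at round $r$.'' The delicate case is a Byzantine sender $p$ that coaxes a single honest replica into delivering $B$ while stalling the rest; the sender-side clause of SB-Termination (which only guarantees that an \emph{honest} $p$ eventually delivers every round) does not by itself close this gap. I would close it using the totality property supplied by SB's underlying construction: ISS realizes SB from a Byzantine reliable broadcast primitive together with an agreement sub-protocol, whose reliable-delivery and agreement guarantees ensure that a single honest delivery at round $r$ propagates to all honest replicas regardless of whether $p$ is honest. Concretely at the \sysname-PBFT level, an honest replica commits $B$ only after collecting $2f+1$ matching commit messages, of which at least $f+1$ are honest; quorum intersection rules out a conflicting commit at the same view and round $(v, n)$, and after \textsf{GST} the view-change mechanism carries the prepared certificate for $B$ forward, so every honest replica eventually commits $B$. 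Either route establishes the totality needed for the middle step, after which SB-Agreement finishes the proof.
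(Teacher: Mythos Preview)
Your approach is essentially the paper's: use the SB abstraction, obtain that every honest replica delivers \emph{some} value at round $r$, then apply SB-Agreement to force that value to be $B$. The paper's proof is a two-line version of exactly this, invoking SB-Termination for the first step and SB-Agreement for the second.

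Where you differ is in rigor, not in route. You correctly observe that SB-Termination, as stated in \S\ref{subsec:preliminaries}, only guarantees that an \emph{honest sender} $p$ eventually delivers for each round; it does not, on its face, guarantee that \emph{all} honest replicas deliver when $p$ is Byzantine. The paper silently applies SB-Termination in this stronger ``all honest replicas eventually deliver for every round'' sense without justifying the lift. Your proposal explicitly flags the Byzantine-sender gap and closes it by appealing to the totality of the underlying BRB/agreement construction (or, equivalently, to the PBFT commit-quorum and view-change argument). That extra paragraph buys you a proof that actually matches the stated SB interface; the paper's version relies on the reader importing the intended (stronger) termination property from ISS.
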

\begin{proof}
     If an honest replica $r$ partially commits a block $B$, by SB-Termination, all honest replicas will partially commit a block for $B.round$. By SB-Agreement, if an honest replica $r'$ partially commits a block $b'$ for $B.round$, then $B'=B$. Thus,  all honest replicas eventually partially commit $B$.
\end{proof}

\begin{theorem}[MR-Agreement]\label{theorem:agreement}
All honest replicas have the same rank for a partially committed block.
\end{theorem}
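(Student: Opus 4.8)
The plan is to prove that for any block $B$ that is partially committed, every honest replica assigns the same value to $B.rank$. The key observation is that the $rank$ is not computed independently by each replica after delivery; instead, it is piggybacked on the block itself during the consensus process. Concretely, from Algorithm~\ref{alg:pbftnew}, the leader fixes $B.rank$ in the $ppremsg$ it multicasts in the \textsc{pre-prepare} phase, and this same $rank$ field travels unchanged through the \textsc{prepare} and \textsc{commit} phases. Thus the proof reduces to showing that all honest replicas that partially commit $B$ do so with a $ppremsg$ carrying an identical $rank$.

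First I would invoke Lemma~\ref{lemma:sbagree}: if any honest replica partially commits $B$, then all honest replicas eventually partially commit $B$, and by SB-Agreement all honest replicas deliver the \emph{same} message $(msg, r)$ for round $B.round$ in instance $B.index$. Since the delivered message is precisely the block proposal (the tuple carrying $txs$, $index$, $round$, and the proposed $rank$), SB-Agreement forces any two honest replicas to agree on the entire delivered payload, including the $rank$ field. Therefore no two honest replicas can commit the same block with differing ranks.

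The step I expect to require the most care is arguing that $B.rank$ is genuinely part of the message whose agreement SB guarantees, rather than something recomputed locally. I would make this explicit by tracing the data structure: the $rank$ is embedded in $ppremsg = \langle \textsc{pre-prepare}, v, n, d, i, rank\rangle_\sigma$, and the \textsc{verify} check in the \textsc{prepare} phase ensures that honest backups only echo the \emph{leader's} proposed $rank$ (the validity conditions pin $rank = \min\{rank_m+1, maxRank(e)\}$ deterministically given the accompanying $QC$ and $rankSet$). Hence the committed $rank$ is a deterministic function of the delivered proposal, so agreement on the proposal yields agreement on the rank. This closes the argument without needing any reasoning about monotonicity, which is deferred to the separate MR-Monotonicity theorem.

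In summary, the proof is short and leans entirely on the SB properties established for each consensus instance: SB-Agreement gives agreement on the delivered block (hence its $rank$), and Lemma~\ref{lemma:sbagree} guarantees that whenever one honest replica partially commits $B$, all of them do, so the common $rank$ is well-defined across the honest set. The main obstacle is purely presentational---making precise that $rank$ is a committed field of the block carried through consensus rather than a post-hoc local computation---and once that is stated, MR-Agreement follows immediately.
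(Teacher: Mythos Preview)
Your proposal is correct and follows essentially the same approach as the paper: both argue that $rank$ is a field of the block tuple carried through consensus, then invoke Lemma~\ref{lemma:sbagree} (and the underlying SB-Agreement) to conclude that all honest replicas partially commit the identical block and hence the identical $rank$. Your write-up is more detailed in tracing $rank$ through the message formats, but the core argument is the same.
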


\begin{proof}
When a leader proposes a block B, it determines block parameters and formats $B$ as $B=\langle txs, index, round, rank \rangle$. 
If an honest replica $r$ partially commits the block $B$, all other honest replicas eventually partially commit $B$ (Lemma~\ref{lemma:sbagree}). As $B.rank$ is one of the parameters of block $B$, all honest replicas have the same rank for block $B$. The proof is done.
\end{proof}

\begin{theorem}[MR-Monotonicity]\label{causality}
If a block $B'$ is generated after an intra-instance (or a partially committed inter-instance) block $B$, then the rank of $B'$ is larger than the rank of $B$.
\end{theorem}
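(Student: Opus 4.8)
The plan is to prove the two cases in the hypothesis separately: (i) $B$ and $B'$ are intra-instance, and (ii) $B$ is an inter-instance block that is partially committed before $B'$ is generated. Both cases rest on one structural invariant about the local variable $curRank$ maintained by each honest replica in Algorithm~\ref{alg:pbftnew}: its value is \emph{monotonically non-decreasing} over time, because $curRank.rank$ is only ever overwritten by a strictly larger value (in the commit phase and in the rank-message handler). I will also use repeatedly that a replica emits a rank message for a round only after it has entered that round's commit phase, i.e., after receiving $2f+1$ prepare messages for the round's block; hence the value it reports is at least that block's rank.

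For the intra-instance case it suffices, by transitivity, to treat two consecutive rounds $s$ and $s+1$ of instance $i$, with blocks $B_s$ and $B_{s+1}$. To propose $B_{s+1}$ the leader first gathers $2f+1$ rank messages tagged with round $s$ (the pre-prepare precondition). Each such message is produced by its sender in the commit phase of round $s$, hence \emph{after} that sender obtained $2f+1$ prepares for $B_s$; by the invariant, each reports a value $\ge B_s.rank$. Therefore $rank_m = \max(rankSet) \ge B_s.rank$, and the assignment $B_{s+1}.rank = \min\{rank_m + 1, maxRank(e)\}$ gives $B_{s+1}.rank > B_s.rank$ in the uncapped regime. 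Iterating over rounds $B.round,\dots,B'.round$ yields $B'.rank > B.rank$ (a view change between rounds must still be handled, by noting that certified ranks are carried forward, so the chain of inequalities is preserved).

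The inter-instance case is the heart of the argument and uses quorum intersection. Since $B$ is partially committed, at least $2f+1$ replicas received $2f+1$ prepares for $B$ and thereby set $curRank.rank \ge B.rank$ permanently; call this set $P$, with at least $2f+1-f' \ge f+1$ honest members. When $B'$ is generated in another instance, its leader selects a set $S$ of $2f+1$ rank messages, again containing at least $f+1$ honest senders. Two honest subsets of size $\ge 2f+1-f'$ inside a pool of at most $3f+1-f'$ honest replicas must overlap, since $(2f+1-f')+(2f+1-f')-(3f+1-f') = f+1-f' \ge 1$, so there is an honest replica $h \in P \cap S$. The plan is then to show $h$'s contribution to $S$ reports a value $\ge B.rank$: because $h \in P$ it reached $curRank \ge B.rank$ before $B$ was committed, and because $B'$ is generated after $B$ is partially committed, $h$'s defining rank message is formed after that event, so by monotonicity $h$ reports at least $B.rank$. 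Consequently $rank_m \ge B.rank$ and again $B'.rank = rank_m + 1 > B.rank$.

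I expect the timing step just described to be the main obstacle: I must pin down the definition of ``generated'' (the instant the rank-value set is assembled, per the footnote in \secref{subsec:causality}) and argue that it forces $h$'s emitted rank message to postdate $h$'s certification of $B.rank$, which is exactly the coupling that turns the quorum-intersection fact into a lower bound on $rank_m$. A secondary point to dispatch is the capped regime $rank_m + 1 > maxRank(e)$, where the $\min$ truncates the rank to $maxRank(e)$; here I will invoke the epoch-advancement rule (an epoch closes only once every instance has partially committed its $maxRank(e)$ block, and equal top ranks are tie-broken deterministically by instance index) to argue that the truncation does not disturb the causal ordering within $\mathcal{G}_{out}$, so the strict inequality is needed, and holds, only away from this boundary.
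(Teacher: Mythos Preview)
Your plan matches the paper's proof: it too splits into the two cases and uses the same mechanisms---monotonicity of $curRank$ together with the rank messages emitted at the commit phase for the intra-instance step (the paper argues only from the leader's own rank message being in the collected set, whereas you argue every one of the $2f{+}1$ messages already reflects $B.rank$; either works), and quorum intersection between the $2f{+}1$ replicas that certified $B.rank$ and the $2f{+}1$ rank-message senders for the inter-instance step. Your treatment is in fact more careful than the paper's: the proof of Theorem~\ref{causality} simply writes $B'.rank = rank_m + 1$ and never mentions the $\min\{\cdot, maxRank(e)\}$ cap (epoch-boundary behaviour is addressed only in Lemma~\ref{Monotonicity}, and only for the intra-instance case), so the capped-regime worry you flag is a gap you share with the paper rather than one you must close to match it.
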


\begin{proof} 
%In \sysname, to enable a block to have a larger rank, the block is not proposed until the previous block in the same instance is partially committed. Thus, we can prove that the rank of block $B'$ is larger than that of the partially committed block $B$. 
When block $B$ is partially committed inter-instance block, $B.rank$ is also agreed by a quorum $\mathcal{Q}$ of at least $2f+1$ replicas by the MR-Agreement property.
Later,  when a block $B^{\prime}$ is generated, its rank is set according to the highest ranks collected from a quorum $\mathcal{Q}^{\prime}$ of at least $2f+1$ replicas. Since there are $n=3f+1$ replicas in total, the intersection $\mathcal{Q} \cap \mathcal{Q}^{\prime}$ ensures that at least one honest replica in $\mathcal{Q}$ will report its highest rank $rank_m$, and $rank_m \geq B.rank$. By Algorithm~\ref{main}, $B^{\prime}.rank = rank_m +1$, which is larger than $B.rank$, and satisfies the monotonicity property. 

When block $B$ is an intra-instance block, according to Algorithm~\ref{alg:pbftnew}, the leader will collect $2f+1$ ranks for $B'$ at the commit phase of $B$, at which point the prepare phase has been finished and the leader has update its highest rank $curRank.rank$ if it is smaller than $B.rank$ (Lines 23-26). When propose $B'$, the leader will select the highest rank $rank_m$ from the $2f+1$ collected ranks, which contain the highest rank from itself. So we have $rank_m \geq B.rank$, $B^{\prime}.rank = rank_m +1 >B.rank.$, which satisfies the monotonicity property. 
\end{proof}

\subsection{System Security Properties}
\begin{lemma}\label{Monotonicity}
The $rank$s for blocks generated by the same BFT instance is strictly increasing, \ie, $B^i_{j+1}.rank > B^i_j.rank$. 
\end{lemma}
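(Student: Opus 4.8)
The plan is to derive this lemma as the intra-instance specialization of MR-Monotonicity (Theorem~\ref{causality}): since $B^i_{j+1}$ is by construction generated after the partially committed intra-instance block $B^i_j$, the strict increase $B^i_{j+1}.rank > B^i_j.rank$ is exactly what the intra-instance branch of that theorem establishes. So I would first note that $B^i_j$ and $B^i_{j+1}$ are consecutive outputs of the same instance $i$ in rounds $j$ and $j+1$, so the leader proposing $B^i_{j+1}$ executes precisely the rank-selection step analyzed there, and then reproduce that argument in the consecutive-round form the lemma needs.

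The key step is to trace how the leader fixes $B^i_{j+1}.rank$ in Algorithm~\ref{alg:pbftnew}. To propose in round $j+1$, the leader must first collect $2f+1$ rank messages carrying round number $j$ (Lines~1--3), each emitted during the commit phase of round $j$. I would argue that every such rank message reports a value at least $B^i_j.rank$: a replica emits its round-$j$ rank message only after receiving $2f+1$ prepare messages for $B^i_j$, at which point Lines~23--26 raise its $curRank.rank$ to at least $commsg.rank = B^i_j.rank$, and this is the value it then reports (Line~27). Hence the maximum collected rank $rank_m$ satisfies $rank_m \ge B^i_j.rank$, and since $B^i_{j+1}.rank = \min\{rank_m + 1, maxRank(e)\}$ (Line~6), the rank increases.

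The one place requiring care is the truncation at $maxRank(e)$, which could in principle collapse the strict inequality. I would dispatch this by observing that whenever a leader proposes a block with rank equal to $maxRank(e)$ it stops proposing (Lines~9--11), so a successor $B^i_{j+1}$ within the same epoch exists only when $B^i_j.rank \le maxRank(e)-1$; then $B^i_{j+1}.rank = \min\{rank_m+1, maxRank(e)\} \ge B^i_j.rank + 1 > B^i_j.rank$ in both cases of the $\min$. Across an epoch boundary the claim is immediate because $minRank(e{+}1) = maxRank(e)+1 > maxRank(e) \ge B^i_j.rank$. The main obstacle is therefore not the core inequality but making the two quorum-based claims airtight, namely that the commit-phase update of Lines~23--26 actually fires for every replica whose round-$j$ rank message the leader consumes, and that the $maxRank$ cap never reduces $B^i_{j+1}.rank$ below $B^i_j.rank$; both are settled by the commit-quorum and stop-proposing reasoning above rather than by any additional machinery.
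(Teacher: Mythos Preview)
Your proposal is correct and follows essentially the same argument as the paper's own proof: trace the commit-phase update of $curRank$ (Lines~23--26), conclude $rank_m \ge B^i_j.rank$, then case-split on the $maxRank(e)$ truncation and the epoch boundary exactly as the paper does. One minor imprecision worth tightening: you assert that \emph{every} round-$j$ rank message the leader consumes reports a value at least $B^i_j.rank$, but a Byzantine replica can send a rank message carrying a stale lower rank together with a valid older QC; the paper closes this by observing that among the $2f+1$ collected rank messages at least $f+1$ come from honest replicas, and those honest values alone already force $rank_m \ge B^i_j.rank$.
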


\begin{proof}
Assuming block $B^i_j$ is in epoch $e$.
By Algorithm~\ref{alg:pbftnew}, at the commit phase of $B^i_j$, an honest replica will update its highest rank value, denoted by $curRank.rank$, if $B^i_j.rank > curRank.rank$ (Lines 23-25). Then the replica will send its highest rank value $curRank.rank$ $\geq $ $B^i_j.rank$ to the leader (Line 27).
Upon proposing $B^i_{j+1}$, the leader will collect a set of $2f+1$ $rank$ values from different replicas (Line 1), denoted as $rankSet$, among which at least $f+1$ $rank$s are from those honest replicas. 
The highest rank that the leader gets from the $rankSet$ will be $rank_m \geq B^i_j.rank$.
By Line 6, if $rank_m+1 \leq maxRank(e)$, $B^i_{j+1}.rank = rank_m+1 > B^i_j.rank$; otherwise $B^i_{j+1}.rank = maxRank(e)$.

When $B^i_{j+1}.rank = maxRank(e)$, we discuss two different scenarios.
If $B^i_j.rank = maxRank(e)$, the leader will stop proposing for epoch $e$ after it proposes $B^i_j$, and $B^i_{j+1}$ will be in epoch $e+1$.  We have the $minRank(e+1)=maxRank(e)+1$, \ie, the minimum rank of a block in the $e+)$ epoch is one more than the maximum rank of the blocks in the epoch $e$, thus $B^i_{j+1}.rank > B^i_j.rank$.
If $B^i_j.rank < maxRank(e)$, $B^i_{j+1}.rank = maxRank(e) > B^i_j.rank$.

In summary, $B^i_{j+1}.rank > B^i_j.rank$.
\end{proof}

\begin{lemma}[]\label{lemmap2g}
If an honest replica partially commits a block $B$, it will eventually globally confirm $B$.
\end{lemma}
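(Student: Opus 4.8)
The plan is to show that the block $B$, partially committed by an honest replica $r$, eventually satisfies the confirmation test of Algorithm~\ref{main} at $r$, i.e.\ $B \prec bar$, and that the while loop then necessarily inserts $B$ into $\mathcal{G}_{out}$. Write $\iota$ for $B$'s instance and $e$ for its epoch, so that $B.rank \le maxRank(e)$. Since $r$ partially commits $B$, we have $B \in \mathcal{G}_{in}$ at $r$, hence $B \in \mathcal{S} = \mathcal{G}_{in} \setminus \mathcal{G}_{out}$ until it is confirmed. It therefore suffices to (a) drive the bar above $B.rank$, and (b) check that the loop reaches $B$.

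For (a) I would show that eventually, at $r$, every instance $j$ has a last partially confirmed block $B^{(j)}$ with $B^{(j)}.rank \ge B.rank$. This rests on two facts established earlier. By the epoch pacemaker, every instance eventually partially commits the block of rank $maxRank(e)$ for epoch $e$, and by SB-Termination (used exactly as in Lemma~\ref{lemma:sbagree}) $r$ eventually delivers a message for every round of every instance, so the contiguous committed prefix of each instance fills in and reaches rank $maxRank(e) \ge B.rank$. By Lemma~\ref{Monotonicity}, ranks strictly increase along each instance, so once a prefix reaches rank $maxRank(e)$ its last partially confirmed block keeps rank $\ge B.rank$. Letting $B^{*}$ be the $\prec$-smallest such block (Line 3 of Algorithm~\ref{main}), we obtain $B^{*}.rank \ge B.rank$, so $bar = (B^{*}.rank + 1, B^{*}.index)$ has $bar.rank > B.rank$ and thus $B \prec bar$.

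For (b), when $\mathcal{G}_{in}$ is updated into this state $r$ recomputes $bar$ and runs the loop. If $B$ is already in $\mathcal{G}_{out}$ we are done. Otherwise $B \in \mathcal{S}$, and because $\prec$ is a total lexicographic order (hence transitive), every unconfirmed block ordered below $B$ also satisfies $\prec bar$ and is confirmed first; after those are removed from $\mathcal{S}$, $B$ becomes the candidate $B_{can}$ and, since $B \prec bar$, the loop confirms it. Thus $B \in \mathcal{G}_{out}$.

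I expect the main obstacle to be part (a): ruling out that the bar stalls below $B.rank$ forever because some straggling instance never advances. The decisive ingredient is the epoch pacemaker's liveness guarantee that every instance eventually commits the block of rank $maxRank(e)$, which caps the epoch and forces laggards to catch up; combined with the per-round delivery of SB-Termination (to fill committed prefixes) and the strict monotonicity of Lemma~\ref{Monotonicity}, this is what makes every instance's last partially confirmed rank overtake $B.rank$. Note that Theorem~\ref{causality} supplies the complementary fact that blocks generated after $B$ carry strictly larger ranks, reinforcing that no later block can slip below $B$ in the global order.
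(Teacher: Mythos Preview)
Your proposal is correct and follows the same strategy as the paper: show that every instance's last partially confirmed block eventually has rank $\ge B.rank$, so the bar rises above $B$ and the while loop confirms it. The only minor difference is that the paper argues instance progress directly from the view-change mechanism (a Byzantine leader is eventually replaced, and an honest one keeps advancing ranks via Lemma~\ref{Monotonicity}), whereas you invoke the epoch pacemaker's stated guarantee that each instance reaches $maxRank(e)$; your part (b) is also more explicit than the paper, which simply asserts global confirmation once $B \prec bar$.
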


\begin{proof}
If an honest replica $r$ partially commits a block $B$ in epoch $e$, it will be added to the input of the partial ordering layer $\mathcal{G}_{in}$ (Algorithm~\ref{alg:pbftnew}, Line 34), which triggers Algorithm~\ref{main}. 

In Algorithm~\ref{main}, a replica $r$ first fetches the last partially confirmed block
from each instance,  denoted by the set $\mathcal{S'}$ (Line 2), and finds the block with the lowest ordering index among the blocks in $\mathcal{S'}$, denoted by $B^{*}$ (Line 3). Then $r$ computes the $bar$ as a tuple of $(B^{*}.rank + 1,B^{*}.index)$ (Line 4). Thereafter, $r$ will transverse all the blocks that have been partially committed by not globally confirmed yet, and global confirms all the blocks that have a lower ordering index than the $bar$ (Lines 5-11).
%The view-change protocol ensures that each instance continuously outputs partially committed blocks.  
Thus, $B$ will be globally confirmed if it has a lower ordering index than the $bar$. 

Now we prove that $B$ will eventually have a lower ordering index than the $bar$.
If the leader of an instance is Byzantine, it will be detected by backups, and a view-change protocol is triggered to change the leader. Since there are more than $2/3$ honest replicas, there will eventually be an honest leader for the instance. 
If the leader of an instance is honest, 
it will continuously output partially committed blocks with increasing rank values (Lemma~\ref{Monotonicity}). 
Since $B^{*}$ represents the block with the lowest ordering index among the last partially confirmed blocks from each instance, eventually $B^*.rank \geq B.rank$. Thus the $rank$ of the $bar$ will be greater than $B.rank$, which means $B$ will have a lower ordering index than the $bar$ (In \sysname, blocks are ordered by increasing $rank$ values).
%it will propose valid blocks until the $maxRank(e)$ (Algorithm~\ref{alg:pbftnew}, Lines 9-10). By BAB-Validity, the leader will eventually partially commit that block. By BAB-Agreement, replica $r$ eventually partially commits that block.  Thus, replica $r$ will partially commit a block with $maxRank(e)$ on each instance.
%Since $B^{*}$ represents the block with the lowest ordering index among the last partially confirmed blocks from each instance, eventually $B^*.rank$ = $maxRank(e)$. 
%Thus, the $rank$ of the $bar$ will be $maxRank(e)+1$. 
%Note that the block $B$ is partially committed in epoch $e$, so $B.rank \leq maxRank(e) < bar.rank$, which means $B$ will eventually have a lower ordering index than the $bar$ (In \sysname, blocks are ordered by increasing $rank$ values). 

In summary, block $B$ will eventually be globally confirmed by $r$.
\end{proof}

% \begin{lemma}\label{agree}
% \textcolor{red}{If an honest replica globally confirms $B.sn = B'.sn$, then $B=B'$.}
% \end{lemma}

% \begin{proof}
% Let us assume by contradiction that an honest replica globally confirms $B.sn = B'.sn$, and $B \neq B'$. 
% In \sysname, blocks are ordered by increasing $rank$ values, with tie-breaking favoring a smaller instance index. Thus, there are two cases: 1) $B.rank \neq B'.rank$, we have $B.sn \neq B'.sn$; 2) $B.rank$ = $B'.rank$ and $B.index \neq B'.index$, we have $B.sn \neq B'.sn$. 
% Thus, if $B.sn = B.sn$, we have $B.index = B'.index$ and $B.rank$ = $B'.rank$, which is contradicted with Lemma~\ref{Monotonicity}.
% So $B=B'$.
% \end{proof}

\begin{theorem}[Totality]\label{totality}
If an honest replica globally confirms a block $B$, then all honest replicas eventually globally confirm the block $B$.
\end{theorem}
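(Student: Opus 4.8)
The plan is to chain together the two lemmas already established, reducing totality to a short transitivity argument: a block globally confirmed by one honest replica must have been partially committed by it, partial commitment propagates to every honest replica, and each such replica in turn eventually globally confirms the block.

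First I would invoke the structure of Algorithm~\ref{main} to argue that global confirmation presupposes partial commitment. The candidate set in the ordering algorithm is $\mathcal{S} = \mathcal{G}_{in} \setminus \mathcal{G}_{out}$, so every block ever added to $\mathcal{G}_{out}$ is drawn from $\mathcal{G}_{in}$, the collection of partially committed blocks. Hence if an honest replica $r$ globally confirms $B$, then $B \in \mathcal{G}_{in}$ at $r$; that is, $r$ previously partially committed $B$. This direction is unconditional and follows purely from the algorithm, with no reliance on timing or synchrony.

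Next I would apply Lemma~\ref{lemma:sbagree} to upgrade this single-replica event into a system-wide one: since $r$ partially committed $B$, every honest replica eventually partially commits the \emph{same} block $B$. It is essential here that the block is identical across replicas rather than merely being some block for the same round; this identity is exactly what SB-Agreement on $B.round$ provides, and it guarantees that the subsequent global ordering is carried out on a consistent object. Finally I would apply Lemma~\ref{lemmap2g} independently at each honest replica: because each honest replica partially commits $B$, each one eventually globally confirms $B$. Combining the two gives that all honest replicas eventually globally confirm $B$.

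The substantive difficulty does not lie in the totality statement itself, which is a routine composition, but inside Lemma~\ref{lemmap2g}, whose proof leans on rank monotonicity (Lemma~\ref{Monotonicity}) and on instance liveness ensuring that the confirmation $bar$ eventually rises above $B.rank$. Given that lemma, the only points in the present argument requiring care are (i) confirming that ``global confirmation implies partial commitment'' really holds for every honest replica directly from the algorithm, and (ii) checking that SB-Agreement secures uniqueness of the partially committed block, so the argument cannot accidentally propagate a distinct block sharing $B.round$.
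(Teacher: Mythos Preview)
Your proposal is correct and follows exactly the paper's own proof: global confirmation implies partial commitment, then Lemma~\ref{lemma:sbagree} propagates partial commitment to all honest replicas, and Lemma~\ref{lemmap2g} lifts this to global confirmation everywhere. You in fact supply more justification than the paper does for the first step (tracing it through $\mathcal{S} = \mathcal{G}_{in} \setminus \mathcal{G}_{out}$ in Algorithm~\ref{main}), which the paper asserts without elaboration.
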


\begin{proof}
If an honest replica globally confirms $B$, it must have partially committed $B$.
By Lemma~\ref{lemma:sbagree}, all honest replicas eventually partially commit $B$. By Lemma~\ref{lemmap2g}, all honest replicas eventually globally confirm $B$.
\end{proof}

In the following text, we build on the concept of a global ordering index $sn$. In Algorithm~\ref{main}, global ordering refers to the act of appending a block to $\mathcal{G}_{out}$, which functions as a global log. The global ordering index $sn$ indicates the position of the block in the log, starting at $0$. With the addition of each block, the global ordering index $sn$ grows by one.

\begin{lemma}\label{uniquesn}
An honest replica uniquely maps a block $B$ to a single global ordering index $sn$.  
\end{lemma}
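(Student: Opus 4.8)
The plan is to establish two facts: that the strict ordering relation $\prec$ is a total order with no ties on the set of all blocks, and that Algorithm~\ref{main} appends each block to the log $\mathcal{G}_{out}$ at most once. Together these make the assignment $B \mapsto sn$ a well-defined, single-valued (and injective) map at an honest replica.

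First I would show that distinct blocks always carry distinct ordering keys $(rank, index)$. Recall that $B \prec B'$ holds iff $B.rank < B'.rank$, or $B.rank = B'.rank$ with $B.index < B'.index$. For two distinct blocks produced by the same instance, their indices coincide but their rounds differ; applying Lemma~\ref{Monotonicity} inductively along the rounds shows that ranks are strictly increasing in the round number, so the two blocks have distinct ranks and are strictly separated by $\prec$. For two blocks produced by different instances, their indices differ, so the tie-breaking clause of $\prec$ separates them strictly even when their ranks coincide. Hence no two distinct blocks share the key $(rank, index)$, and $\prec$ is a strict total order (as the lexicographic order on a pair of totally ordered coordinates, it is irreflexive, antisymmetric, total, and transitive).

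Next I would argue single-valuedness from the structure of Algorithm~\ref{main}. Whenever the algorithm confirms a candidate $B_{can}$, it simultaneously inserts it into $\mathcal{G}_{out}$ and removes it from $\mathcal{S}$ (Lines 8--9); because every (re)invocation recomputes $\mathcal{S} = \mathcal{G}_{in} \setminus \mathcal{G}_{out}$ (Line 5), any block already in $\mathcal{G}_{out}$ is permanently excluded from the candidate set and can never be confirmed again. Thus $B$ is appended exactly once, and its index $sn$---defined as the length of $\mathcal{G}_{out}$ at the instant of insertion---is uniquely determined. Since the log grows only by appending, distinct blocks occupy distinct positions, which gives injectivity as well; combined with the strict total order from the previous step, the greedy selection in \textsc{findLowestBlock} never faces an ambiguous candidate, so $B \mapsto sn$ is well defined.

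The main obstacle is the first step: ruling out that two distinct blocks share the same $(rank, index)$ key, which is precisely where Lemma~\ref{Monotonicity} is indispensable. Without strict per-instance rank monotonicity, two same-instance blocks could in principle collide in the key, the tie-break on $index$ would fail to separate them, and \textsc{findLowestBlock} could return either one, breaking the single-valuedness of $sn$. Once this uniqueness is secured, the remainder reduces to a routine bookkeeping argument about the append-only log.
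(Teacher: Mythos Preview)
Your proposal is correct, and its bookkeeping half (each block enters $\mathcal{G}_{out}$ exactly once because confirmed blocks are excluded from $\mathcal{S}$, and the append-only log gives distinct positions to distinct blocks) matches the paper's own argument almost verbatim.

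Where you diverge is in your first step: you explicitly prove that $\prec$ is a strict total order by showing that no two distinct blocks share the key $(rank,index)$, invoking Lemma~\ref{Monotonicity} for the intra-instance case. The paper does not do this; its proof of Lemma~\ref{uniquesn} argues purely from the operational structure of Algorithm~\ref{main} (one candidate per search round, later additions get later positions) without ever addressing whether \textsc{findLowestBlock} could face a tie. Your route buys additional rigor---it closes the gap that the paper leaves implicit, namely that the greedy selection is actually deterministic---at the cost of importing Lemma~\ref{Monotonicity} as a dependency. The paper's route is shorter and self-contained to the algorithm's set manipulations, but tacitly assumes the very uniqueness of the ordering key that you take the trouble to establish.
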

    
\begin{proof}
We first prove a block $B$ will be assigned a single global ordering index.
In Algorithm~\ref{main}, the set $\mathcal{S}$ represents the set of blocks waiting to be globally ordered (Line 5). A block will be moved out of $\mathcal{S}$ once it is globally ordered (Line 9). Thus, one block will only have one global ordering index.

Next, we prove that a single global ordering index corresponds to only one block.
Assuming an honest replica partially commits two blocks $B$ and $B'$. In Algorithm~\ref{main}, if $B$ and $B'$ are both in the set $\mathcal{S}$ (Line 5), only one of them can become the candidate block during a single round of searching (Line 6) and subsequently be appended to the global log (Line 8). The latter block will be assigned a greater global ordering index.
If $B'$  is added to $\mathcal{S}$ after $B$ has been globally confirmed, $B'$ will be appended to the global log at a later point and consequently receive a greater global ordering index than $B$.
Therefore, $B$ and $B'$ will have distinct global ordering indexes.

This concludes the proof, demonstrating that a single global ordering index uniquely corresponds to a specific block in the global log of an honest replica.
\end{proof}

\begin{theorem}[Agreement]
If two honest replicas globally confirm $B.sn = B'.sn$, then $B=B'$. 
\end{theorem}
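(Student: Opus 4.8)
The plan is to reduce cross-replica agreement on $sn$ to two facts: that the relation $\prec$ is a replica-independent strict total order, and that each honest replica's global log $\mathcal{G}_{out}$ is exactly the $\prec$-sorted list of the blocks it confirms. First I would establish that $\prec$ is a strict total order that every honest replica computes identically. By MR-Agreement (Theorem~\ref{theorem:agreement}) all honest replicas assign the same $rank$ to any partially committed block, and the $index$ field is fixed at proposal time; hence the comparison $B \prec B'$ yields the same verdict at every honest replica. Comparability of any two distinct blocks follows because ranks within one instance are strictly increasing (Lemma~\ref{Monotonicity}) and ties across instances are broken by distinct instance indices, so no two distinct blocks are $\prec$-incomparable.

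Next I would argue that Algorithm~\ref{main} appends blocks to $\mathcal{G}_{out}$ in strictly increasing $\prec$ order and never confirms a block that is $\prec$-smaller than one already in $\mathcal{G}_{out}$. The first part is immediate from the loop, which always selects the $\prec$-minimal unconfirmed candidate $B_{can}$. The second part is where MR-Monotonicity (Theorem~\ref{causality}) is essential: the confirmation bar $(B^{*}.rank+1, B^{*}.index)$ is a lower bound on the $\prec$-index of every block generated in the future, so once a candidate is confirmed, no later-arriving block can have a smaller $\prec$-index. Consequently each honest replica's $\mathcal{G}_{out}$ is precisely the $\prec$-sorted prefix of the set of blocks it has confirmed so far.

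The core step is then an induction on $sn$, carried out by contradiction as in the proof sketch. By Totality (Theorem~\ref{totality}), any block confirmed by one honest replica is eventually confirmed by all, so the two replicas' confirmed sets converge to a common set $\mathcal{C}$. Suppose, for contradiction, that two honest replicas confirm distinct blocks $B \neq B'$ at the same index $s$, and assume without loss of generality $B \prec B'$. By the inductive hypothesis the blocks at indices $0,\dots,s-1$ agree across replicas; since $\prec$ is a common total order, the block at index $s$ must be the $\prec$-minimal element of $\mathcal{C}$ outside this common prefix, which is unique. But $B \prec B'$ forces the replica that placed $B'$ at index $s$ to have placed $B$ at some smaller index, contradicting the agreed prefix together with Lemma~\ref{uniquesn}. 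Hence $B = B'$.

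I expect the main obstacle to be the cross-replica consistency rather than the single-replica uniqueness already supplied by Lemma~\ref{uniquesn}: the two replicas observe partially committed blocks at different times, so I must rule out transient disagreement hardening into a permanent one. This is exactly what the combination of (i) the replica-independence of $\prec$, (ii) the bar-based monotonicity guarantee that a confirmed block is never overtaken, and (iii) Totality's eventual agreement on the confirmed set is designed to resolve.
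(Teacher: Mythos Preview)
Your proposal is correct and follows essentially the same skeleton as the paper's proof: induction on $sn$, contradiction assuming $B\neq B'$ at the first disagreeing index, invocation of Totality (Theorem~\ref{totality}) to force each replica to eventually confirm the other's block, and Lemma~\ref{uniquesn} to push that block to a strictly larger index, yielding contradictory $\prec$-comparisons. The one substantive addition you make is to explicitly establish, via MR-Monotonicity and the confirmation bar, that each honest replica's $\mathcal{G}_{out}$ is $\prec$-sorted and never admits a later-arriving $\prec$-smaller block; the paper's proof simply asserts this step (``according to the global ordering rules, if $sn'>sn$ then $B'\prec B$'') without justification, so your version is in fact more complete on that point.
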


\begin{proof}
We prove this statement by induction, considering the global block sequence from global ordering index $0$ to $sn$ for two honest replicas $r$ and $r'$, denoted as $L(sn)$ and $L'(sn)$, respectively. Here, $L(-1)$ signifies the initial state of the block sequence with no blocks.

First, we establish the base case $L(-1) = L'(-1) =\emptyset$. Then we provide the inductive step. Assuming $L(sn-1)=L'(sn-1)$, we aim to show that $L(sn)=L'(sn)$.

Assume $r$ and $r'$ globally confirm two different blocks $B$ and $B'$, respectively, with the same global ordering index $sn$. 
Since replica $r$ globally confirms $B$ with $sn$, we have 1) $r'$ will globally confirm $B$ (Theorem~\ref{totality}); 2) $B \notin L(sn-1)$ (Lemma~\ref{uniquesn}). Thus $B \notin L'(sn-1)$.
Now, if $r'$ globally confirms $B$ with the global ordering index $sn'$, by Lemma~\ref{uniquesn},  we know $sn' \neq sn$, hence $sn' > sn$. 
Similarly, if $r$ globally confirms $B'$ with the global ordering index $sn''$, we have $sn'' > sn$. 

According to the global ordering rules, for replica $r'$, if $sn'' > sn$, $B.rank > B'.rank$ or  $B.rank = B'.rank$ and $B.index > B'.index$; for replica $r$, if $sn' > sn$, $B'.rank > B.rank$ or  $B'.rank = B.rank$ and $B'.index > B.index$. 
However, this leads to a contradiction to the assumption that $B \neq B'$. This concludes the proof, demonstrating that if two honest replicas globally confirm $B.sn = B'.sn$, then $B=B'$ holds true.
\end{proof}

\begin{lemma}\label{lemmareq}
If a correct client broadcasts a transaction $tx$, some honest replica eventually proposes $B$ with $tx \in B.txs$.
\end{lemma}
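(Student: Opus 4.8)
The plan is to prove this liveness statement by combining the rotating-bucket mechanism with the per-instance liveness guaranteed by SB after \textsf{GST}. First I would recall that, as in ISS, the transaction $tx$ broadcast by the correct client is deterministically assigned to a single bucket $b$, and that buckets are reassigned round-robin to the $m$ instances at every epoch change (\secref{subsubsec:pacemaker}). The whole argument then rests on two facts: that the system keeps advancing through epochs, so the rotation never halts, and that bucket $b$ is, over time, assigned to an instance whose leader is honest.

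For the first fact, I would invoke the epoch pacemaker together with per-instance liveness: by SB-Termination every honest leader eventually delivers a message for every round in its range, so each instance eventually partially commits its $maxRank(e)$ block, which is exactly the condition under which \sysname advances from epoch $e$ to epoch $e+1$. A stalling Byzantine leader is handled by the timeout and view-change mechanism, which after \textsf{GST} installs an honest leader within a bounded number of views. Hence there are infinitely many epochs, and since $n-f \geq 2f+1 \geq 1$ honest replicas exist and leadership rotates over all replicas, bucket $b$ is assigned to an instance with an honest leader in infinitely many epochs.

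The second part is to show that, once an honest leader is responsible for $b$ while $tx$ is still pending, it actually proposes a block containing $tx$. Here I would argue by contradiction: if no honest replica ever proposes a block with $tx$, then $tx$ is never removed from $b$, so it persists in the bucket set handed to that honest leader; by \textsc{cutBatch} in Algorithm~\ref{alg:pbftnew} the leader packs $tx$ into its batch $txs$ and proposes $B$ with $tx \in B.txs$. Partial synchrony after \textsf{GST}, again via SB-Termination, guarantees the honest leader does reach its \textsc{pre-prepare} step, yielding the desired contradiction and hence the claim.

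I expect the main obstacle to be making precise the interaction between epoch advancement and censorship resistance: a Byzantine leader may repeatedly omit $tx$ whenever $b$ lands in its instance, and could also try to stall its instance to prevent the epoch from advancing. The crux is therefore to show that neither behavior can persist --- rotation guarantees $tx$'s bucket reaches an honest leader infinitely often, while the view-change and the $maxRank(e)$ checkpoint rule prevent a single slow instance from blocking epoch progress --- so that the honest-leader case is eventually reached.
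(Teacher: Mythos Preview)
Your proposal is correct and follows essentially the same approach as the paper's proof: both argue that $tx$ persists in its bucket, that bucket rotation carries the bucket through different instances infinitely often, and that eventually an honest leader holds the bucket and proposes $tx$. You are simply more explicit than the paper about \emph{why} rotation happens infinitely often (arguing epoch advancement via SB-Termination and view-change), whereas the paper just appeals to the ISS rotation policy and to view-change replacing Byzantine leaders within an instance.
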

\begin{proof}
Assuming $tx$ is assigned to buckets $\hat{b}$, and $\hat{b}$ is assigned to instance $i$. If the leader of the instance $i$ refuses to propose $tx$, $tx$ will be left in $\hat{b}$. According to the bucket rotation policy~\cite{stathakopoulou2022state}, $\hat{b}$ will be reassigned to different instances an infinite number of times.
If the leader of an instance is Byzantine, it will be detected by backups, and a view-change protocol is triggered to change the leader. Since there are more than $2/3$ honest replicas, there will eventually be an honest leader for the instance.
Thus, $\hat{b}$ will be eventually assigned to an instance with an honest leader, who will propose a block containing $tx$. 
\end{proof}

\begin{theorem}[Liveness]
If a correct client broadcasts a transaction $tx$, an honest replica eventually globally confirms a block $B$ that includes $tx$.
\end{theorem}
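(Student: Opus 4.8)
The plan is to chain together the liveness guarantees already established for each layer: transaction inclusion (Lemma~\ref{lemmareq}), partial commitment of an honestly proposed block, and the promotion from partial commitment to global confirmation (Lemma~\ref{lemmap2g}).

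First I would invoke Lemma~\ref{lemmareq} to obtain some honest leader that eventually proposes a block $B$ with $tx \in B.txs$. This step already absorbs the bucket-rotation and view-change arguments: even if Byzantine leaders censor $tx$ or stall, the rotating-bucket policy re-assigns $tx$'s bucket infinitely often, so it is eventually handled by an instance whose leader is honest.

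Second, I would argue that this honestly proposed block $B$ is eventually partially committed. This is where the partial-synchrony assumption does the work. After \textsf{GST}, messages between honest replicas arrive within $\Delta$, so an honest, active leader is not falsely suspected by the failure detector $D$. Consequently, for round $B.round$ the instance's SB-Termination guarantees delivery of a message, and by SB-Integrity together with the non-suspicion of the honest leader, the delivered value is the non-nil block $B$ rather than $\perp$. Hence the honest leader partially commits $B$ (and, by Lemma~\ref{lemma:sbagree}, so does every honest replica, though the statement only needs one).

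Third, I would apply Lemma~\ref{lemmap2g}: once an honest replica partially commits $B$, the global-ordering algorithm eventually raises the confirmation bar above $B$'s ordering index, so $B$ is globally confirmed. Since $tx \in B.txs$, this establishes the claim. The main obstacle is the second step, bridging ``honest leader proposes $B$'' to ``$B$ is partially committed'': the three stated SB properties (Integrity, Agreement, Termination) do not by themselves exclude round $B.round$ delivering $\perp$, so ruling this out requires the post-\textsf{GST} synchrony assumption and the correctness of the failure detector, i.e., that a genuinely active honest leader is not perpetually suspected. I would make this dependence explicit and defer the detailed SB liveness argument to the ISS construction~\cite{stathakopoulou2022state} that \sysname reuses unchanged.
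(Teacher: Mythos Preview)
Your proposal is correct and follows essentially the same three-step chain as the paper: Lemma~\ref{lemmareq} for inclusion, SB-Termination for partial commitment, and Lemma~\ref{lemmap2g} for global confirmation. Your treatment of the second step is in fact more careful than the paper's, which simply cites SB-Termination without explicitly arguing that the delivered value is $B$ rather than $\perp$.
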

\begin{proof}
If a correct client broadcasts a transaction $tx$, by Lemma~\ref{lemmareq}, some honest replica eventually proposes $B$ with $tx \in B.txs$. By SB-Termination, the replica eventually partially commits $B$. By Lemma~\ref{lemmap2g}, it eventually globally confirms $B$.
\end{proof}

\section{\sysname with Chained HotStuff} \label{appen:hotstuff}
We now describe \sysname with the chained HotStuff~\cite{hotstuff} consensus instances (\sysname-HotStuff). 
%Different with PBFT\footnote{In PBFT, the view change protocol will be triggered only when failure detected, which means that an honest leader will continuously propose new blocks in the same view.}, HotStuff works in a succession of monotonically increasing view numbers, which results in leader-rotation. Therefore, we do not require the leader to provide the $2f+1$ rank messages as proof. This is because, even if a Byzantine leader does not strictly follow the protocol to choose the maximum value in the $2f+1$ $rank$s, the next honest leader will immediately catch up with other instances.

\subsection{Data Structures}
\bheading{Tree and branches.} Each replica stores a tree of pending batches of transactions from clients as its local data structure. Each tree node contains a proposed batch, metadata associated with the protocol, and a parent link. 
The branch led by a given node is the path from the node all the way to the tree root by visiting parent links.
Two branches are conflicting if neither one is an extension of the other. Two nodes are conflicting if the branches
led by them are conflicting.

\bheading{Messages.} 
The general messages have a format of the tuple $\langle\langle type$, $view$, $index$, $rank$, $node$, $parentQC\rangle_{\sigma}$, $rank_m, rankQC\rangle$, where $type \in$ \{{\textsc{generic}}, {\textsc{vote}}, {\textsc{new-view}}\}
%\footnote{In basic HotStuff, $type \in$ \{{\textsc{prepare}}, {\textsc{precommit}},{\textsc{commit}},{\textsc{decide}},{\textsc{vote}},{\textsc{new-view}}\}.}
, $node$ contains a proposed node (the leaf node of a proposed branch), $parentQC$ is used to carry the $QC$ for the parent node, $rank_m$ and $rankQC$ are the highest rank known by the message sender and the QC for it, respectively. $\langle msg \rangle_\sigma$ is the signature of message $msg$. 
We use $genmsg$, $votemsg$, and $nvmsg$ as shorthand for the generic messages, vote messages, and new-view messages, respectively.
%We use $msg+$ as a shorthand for a message piggybacked with additional information.
%The detailed algorithm is shown in the Algorithm~\ref{ChainedHotStuff}. For a detailed description, see~\cite{hotstuff}.

\begin{algorithm}
\caption{The \sysname-HotStuff Algorithm for Instance $i$
at View $v$ and Epoch $e$} 
\label{ChainedHotStuff}
\begin{algorithmic}[1]
%\State $v \leftarrow 0$ \textcolor{purple}{//$v$ is the view number}
%\State $i \leftarrow ins.index$ 

\renewcommand{\algorithmicrequire}{\hspace{1.8em}\textcolor{purple}{\Comment{as a leader}}}
\Require

\State \textbf{upon} receive $2f+1$ $votemsg$ \textbf{do} 
%\State \hspace{1.0em} \textbf{if} {\textsc{verify}}$(votemsg)$
\State \hspace{1.0em} $voteSet \leftarrow$ $2f+1$ $votemsg$
\State \hspace{1.0em} $QC \leftarrow$ {\textsc{generateQC}}$(voteSet)$
\State \hspace{1.0em} $txs \leftarrow$ {\textsc{cutBatch}}$(ins.bucketSet)$
\State \hspace{1.0em}$proposal \leftarrow$ {\textsc{createLeaf}}$(QC.node, txs)$
\State \hspace{1.0em}$rank \leftarrow min\{curRank.rank +1, maxRank(e)\}$
%\State  \hspace{1.0em} \textcolor{purple}{//{\textsc{prepare}} phase (leader-half)}
\State \hspace{1.0em}$genmsg \leftarrow \langle\langle${\textsc{generic}}, $v$,  $i$, $rank$,  $proposal$, $QC\rangle_{\sigma}$, $curRank.rank$, $curRank.QC$ $\rangle$
\State \hspace{1.0em} broadcast $\langle genmsg, voteSet \rangle$ 
\State \hspace{1.0em} \textbf{if} {$rank = maxRank(e)$}
\State \hspace{2.0em} stop propose
\State \hspace{1.0em} \textbf{end if}

\State
\renewcommand{\algorithmicrequire}{\hspace{1.8em}\textcolor{purple}{\Comment{as a replica}}}
\Require
\State \textbf{upon} receive $\langle genmsg, voteSet \rangle$ from leader \textbf{do}
\State \hspace{0em} \textbf{if} {\textsc{verify}}$(\langle genmsg, voteSet \rangle)$
\State \hspace{1.0em} \textbf{if} $genmsg.rank_m > curRank.rank$
\State \hspace{2.0em} $curRank.rank \leftarrow genmsg.rank_m$
\State \hspace{2.0em} $curRank.QC \leftarrow genmsg.rankQC$
\State \hspace{1.0em} \textbf{end if}
\State \hspace{1.0em} $B^* \leftarrow genmsg.node$ 
\State \hspace{1.0em} $B^{\prime \prime} \leftarrow B^*.QC.node$ 
\State \hspace{1.0em} $B^{\prime} \leftarrow B^{\prime \prime}.QC.node$ 
\State \hspace{1.0em} $B \leftarrow B^{\prime}.QC.node$
\State \hspace{0em} \textbf{end if}
\State \hspace{0em} \textbf{if} {\textsc{checkNode}} 
\State \hspace{0.8em} $votemsg$ $\leftarrow$ $\langle \langle genmsg\rangle_{\sigma}$, $curRank.rank,  curRank.QC\rangle$
\State \hspace{0.8em} send $votemsg$ to leader %of next view
\State \hspace{0em} \textbf{end if}
\renewcommand{\algorithmicrequire}{\hspace{1.8em}\textcolor{purple}{\Comment{start {\textsc{precommit}} phase on $B^*$'s parent}}}
\Require
%\State \hspace{0em}\textcolor{purple}{// start {\textsc{precommit}} phase on $B^*$'s parent}
\State \hspace{0em} \textbf{if} $B^*.parent = B^{\prime \prime} $ 
\State \hspace{1.0em} $genericQC \leftarrow B^*.QC$
\State \hspace{0em} \textbf{end if}
\renewcommand{\algorithmicrequire}{\hspace{1.8em}\textcolor{purple}{\Comment{start {\textsc{commit}} phase on $B^*$'s parent}}}
\Require
%\State \hspace{0em}\textcolor{purple}{// start {\textsc{commit}} phase on $b^*$'s grandparent}
\State \hspace{0em} \textbf{if} $(B^*.parent = B^{\prime \prime})$ $\wedge$ $ (B^{\prime \prime}.parent=B^{\prime})$ 
\State \hspace{1em} $lockedQC \leftarrow B^{\prime \prime}.QC$
\State \hspace{0em} \textbf{end if}
\renewcommand{\algorithmicrequire}{\hspace{1.8em}\textcolor{purple}{\Comment{start {\textsc{decide}} phase on $B^*$'s parent}}}
\Require
%\State \hspace{0em} \textcolor{purple}{//start {\textsc{decide}} phase on $B^*$'s great-grandparent}
\State \hspace{0em} \textbf{if} $(B^*.parent = B^{\prime \prime})$ $\wedge$ $ (B^{\prime \prime}.parent=B^{\prime})$ $\wedge$ $B^{\prime}.parent$$=$$B$ 
\State \hspace{1.0em} $\mathcal{G}_{in} \leftarrow \mathcal{G}_{in} \cup  B$ \textcolor{purple}{{//Commit $B$}}

\State \hspace{0em} \textbf{end if}
\State
%\State // {\textsc{pre-commit}} phase (leader-half)
\renewcommand{\algorithmicrequire}{\hspace{1.8em}\textcolor{purple}{\Comment{as a leader}}}
\Require
\State \textbf{upon} receive $votemsg$ \textbf{do}
\State \hspace{0.5em} \textbf{if} {\textsc{verify}}$(votemsg) \wedge votemsg.rank_m > curRank.rank$
\State \hspace{2.0em} $curRank.rank \leftarrow votemsg.rank_m$
\State \hspace{2.0em} $curRank.QC \leftarrow votemsg.rankQC$
\State \hspace{0.5em} \textbf{end if} 
\end{algorithmic}
\end{algorithm}

\subsection{Algorithm Description}
As shown in Algorithm~\ref{ChainedHotStuff}, the Chained HotStuff consensus protocol simplifies the algorithm into a two-phase process: proposal and voting. 

\bheading{Proposal phase.} Upon receiving $2f+1$ $votemsg$ from the previous view (Line 1), a leader generates the QC for the last node using {\textsc{generateQC}} (Line 3) and proposes a new node extending the last node (Line 5). 
The leader sets the $rank$ of the new node as the highest rank it knows plus one. If the  $rank$ is greater than the $maxRank(e)$ of the current epoch, it is set to $maxRank(e)$ (Line 6). Then, the leader generates a $genmsg$ and broadcasts it to all replicas (Lines 7-18). If the $rank$ of the new proposal is equal to $maxRank(e)$, the leader stops proposing (Lines 9-10)\footnote{To ensure that all blocks can be delivered, we extend each partial log with 3 dummy blocks which are not added to the global log}.

\bheading{Voting phase.} Upon receiving the $genmsg$ from the leader, a replica validates it and updates its $currentRank$ if the $rank_m$ is greater than the highest rank it knows (Lines 15-17).
The replica then votes for it if it extends the highest node in the replica's view by sending a $votemsg$ to the leader (Lines 25-26). Finally, the replica will check the commit rule to decide whether there is a node that could be committed.  

\bheading{Commit rule.}
Chained HotStuff introduces a specific commit rule based on its chain structure: a node is committed when its 3-chain successor has received votes from a quorum of replicas. In the context of the Chained HotStuff commit rule, a 3-chain successor to a node refers to the node that is three positions ahead in the chain. For instance, if you have a chain of nodes $A \rightarrow B \rightarrow C \rightarrow D \rightarrow E$, node B is the 3-chain predecessor of node E.
For example, if a node B in the chain has a 3-chain successor E and E receives a quorum of votes, then B is committed.

\bheading{View change.} 
Similar to PBFT, HotStuff instantiates a view change when the current leader fails to propose a node within a certain period (this duration is typically predefined), or if the leader is suspected of being faulty. Replicas use a new-view message to carry the highest $genericQC$ and send it to the leader. 
%Furthermore, There is another way to initiate a view change in chained HotStuff, which is known as the leader rotation mechanism. This strategy helps to facilitate smoother and more efficient transitions between views. The leader of the next view begins collecting votes for the proposed node from the current view. Once the leader of the next view collects a quorum of votes for the same node, it can initiate a view change and start its view by proposing a new node. This new node extends from the node that received the quorum of votes. 

% This method has several benefits. First, it allows a smoother transition between views because the next leader can start its duties as soon as it collects a quorum of votes. This helps to reduce the downtime between views and improve the efficiency of the consensus process. Second, because the next leader is known in advance and participates in the process earlier, it can detect any potential faults with the current leader more quickly and take over if necessary. This adds an additional layer of fault tolerance to the system. Lastly, this method helps to further simplify the view change process, one of the main goals of the Chained HotStuff protocol. Instead of waiting for all previous views to conclude, Chained HotStuff can directly switch to a new view. The new leader continues from the highest node it sees from the quorum messages, which ensures the safety of the protocol. 

%\section{Supplementary Evaluation Results}\label{appen:evaluation}

\subsection{Performance Evaluation}
\begin{figure}[t]
	\centering
    \hfill
    \begin{subfigure}[t]{0.23\textwidth}
        \centering
        \includegraphics[width=\textwidth]{ 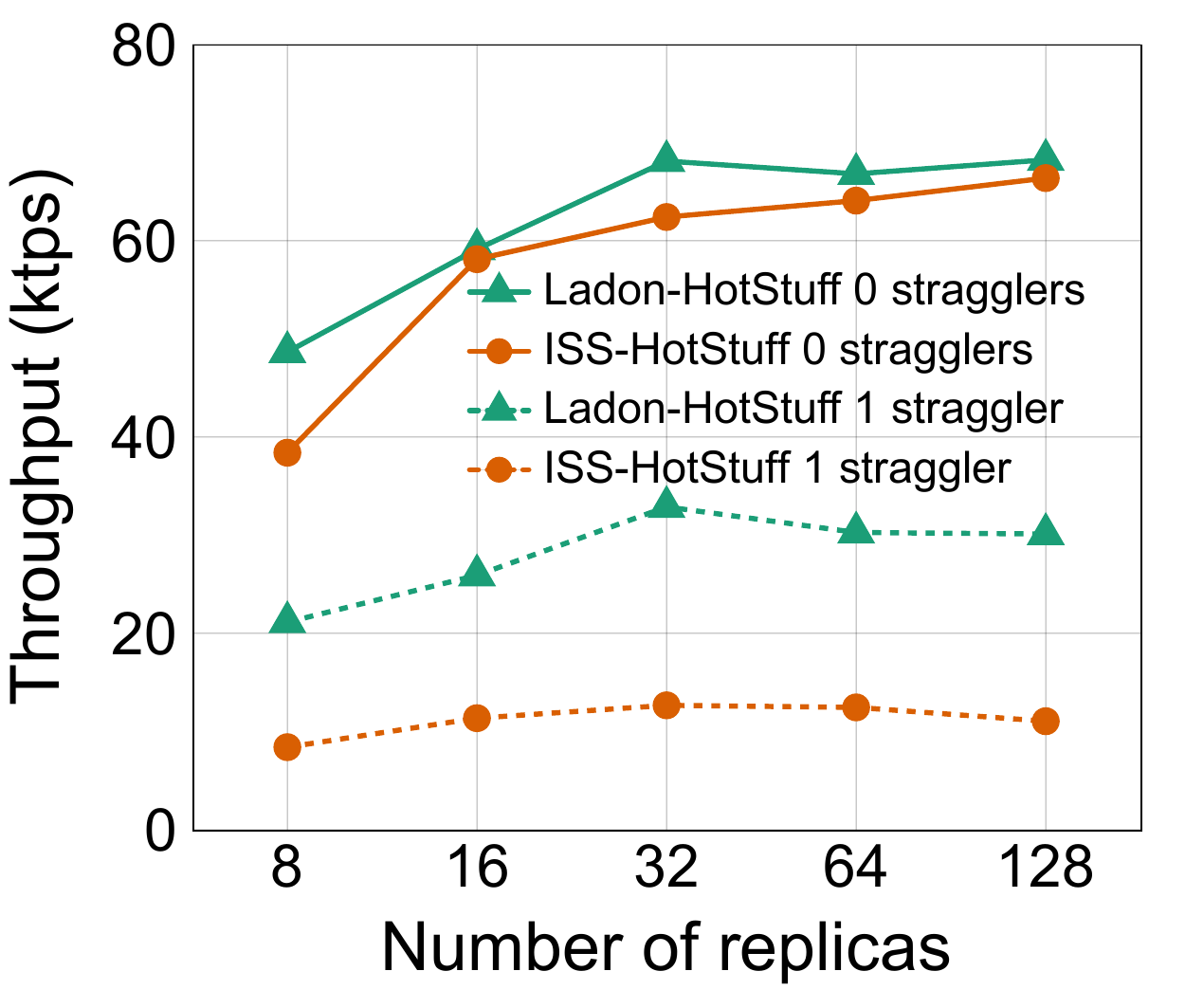}
        \caption{Throughput}
        \label{fig:throughputhotstuff}
    \end{subfigure}
    \hfill
    \begin{subfigure}[t]{0.23\textwidth}
        \centering
        \includegraphics[width=\textwidth]{ 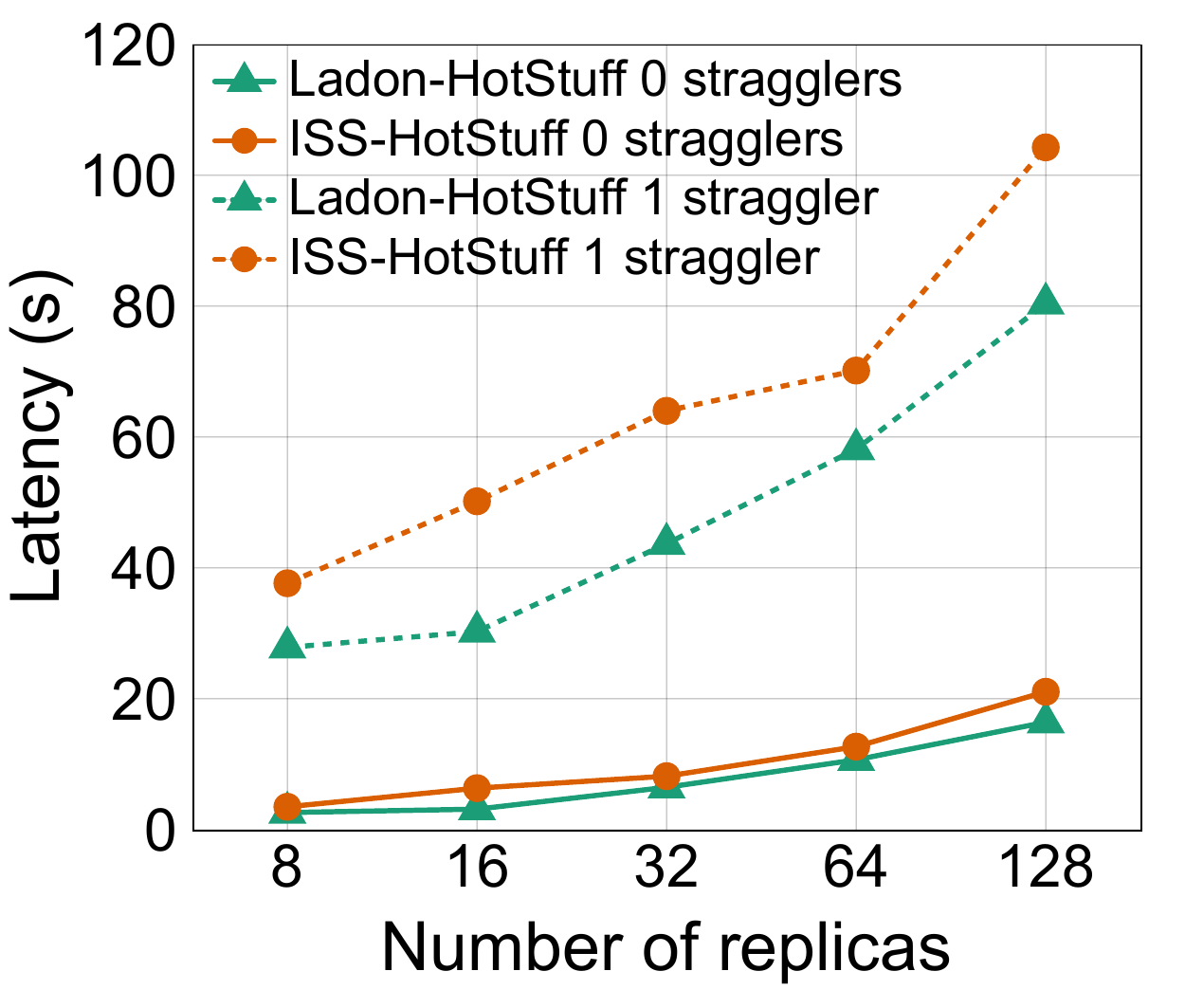}
        \caption{Latency}
        \label{fig:latencyhotstuff}
    \end{subfigure}
    \caption{\textbf{Throughput and latency of \sysname-HotStuff and ISS-HotStuff with a varying number of replicas.}}
	\label{fig:hotstuff}
\end{figure}

\figref{fig:hotstuff} shows the throughput and latency of \sysname-HotStuff and ISS-HotStuff with one honest straggler and without stragglers. The total block rate is set as 16 $blocks/s$. \figref{fig:throughputhotstuff} shows the throughput of \sysname-HotStuff without stragglers is comparable with that of ISS-HotStuff, while the throughput of  \sysname-HotStuff with one straggler is $2.7\times$ compared to that of ISS-HotStuff on 128 replicas. 
In \figref{fig:latencyhotstuff}, as the number of replicas scales up, the latency of both protocols increases. This is because we limit the total block rate of the whole system. The latency of \sysname-HotStuff without stragglers is comparable with that of ISS-HotStuff, while the latency of  \sysname-HotStuff with one straggler is 22.9\% lower than that of ISS-HotStuff on 128 replicas. 

%\textcolor{red}{explain the difference with PBFT}
We note that the performance of \sysname-HotStuff is more affected by stragglers compared to \sysname-PBFT. This is due to the use of chained HotStuff, where the commitment of a block is pipelined with the subsequent blocks. Consequently, blocks in a slow instance are committed much more slowly than those in PBFT, leading to reduced throughput and increased latency.

\end{document}